\documentclass[preprint,12pt,authoryear]{elsarticle}

\usepackage{amsfonts}
\usepackage{amsmath}
\usepackage{bm}
\usepackage{mathtools}
\usepackage{xspace}
\usepackage{makecell}
\usepackage[reftex]{theoremref}
\usepackage[ruled,lined,linesnumbered,longend]{algorithm2e}
\usepackage{url}
\usepackage{hyperref}

\newtheorem{theorem}{Theorem}[section]
\newtheorem{lemma}[theorem]{Lemma}
\newtheorem{corollary}[theorem]{Corollary}

\newtheorem{proposition}[theorem]{Proposition}
\newtheorem{definition}[theorem]{Definition}

\newtheorem{example}[theorem]{Example}
\newtheorem{proof}[theorem]{Proof}

\newenvironment{example1}
  {\begin{example}\rm}
  {\end{example}}

\journal{}

\newcommand{\wgm}{Shang et. al's general method\xspace}

\newcommand{\ignore}[1]{}

\newcommand{\overbar}[1]{\mkern 1.5mu\overline{\mkern-1.5mu #1 \mkern-1.5mu}\mkern 1.5mu}

\newcommand{\rX}{\mathbb{R}[X]}

\newcommand{\rXY}{\mathbb{R}[\overbar{X}]}
\newcommand{\qX}{\mathbb{Q}[X]}

\newcommand{\aX}{\mathbb{A}[X]}
\newcommand{\aXX}{\mathbb{A}[X_1, \dots, X_n]}
\newcommand{\aXY}{\mathbb{A}[\overbar{X}]}

\DeclarePairedDelimiter{\ceil}{\lceil}{\rceil}

\DeclarePairedDelimiter{\set}{\lbrace}{\rbrace}

\DeclareMathOperator{\quadmod}{QM}
\DeclareMathOperator{\preord}{PO}
\DeclareMathOperator{\semialgebraic}{\mathcal{S}}

\DeclareMathOperator{\ord}{ord}

\DeclareMathOperator{\compact}{compact}

\DeclareMathOperator{\algFindEps}{\sf FindEps}
\DeclareMathOperator{\algFindK}{\sf FindK}
\DeclareMathOperator{\algBasicLemma}{\sf BasicLemma}
\DeclareMathOperator{\algSOSBasicLemma}{\sf SOSBasicLemma}

\DeclareMathOperator{\algRemoveStrictPosBet}{\sf RemoveStrictPosBetween}
\DeclareMathOperator{\algRemoveStrictPosLeft}{\sf RemoveStrictPosLeft}

\DeclareMathOperator{\algPutinar}{\sf Certificate}

\DeclareMathOperator{\algSaturatedCert}{\sf SaturatedCert}

\DeclareMathOperator*{\argmax}{arg\,max}

\DeclareMathOperator{\Nat}{Nat}
\DeclareMathOperator{\PO}{PO}
\DeclareMathOperator{\QM}{QM}
\DeclareMathOperator{\Pos}{Pos}

\newcommand{\maple}{\texttt{Maple}\xspace}

\newcommand{\lpq}{\texttt{liftPO2QM}\xspace}
\newcommand{\rc}{\texttt{RealCertify}\xspace}
\newcommand{\csq}{\texttt{CertSatQM}\xspace}

\newtheorem{lemmaalt}{Lemma}

\newenvironment{lemmar}[1]{
  \renewcommand\thelemmaalt{#1}
  \lemmaalt
}{\endlemmaalt}

\SetKwInput{kwRequires}{Requires}
\SetKwInput{kwEnsures}{Ensures}
\SetKwInput{kwInput}{Input}
\SetKwInput{kwOutput}{Output}
\SetKw{kwGoto}{Go to}
\SetKw{kwBreak}{break}
\SetKw{kwContinue}{continue}



\begin{document}

\begin{frontmatter}



\title{Computing Certificates in Archimedean Univariate Saturated Quadratic Modules} 


\author[1]{Jose Abel Castellanos Joo\corref{cor1}}
\ead{jabelcastellanosjoo@unm.edu}

\author[1]{Deepak Kapur}
\ead{kapur@cs.unm.edu}

\cortext[cor1]{Corresponding author}

\affiliation[1]{organization={Computer Science Department, University of New Mexico},
    addressline={1155 University Blvd SE},
    city={Albuquerque},
    country={United States of America}}

\date{}

\begin{abstract}

  A new symbolic algorithm to compute sums of squares multipliers (certificates)
  to witness
  the membership of non-negative univariate polynomials in a saturated
  univariate quadratic module is presented. Certificates are first computed in
  terms of natural generators introduced by Kuhlmann and Marshall for an
  Archimedean saturated quadratic module; natural generators can be easily
  read-off from a semialgebraic set. In the univariate case, an Archimedean
  quadratic module is also a preordering since it is closed under multiplication;
  certificates have different representations when a polynomial is viewed as a
  member in a quadratic module versus in a preordering
  An algorithm is given to compute certificates of natural generators in terms
  of the original generators; it uses a construction introduced by Kuhlmann,
  Marshall, and Schwartz known as the ``Basic Lemma'', which splits the
  non-negative factors of generators.
  To compute a quadratic module certificate, certificates of products of natural
  generators are computed using a detailed case analysis based on the types of
  natural generators.

  An implementation of the algorithms proposed in \maple is also
  discussed. The certificates obtained using this implementation are compared
  with those generated by \rc. We discuss examples where \rc is unable to find
  certificates while the proposed method is successful.
\end{abstract}

\begin{keyword}
Saturated quadratic modules \sep Natural generators \sep Algebraic Certificates


\end{keyword}

\end{frontmatter}



\section{Introduction}

Ever since Hilbert posed the 17th problem of Positivstellens\"atz, which was
settled by Artin \cite{Artin}, there has been considerable interest in finding
algorithmic implementable solutions for special cases. Noteworthy are the
results by Schm\"udgen for preorderings \cite{Schmdgen1991TheKmomentPF}, Putinar
for quadratic modules \cite{10.2307/24897130}, developing representations of
positive and non-negative polynomials on a semialgebraic set defined by a
finite set of generators. The complexity results in
\cite{10.1016/j.jco.2006.07.002,SCHWEIGHOFER2002307} gave upper bounds on the
degree of multipliers that appear in the representations of a polynomial in
terms of the original generators. However, there has been limited progress in
developing implementable algorithmic solutions for computing certificates
exhibiting membership of a polynomial in a quadratic module; a noteworthy
exception is \rc, a package in \maple. This paper addresses this problem for the
special case of generating membership certificates of a univariate polynomial in
an Archimedean saturated quadratic module generated by finite generators of
univariate polynomials. In particular, an algorithm is given to compute
certificates, which has been implemented as a \maple package and compared to
\rc.

The importance of certificates that exhibit the details of a result produced by
a computer algebra system has long been emphasized, especially by Kahan,
Fateman, \cite{Fateman2000ImprovingEI} and recently Davenport
\cite{10.1007/978-3-031-42753-4_17}, among others. Certificate computation has
received particular attention in the SMT (Satisfiability Modulo Theories)
\cite{BdMF15, 10.1007/978-3-031-10769-6_3}, theorem proving
\cite{gregoire2011proof}, and more recently, in deep learning networks
\cite{katz2017reluplex, singh2018fast, singh2019abstract} due to the ability of
such software to solve very large problems in numerous applications in
engineering, software, hybrid systems, and other safety-critical systems. The
dependence on reliable results produced by such software in life-critical
applications further emphasizes the need for certificates that show the
correctness of the results.

Given a finite set $G =\set{g_1, \dots, g_s}$ of polynomials in a polynomial
ring $\aXX$ (also abbreviated as $\aXY$), where
$\mathbb{A} \subseteq \mathbb{R}$ is the set of real algebraic numbers. Let
$\semialgebraic(G)$ denote the (basic) semialgebraic set defined by $G$, i.e.,
$\semialgebraic(G) := \set*{\bm{x} \in \mathbb{R}^n \mid g(\bm{x}) \geq 0, g \in
  G}$; given $S \subseteq \mathbb{R}$, let $\Pos(S)$ (resp. $\Pos^{+}(S)$)
denote the set of non-negative polynomials (resp. strictly positive) over $S$,
i.e., $\Pos(S) := \set{f \in \rXY \mid f(\bm{x}) \geq 0, \forall \bm{x} \in S}$;
finally, let $\sum{\aXY}^2$ denote
$\set*{\sum_{i=1}^{n}{f_i^{2}} \mid f_i \in \aXY}$.

The Schm\"udgen's Positivstellens\"atz theorem \cite{Schmdgen1991TheKmomentPF}
states that for any strictly positive polynomial $f$ over a compact
$\semialgebraic(G)$, $f \in \preord(G)$, where $\preord(G)$ is the set of all
sums \\
$\set*{\sum_{}^{}{\sigma_e \cdot g_1^{e_1} \cdots g_s^{e_s} \mid \sigma_e \in
    \sum{\rXY}^2}}$, $e = (e_1, \dots, e_s) \in \set{0, 1}^{s}$ and each
$\sigma_e$ is a sum of squares.  Putinar further simplified Schm\"udgen's
representation showing that for any strictly positive polynomial $f$ over a
compact $\semialgebraic(G)$, if $\quadmod(G)$ is Archimedean then
$f \in \quadmod(G)$, where the quadratic module $\quadmod(G)$ generated by $G$
is the set of all sums
$\set*{\sigma_0 + \sum_{i=1}^{s}{\sigma_i \cdot g_i \mid \sigma_i \in
    \sum{\rXY}^2}}$.  For a quadratic module to be \textbf{Archimedean}, it must
contain a polynomial of the form $N - \sum_{i=1}^{n}{X_i^{2}}$ where
$N \in \mathbb{N}^{+}$. The sums of squares multipliers $\sigma_e$'s in the
above sets are called the \textbf{certificates}\footnote{We abuse the
  terminology somewhat and interchange to call a tuple of $\sigma_e$'s,
  certificates or certificate in the paper.}  that witness the membership of
polynomials in these structures.

By definition, preorderings are closed under multiplication, whereas quadratic
modules are generally not. Several necessary conditions for quadratic modules to
satisfy a preordering structure were studied in \cite{ScheidererDistinguished2005}.
Marshall and his collaborators \cite{marshall2002,marshall2005} studied several
variations of Schm\"udgen's theorem. In particular, the authors found the
necessary conditions for these results to hold in quadratic modules, as well as
established equivalences between these conditions.

In case an Archimedean quadratic module is a preordering also, which is the case of
univariate structures, the certificate problem for a given polynomial in such a
structure has a subtle distinction, depending upon whether the certificate is
being sought using its representation in a quadratic module or a preordering.

A algorithm for computing certificates of a polynomial in the univariate case
for an Archimedean saturated quadratic module is given. The algorithm uses {\bf
  natural (choice)} generators introduced by Kuhlmann and Marshall
\cite{marshall2002,marshall2005, marshallbooksos} as a way to describe
generators of a \textbf{saturated} quadratic module to include all 
non-negative polynomials in $\rX$ over a given semialgebraic set
$\semialgebraic(G)$. We emphasize that a key novelty of the approach handles
non-negative polynomials in contrast to other approaches \cite{MAGRON2021221,
  Baldi2023DegreeBF, Baldi_2022, SCHWEIGHOFER2002307, 10.1007/s10957-012-0261-9}
that cover the case for strictly positive polynomials derived by Putinar's
Positivstellens\"atz \cite{10.2307/24897130}. Other lines of research cover
similar cases and consider different assumptions to our problem in
consideration; in \cite{10.1145/3476446.3535480}, the authors find weighted sums
of Hermitian squares decomposition of nonnegative trigonometric polynomials with
Gaussian coefficients over the unit circle using complex root isolation and
semidefinite programming techniques.  In \cite{baldi2024effective}, the authors
address the problem of computing certificates of nonnegative polynomials over
finite semialgebraic sets in the context of multivariate polynomials; however, it 
requires additional assumptions. In particular, the ideal $I$ associated with
the equality constraints is zero dimensional and that the ideals $(f)$ and
$(I : f)$ are coprime where $f$ is the input polynomial.

To compute a certificate in terms of the natural generators of
$\semialgebraic(G)$, a construction in Augustine's
thesis\cite{Augustin2008TheMP} is adapted, which gives a certificate using the
preordering representation; to get a certificate in the quadratic module
representation, certificates of products of natural generators must be
constructed; this is shown in Section \ref{cert_prods}.  Section
\ref{repr_nat_gens} describes a method for computing certificates of natural
generators in terms of the original set of generators.

The operations in the presented algorithms are effective, as it only manipulates
real algebraic numbers which are computable \cite{L70c,
  10.5555/1197095}. Finally, an algorithm is given to compute the certificates
of each element in the split generators in terms of the original
generators. Combining these results, an algorithm is given for computing a
certificate of a given polynomial in terms of the original generators.

The presented construction uses a lemma from \cite[Theorem 3.2, page
590]{marshall2005} to split factors of members and a theorem from
\cite[Proposition 4.8, page 1058]{scheiderer2000} to find sums of squares
multipliers that avoid products between generators; We describe algorithms to
perform constructively these results in Appendix
\ref{basiclemma-appendix}. Section \ref{experiments} discusses an implementation
in \maple and compares several examples with the \rc package, a \maple package
that computes certificates in Archimedean quadratic modules.

 \section{Preliminaries}

This paper deals with univariate structures, so all polynomials henceforth are
univariate. Assume that the semialgebraic set of $G := \set{g_1, \dots, g_s}$ is
\footnote{Since $G \subseteq \qX$, the end points of $\semialgebraic(G)$ are
  real algebraic numbers.}  $\semialgebraic(G) = \bigcup_{i=0}^{k}{[a_i, b_i]}$
with $a_j \leq b_j, b_{j-1} < a_j$ for $j = 1, \dots, k$. 

\subsection{Constructive results about quadratic modules}

In \cite{marshall2005}, the authors provide constructive proofs of the
following results. We use them in our constructions, so we adapt some
terminology to fit our context in quadratic modules in the univariate case:

\begin{theorem} \cite[Basic Lemma (Lemma 2.1)]{marshall2005}
  \th\label{sec:results-} Given $f, g \in \Pos(\semialgebraic(G))$ being
  relatively prime, there exist strictly positive polynomials $\sigma, \tau$
  over $\semialgebraic(G)$ such that $1 = \sigma f + \tau g$.
\end{theorem}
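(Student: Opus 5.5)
Throughout, $K := \semialgebraic(G) = \bigcup_{i=0}^{k}[a_i,b_i]$ is compact, and $f,g$ are relatively prime and non-negative on $K$.

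The plan is to start from a B\'ezout identity in $\rX$ and then correct its coefficients so that they become strictly positive on $K$. Since $\gcd(f,g)=1$, the extended Euclidean algorithm gives $p,q\in\rX$ with $pf+qg=1$. For any $h\in\rX$ we also have
\begin{equation*}
(p+hg)f+(q-hf)g=1 .
\end{equation*}
So the task reduces to choosing $h$ with $\sigma:=p+hg>0$ and $\tau:=q-hf>0$ on $K$.

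First I would record what the identity forces pointwise on $K$. Because $pf+qg=1$, $f$ and $g$ have no common zero on $K$. At a zero $x$ of $f$ in $K$ we get $q(x)g(x)=1$ with $g(x)>0$, hence $q(x)>0$. Symmetrically, $p>0$ at the zeros of $g$ in $K$. At a point where both $f,g>0$, the identity gives $\max(p,q)>0$ there, but either coefficient may still be negative. The whole difficulty is handling these mixed points.

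The key step is to choose $h$ via a continuous target and polynomial approximation. On $K$, set
\begin{equation*}
\sigma_0 := \frac{1}{2f}, \quad \tau_0 := \frac{1}{2g}
\end{equation*}
near the respective nonzero regions. More uniformly, use $\sigma_0 := \lambda/(f+g)$ and $\tau_0 := \lambda/(f+g)$, modified near the zeros. Cleaner still is to look for a continuous $h_0$ on $K$ with $p+h_0g>0$ and $q-h_0f>0$, i.e.
\begin{equation*}
\frac{-p}{g}<h_0<\frac{q}{f}
\end{equation*}
wherever $f,g>0$. This interval is nonempty, since $q/f+p/g=1/(fg)>0$. Near a zero of $f$ the upper bound tends to $+\infty$, because $q>0$ there. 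Near a zero of $g$ the lower bound tends to $-\infty$, because $p>0$ there. So the lower envelope $L=-p/g$ and the upper envelope $U=q/f$ are extended-real functions that are respectively upper and lower semicontinuous with $L<U$. By compactness one can therefore insert a continuous $h_0$ with a uniform margin.

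Controlling that margin is where I expect the main obstacle. Near the zeros of $f$, the condition $q-h_0f>0$ holds automatically for any bounded $h_0$, since $q>0$ on a neighbourhood of those zeros; likewise for $g$. Taking $\varepsilon$-neighbourhoods $V_f,V_g$ of these zero sets, disjoint from each other, the quantities $p+h_0g$ and $q-h_0f$ are bounded below by some $\delta>0$ on $K$. Then approximate $h_0$ uniformly on $K$ by a polynomial $h$ (Weierstrass), with error at most $\delta/(2\max_K(|f|+|g|))$. This keeps both $\sigma,\tau\ge \delta/2>0$ on $K$.

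For the algorithmic version in Appendix~\ref{basiclemma-appendix}, I would make $h_0$ explicit by taking it piecewise constant on a partition of $[a_0,b_k]$ separating zeros of $f$ from zeros of $g$. I would replace the Weierstrass step by a concrete approximant, such as a Bernstein polynomial or a suitably scaled polynomial step function, with an effective bound $\delta$ computed from real algebraic data.
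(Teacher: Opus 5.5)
Your argument is correct. It works in the same affine family $(p+hg,\,q-hf)$ as the paper, but you choose $h$ differently.

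\textbf{The paper's route.} It uses $\algBasicLemma$ in the appendix, after Kuhlmann--Marshall--Schwartz, and builds the correction explicitly:
\begin{itemize}
\item It first passes to $s_1=s+Ng$. This is positive on $\semialgebraic(G)$ because $g>0$ wherever $s\le 0$ (there $tg=1-sf\ge 1$).
\item It then subtracts $rg$ with $r=s_1\delta f\sum_{i=0}^{k-1}(1-\delta fg)^i$, so that $\sigma=s_1(1-\delta fg)^k$.
\item It obtains $\tau=t_1+rf>0$ by splitting $\semialgebraic(G)$ into the parts where $g\le\epsilon$ and $g\ge\epsilon$, and taking $k$ large.
\end{itemize}
Each of $N,\delta,\epsilon,k$ comes from a maximization or an emptiness test on a semialgebraic set, so the construction is implementable over $\mathbb{A}$ as it stands.

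\textbf{Your route.} Continuous selection plus Weierstrass approximation is shorter and conceptually cleaner, but it is only existential. Making it effective needs an extension of $h_0$ across the gaps of $\semialgebraic(G)$, a modulus of continuity, and an explicit approximation degree.

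\textbf{A simplification you missed.} Your aside $\sigma_0=\tau_0=1/(f+g)$ is realized exactly by $h_0:=(q-p)/(f+g)$. This is continuous on $\semialgebraic(G)$ because $f+g>0$ there, and it gives $p+h_0g=q-h_0f=1/(f+g)$. So no semicontinuity or insertion argument is needed. Any polynomial $h$ with $\max_{\semialgebraic(G)}|h-h_0|<(\max_{\semialgebraic(G)}(f+g))^{-2}$ then works.

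For any continuous $h_0$ the uniform margin is automatic anyway, by compactness and pointwise positivity. Your $V_f,V_g$ paragraph can be dropped; its phrase ``for any bounded $h_0$'' is not accurate as stated.

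\textbf{The algorithmic sketch fails as written.} Take $f=X$, $g=1-X$ on $[0,1]$ with the B\'ezout pair $p=1+(2X-1)(1-X)$, $q=1-(2X-1)X$. A constant $h\equiv c$ would need both $\sigma(0)=c>0$ and $\tau(1)=-c>0$, which is impossible. So $h_0$ cannot be constant on each component. If instead $h_0$ jumps inside a component of $\semialgebraic(G)$, uniform polynomial approximation no longer applies. The explicit $h_0$ above, extended linearly across the gaps, is the natural replacement.
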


This theorem is proved constructively by explicitly computing both $\sigma$ and
$\tau$. As an observation, by Schm\"udgen's Positivstellens\"atz, both $\sigma$
and $\tau$ belong to the preordering $\PO(G)$. Under additional conditions, $\tau$
and $\sigma$ become sums of squares as the following theorem says:

\begin{theorem} \cite[Corollary 2.3 (i)]{marshall2005}
  \th\label{sec:results--1}
  Given  $f, g \in \Pos(\semialgebraic(f
  g))$ being relatively prime, where $\semialgebraic(f g)$ is compact,   
  there exists $\sigma, \tau \in \sum{\rX}^2$ such that
  $1 = \sigma f + \tau g$. 
\end{theorem}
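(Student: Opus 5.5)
We derive the statement from the Basic Lemma (\ref{sec:results-}) applied with the semialgebraic set $K := \semialgebraic(fg)$ in place of $\semialgebraic(G)$, and then upgrade the two multipliers to sums of squares. Since $f,g \in \Pos(K)$ and they are relatively prime, the Basic Lemma gives $\sigma_0, \tau_0$, strictly positive on $K$, with $1 = \sigma_0 f + \tau_0 g$. The remaining task is to replace $\sigma_0,\tau_0$ by sums of squares while keeping the identity.

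First we note that $K$ is the set where $fg\ge 0$, and it is compact by hypothesis. In the univariate case a compact $K=\semialgebraic(h)$ with $h=fg$ forces $\deg h$ even with a negative leading coefficient. Hence $\QM(h)=\PO(h)$ is Archimedean: $N - X^2 \in \QM(h)$ for large $N$. This is the standard univariate fact that a single generator with compact set and non-positive leading behaviour gives a saturated, hence Archimedean, module; we would verify it via the natural generators of $K$. Putinar's theorem then gives every polynomial strictly positive on $K$ in the form $s_0 + s_1 fg$ with $s_0, s_1 \in \sum\rX^2$.

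Applying this to $\sigma_0$ and $\tau_0$ gives $\sigma_0 = a_0 + a_1 fg$ and $\tau_0 = b_0 + b_1 fg$. Substituting,
\[
1 = a_0 f + a_1 f^2 g + b_0 g + b_1 f g^2 = (a_0 + b_1 g^2) f + (b_0 + a_1 f^2) g .
\]
Here $\sigma := a_0 + b_1 g^2$ and $\tau := b_0 + a_1 f^2$ are sums of squares, because $b_1 g^2$ and $a_1 f^2$ are products of sums of squares with squares. This regrouping is the key trick, and it gives the claim.

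The main obstacle is justifying that $\QM(fg)$ is Archimedean, so that Putinar applies, or alternatively obtaining the representation of strictly positive polynomials on $K$ directly. If the Archimedean argument is delicate, the fallback is Schm\"udgen for the preordering $\PO(f,g)$, whose set contains $\semialgebraic(f,g)$. The fallback is only usable if that preordering is shown to be Archimedean via $fg$ and $K$ compact. In that case the representation is $\sigma_0 = s_0 + s_1 f + s_2 g + s_3 fg$, and the terms $s_1 f$ and $s_2 g$ must be absorbed by a similar regrouping. Terms like $s_1 f \cdot f = s_1 f^2$ are squares times SOS, which is fine, but cross terms $s_2 g f$ are not. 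For this reason we prefer the single-generator route via $fg$, where every cross term carries a square.
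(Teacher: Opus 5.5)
Your argument is the same as the paper's Algorithm $\algSOSBasicLemma$, which is also Marshall's original proof. The algorithm runs $\algBasicLemma$ with $G=\{fg\}$ and writes the multipliers as $\sigma_{1,0}+\sigma_{1,1}\,fg$ and $\tau_{1,0}+\tau_{1,1}\,fg$. It then returns exactly your regrouping $\sigma=\sigma_{1,0}+\tau_{1,1}g^2$, $\tau=\tau_{1,0}+\sigma_{1,1}f^2$. The one difference is that the paper gets the representations in $\QM(fg)$ from the algorithm $\algPutinar$, which only needs $\semialgebraic(fg)$ bounded. So the Archimedean question you single out never arises there.

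Your justification of that step is wrong as written. $\QM(fg)$ need not be saturated, so checking it through the natural generators of $K$ would fail. Take $f=(1-X)^3$ and $g=1+X$: they are coprime and non-negative on $\semialgebraic(fg)=[-1,1]$, but $1-X\notin\QM(fg)$. Indeed, $1-X=s_0+s_1(1-X)^3(1+X)$ forces $s_0(1)=0$, hence $(X-1)^2\mid s_0$, hence $(X-1)^2\mid 1-X$, a contradiction.

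What you actually need, that $\QM(fg)$ is Archimedean, is still true and elementary. If $fg$ is a negative constant, then $-1\in\QM(fg)$ and $\QM(fg)=\rX$. Otherwise $\deg(fg)\ge 2$ is even with negative leading coefficient. Choose a constant $\lambda>0$ such that $-X^2-\lambda fg$ has positive leading coefficient; any $\lambda$ works if $\deg(fg)\ge 4$. Being of even degree, this polynomial is bounded below, so $N-X^2-\lambda fg\ge 0$ on $\mathbb{R}$ for large $N$. It is therefore a sum of squares, giving $N-X^2\in\QM(fg)$. Alternatively, W\"ormann's theorem makes $\PO(fg)$ Archimedean, and $\PO(fg)=\QM(fg)$ for a single generator.

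With this replacement your proof is complete, and the Schm\"udgen fallback via $\PO(f,g)$ is unnecessary.
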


This theorem is used to obtain certificates in quadratic modules of the factors
of members satisfying the above conditions. The fact that $\sigma, \tau$ can be
made to be sums of squares, allows us to avoid products in the preordering
structure. Similarly to this result, there is another constructive approach in
\cite[Proposition 4.8]{scheiderer2000} using Weierstrass-approximation that
computes $\sigma, \tau$ as sums of squares under equivalent conditions. In
 Appendix \ref{basiclemma-appendix} we discuss an algorithm $\algBasicLemma$
(resp. $\algSOSBasicLemma$) to compute $\sigma, \tau$ in
\th\ref{sec:results-} (resp. \th\ref{sec:results--1}.)

\begin{lemma}\cite[Lemma 1.5]{Augustin2008TheMP}
  \th\label{sec:cert-elem-ponats}
  If $a, c_1, c_2, b \in \mathbb{R}$ with $a \leq c_1 \leq c_2 \leq b$ then
  \begin{equation}
    \label{sec:cert-elem-ponats-1}
    \begin{split}
      (X - c_1)(X - c_2) \in \quadmod((X - a)(X - b))
    \end{split}
  \end{equation}
  Moreover, there exists a non-negative number $\gamma \in \mathbb{R}$ depending
  on $a, c_1, c_2, b$ such that
  $(X - c_1)(X - c_2) = \sigma_0 + \gamma \cdot (X-a)(X-b)$ where
  $\sigma_0 \in \sum{\rX}^2$.
\end{lemma}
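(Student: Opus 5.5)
The plan is to reduce the statement to a statement about a single univariate quadratic. If $a=b$, the hypothesis forces $c_1=c_2=a$. Then $(X-c_1)(X-c_2)=(X-a)(X-b)$ and $\gamma=1$, $\sigma_0=0$ works. So assume $a<b$, and write
\[
p:=(X-c_1)(X-c_2), \qquad q:=(X-a)(X-b), \qquad h_\gamma:=p-\gamma q .
\]
Since $h_\gamma$ has degree at most $2$, it is a sum of squares exactly when it is non-negative on $\mathbb{R}$. This holds because a non-negative univariate polynomial is a sum of two squares, and in degree $\le 2$ completing the square gives the decomposition explicitly. So it suffices to find $\gamma\ge 0$ with $h_\gamma\ge 0$ on $\mathbb{R}$.

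The key observation is a sign implication: wherever $q(x)\ge 0$, also $p(x)\ge 0$. Indeed, $q(x)\ge 0$ means $x\le a$ or $x\ge b$. Since $a\le c_1\le c_2\le b$, this gives $x\le c_1$ or $x\ge c_2$, hence $p(x)\ge 0$. This is exactly the hypothesis of the (one-variable, quadratic) S-lemma, and $q$ takes a negative value at $(a+b)/2$. I would prove the needed instance directly rather than cite it, since the paper wants $\gamma$ computable.

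For fixed $x$, the condition $h_\gamma(x)\ge 0$ reads as follows:
\begin{itemize}
\item $\gamma\ge p(x)/q(x)$ when $x\in(a,b)$, where $q(x)<0$;
\item $\gamma\le p(x)/q(x)$ when $x\notin[a,b]$, where $q(x)>0$;
\item no condition at $x\in\{a,b\}$, since there $p(x)\ge 0$.
\end{itemize}
So I must show
\[
L:=\max\Bigl(0,\ \sup_{x\in(a,b)}p(x)/q(x)\Bigr)\ \le\ U:=\inf_{x\notin[a,b]}p(x)/q(x).
\]
Then any $\gamma\in[L,U]$ works, and in particular $\gamma=L$.

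For $U$: as $x\to\pm\infty$ the ratio $p/q$ tends to $1$, and it is $\ge 0$ outside $[a,b]$, so $U\in[0,1]$. This also matches the leading coefficient constraint $1-\gamma\ge 0$. In the boundary case $\gamma=1$, $h_1$ is linear and non-negative, hence a non-negative constant and trivially a square.

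The main obstacle is the inequality $L\le U$. I would attack it through the discriminant. The discriminant of $h_\gamma$ is
\[
D(\gamma)=\bigl((c_1+c_2)-\gamma(a+b)\bigr)^2-4(1-\gamma)(c_1c_2-\gamma ab),
\]
a quadratic in $\gamma$ with leading coefficient $(b-a)^2>0$. The goal is to show $D(\gamma_0)\le 0$ for some $\gamma_0\in[0,1]$. A candidate is the vertex $\gamma_0$ of $D$, or equivalently a value where $h_{\gamma_0}$ has a double root, which I expect lies in $[c_1,c_2]$. The contrapositive is cleanest. Suppose $h_\gamma$ had two distinct real roots for every $\gamma\in[0,1]$. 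Track the root interval of $h_\gamma$ continuously: it starts as $[c_1,c_2]\subseteq[a,b]$ at $\gamma=0$, and the roots can never leave $[a,b]$. This is because at a root outside $[a,b]$ we would have $q>0$, and hence $p=\gamma q\ge 0$ there with the wrong sign pattern. Deriving a contradiction near $\gamma\to 1$, where one root escapes to infinity, completes the argument. If the continuity argument becomes messy, I would fall back on verifying $\min_\gamma D(\gamma)\le 0$ by direct algebra. Writing $c_i=a+t_i(b-a)$ with $0\le t_1\le t_2\le 1$ should make the minimum factor into a product of terms like $-t_1(1-t_2)\le 0$. That algebraic identity would then also give an explicit formula for $\gamma$ in terms of $a,c_1,c_2,b$.
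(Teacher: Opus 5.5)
The paper does not prove this lemma. It cites Augustin's thesis and only remarks that the proof there yields an explicit arithmetical expression for $\gamma$, so your proposal has to stand on its own. Your reduction is sound: for $a<b$ you need some $\gamma\in[0,1]$ with $h_\gamma=p-\gamma q\ge 0$ on $\mathbb{R}$, and for $\gamma<1$ this means $D(\gamma)\le 0$.

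\textbf{The continuity route has a genuine error.} You claim the roots of $h_\gamma$ can never leave $[a,b]$, because at a root $x\notin[a,b]$ one has $q(x)>0$ and hence $p=\gamma q\ge 0$ ``with the wrong sign pattern''. This is not a valid step. Since $p\ge 0$ off $[a,b]$ anyway, nothing is contradicted, and roots outside $[a,b]$ really occur. For $c_1=c_2=b$,
\[
h_\gamma=(X-b)\bigl((1-\gamma)X-(b-\gamma a)\bigr),
\]
and the root $(b-\gamma a)/(1-\gamma)$ lies to the right of $b$ for every $\gamma\in(0,1)$.

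What is actually true is that $h_\gamma(a)=p(a)$ and $h_\gamma(b)=p(b)$ do not depend on $\gamma$. Assume, as your contrapositive does, that $h_\gamma$ has two distinct real roots for all $\gamma\in[0,1)$; the range must be $[0,1)$, not $[0,1]$, since $h_1$ has degree at most one. Then a continuously tracked root can reach $a$ or $b$ only if $c_1=a$ or $c_2=b$. In that case the endpoint is a root for every $\gamma$, and the other root cannot pass it without forming a double root, which the assumption excludes.

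Even after this repair, the ``escape to infinity'' contradiction fails when $c_1=a$ and $c_2=b$. There $h_\gamma=(1-\gamma)q$ keeps its roots at $a$ and $b$, so that case must be closed separately by taking $\gamma=1$. Also, the vertex of $D$ is not ``equivalently'' a $\gamma$ at which $h_\gamma$ has a double root. Double roots occur at zeros of $D$, whereas at the vertex one typically has $D<0$.

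\textbf{Your algebraic fallback is the right proof and works as you predict.} Make it the main argument. Expanding gives
\[
D(\gamma)=(b-a)^2\gamma^2-2\bigl[(c_1-a)(b-c_2)+(c_2-a)(b-c_1)\bigr]\gamma+(c_2-c_1)^2 .
\]
With your $c_i=a+t_i(b-a)$, the vertex is
\[
\gamma_0=\frac{(c_1-a)(b-c_2)+(c_2-a)(b-c_1)}{(b-a)^2}=t_1(1-t_2)+t_2(1-t_1).
\]
It lies in $[0,1]$ because $1-\gamma_0=(1-t_1)(1-t_2)+t_1t_2\ge 0$. The minimum of $D$ is
\[
D(\gamma_0)=-\frac{4(c_1-a)(b-c_1)(c_2-a)(b-c_2)}{(b-a)^2}=-4(b-a)^2\,t_1(1-t_1)\,t_2(1-t_2)\le 0 .
\]
The case $\gamma_0=1$ forces $t_1=0$ and $t_2=1$, i.e.\ $p=q$. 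Otherwise, completing the square in $h_{\gamma_0}$ gives $\sigma_0$ explicitly.

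This $\gamma$ is rational in $a,c_1,c_2,b$, which is exactly the computability the paper relies on. Choosing $\gamma$ at a zero of $D$ would instead introduce a square root. With this computation in hand, both the $L\le U$ reformulation and the continuity argument become unnecessary.
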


It is worth noting the proof of \th\ref{sec:cert-elem-ponats} is constructive
and finds an arithmetical expression \footnote{The set of arithmetical
  expressions is a set closed by the arithmetic operators (addition,
  subtraction, multiplication, division, and exponentiation.)} for
$\gamma$. Hence, if the parameters $a, c_1, c_2, b$ are computable, $\gamma$
above is computable as well.

\subsection{Natural Generators associated with $\semialgebraic(G)$}
\label{sec:natur-gener-assoc-1999}

In general, the
saturation $\tilde{Q}$ of a quadratic module $Q$ is defined as the smallest
intersection of all quadratic modules, including $Q$. A quadratic module $Q$ is
saturated if $\tilde{Q} = Q$. The following theorem provides a semantic characterization:

\begin{theorem}\cite[][Proposition 2.6.1, page 33]{marshall2008positive}
  \th\label{sec:preliminaries-1}
  Let $G$ be a finite subset of $\rX$ and $Q := \quadmod(G)$. Then,
  \begin{enumerate}
  \item $\tilde{Q} = \{f \in \rX \mid f \geq 0 \text{ over }
    \semialgebraic(G)\}$. 
  \item $Q$ is saturated if and only if for all $f \in \rX$, $f \geq
    0$ over $\semialgebraic(G)$ implies $f \in Q$.
  \end{enumerate}
\end{theorem}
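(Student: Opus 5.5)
I will read the (somewhat informal) definition of the saturation in the standard way. $\tilde{Q}$ is the intersection of all \emph{positive cones} $P_\alpha$ of orderings $\alpha$ of $\rX$ (points of the real spectrum) that contain $Q$. Equivalently, it is the intersection of all maximal proper quadratic modules containing $Q$. Item 1 then becomes a Positivstellensatz-type identification of these cones with point evaluations on $\semialgebraic(G)$. Item 2 is a formal consequence of item 1.

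\textbf{Inclusion $\{f \geq 0 \text{ on } \semialgebraic(G)\} \subseteq \tilde{Q}$.} For each $x \in \mathbb{R}$, the set $P_x := \{f \in \rX \mid f(x) \geq 0\}$ is a preordering, hence a quadratic module. It contains $Q$ exactly when $g_i(x) \geq 0$ for all $i$, that is, when $x \in \semialgebraic(G)$. The remaining orderings of $\rX$ are the ones infinitesimally close to a point $x^{\pm}$ or to $\pm\infty$. These are handled by Artin--Lang in the univariate setting. If such an ordering contains every $g_i$, then the signs of the $g_i$ persist on a one-sided neighbourhood, so the limiting point (or the unbounded ray) lies in the closed set $\semialgebraic(G)$. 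A polynomial $f \geq 0$ on $\semialgebraic(G)$ is then nonnegative in that ordering: either $f$ vanishes identically near the point, or its first nonzero Taylor term has the sign forced by nonnegativity on the neighbouring part of $\semialgebraic(G)$. Hence $f$ lies in every ordering containing $Q$.

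\textbf{Reverse inclusion.} If $f(x_0) < 0$ for some $x_0 \in \semialgebraic(G)$, then $P_{x_0}$ is an ordering containing $Q$ with $f \notin P_{x_0}$. Therefore $f \notin \tilde{Q}$, and item 1 follows.

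\textbf{Item 2.} Always $Q \subseteq \tilde{Q}$, since every $g_i$ and every square is nonnegative on $\semialgebraic(G)$, and these sets are closed under sums and products by squares. So $Q = \tilde{Q}$ holds iff $\tilde{Q} \subseteq Q$. By item 1, that inclusion says exactly that every $f \geq 0$ on $\semialgebraic(G)$ lies in $Q$.

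\textbf{Main obstacle.} The hard step is the first inclusion: every ordering containing $Q$ must have its ``centre'' in $\semialgebraic(G)$, and nonnegativity on $\semialgebraic(G)$ must transfer to the non-archimedean orderings at boundary points and at infinity. I would handle this with the explicit classification of orderings of $\mathbb{R}(X)$ by cuts of $\mathbb{R}$, checking the one-sided and infinite cases separately via the sign of the leading or lowest-order term.
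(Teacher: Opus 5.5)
Your argument is correct, but the paper has no proof to compare it with. The statement is quoted from Marshall's book, where it sits in the $n$-variable setting. There, identifying the order-theoretic saturation (the intersection of the orderings containing $Q$) with $\Pos(\semialgebraic(G))$ is a form of the Positivstellensatz / Artin--Lang, and item 2 then follows formally, exactly as in your last step. You instead use the explicit description of the orderings of $\mathbb{R}[X]$: the point orderings $P_x$ and the cuts $x^{\pm}$, $\pm\infty$ of $\mathbb{R}(X)$. You check one by one that every ordering containing $\QM(G)$ contains $\Pos(\semialgebraic(G))$. This is elementary and self-contained; once you have the classification, no Artin--Lang is actually needed. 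The cost is that it only works in one variable, whereas the Positivstellensatz route works in any dimension. You were also right to fix an interpretation of $\tilde{Q}$. Read literally, the paper's wording (``the smallest intersection of all quadratic modules including $Q$'') gives $\tilde{Q}=Q$, which would make item 1 false, e.g.\ $X\notin\QM(X^3)$.

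Two points to tighten.

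First, in the cut case, the fact you really use is that a whole one-sided neighbourhood $(x,x+\epsilon)$ (or a ray) lies in $\semialgebraic(G)$, with $\epsilon=\min_i\epsilon_i$. This is exactly where finiteness of $G$ enters. Knowing only that the limit point lies in $\semialgebraic(G)$ would not determine the sign of $f$ in that ordering, so state it this way.

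Second, your aside that $\tilde{Q}$ is ``equivalently'' the intersection of the maximal proper quadratic modules containing $Q$ is not established by anything you wrote. In $\mathbb{R}[X]$ it is true, but it rests on the non-trivial fact that every semiordering of $\mathbb{R}(X)$ is an ordering. It is not a general equivalence: it fails for suitable $Q$ in any field with a semiordering that is not an ordering. Since nothing in your argument depends on it, either justify it or drop it.
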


For the rest of our paper, the quadratic module $\QM(G)$ is assumed to
be saturated.

Kuhlmann and Marshall introduced the concept of natural generators associated
with a compact semialgebraic set $\semialgebraic(G)$; we adapt their
terminology.

\begin{definition}\cite{marshall2002}
  \th\label{sec:preliminaries} For a given semialgebraic set $S \neq \emptyset$,
  the natural generators of $S$, denoted $\Nat(S)$, is the set of polynomials
  including:
  \begin{itemize}
  \item[(i)] If $a \in S$ and $(-\infty, a) \cap S = \emptyset$, then
    $X - a \in \Nat(S)$. This is called the left natural generator or simply the
    {\bf left linear factor}.
  \item[(ii)] If $b, c \in S, b < c, (b, c) \cap S = \emptyset$, then
    $(X - b)(X - c) \in \Nat(S)$.  This is called a {\bf quadratic
      factor}.
  \item[(iii)] If $d \in S$ and $(d, \infty) \cap S = \emptyset$, then
    $-(X - d) \in \Nat(S)$. This is called the right natural generator, or simply
    the {\bf right linear factor}.
 
  \item[(iv)] $\Nat(S)$ has no other elements except these.
  \end{itemize}

  If $S = \emptyset$, then the set of natural generators is $\{-1\}$.
\end{definition}

This theorem relates the concept of saturated quadratic modules and natural
generators, we focus only on the compact case:

\begin{theorem}\cite[][Theorem 3.1, item (b)]{marshall2005}
  \th\label{sec:natur-gener-assoc}
  If $\semialgebraic(G)$ is compact, the following are equivalent:
  \begin{itemize}
  \item $\quadmod(G)$ is saturated.
  \item $\quadmod(G)$ contains the natural generators of $\semialgebraic(G)$.
  \end{itemize}
\end{theorem}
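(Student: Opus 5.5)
The two implications are proved separately.

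\emph{Saturated implies natural generators in $\quadmod(G)$.} This direction is immediate. Each element of $\Nat(\semialgebraic(G))$ is non-negative on $\semialgebraic(G)$. The left linear factor $X-a$ vanishes at the smallest point $a$ of $\semialgebraic(G)$ and is positive to its right. The right linear factor $-(X-d)$ behaves symmetrically at the largest point $d$. A quadratic factor $(X-b)(X-c)$ is negative only on the open gap $(b,c)$, and this gap misses $\semialgebraic(G)$. Hence each natural generator lies in $\Pos(\semialgebraic(G))$, and by \th\ref{sec:preliminaries-1} (2) it belongs to $\quadmod(G)$. The case $\semialgebraic(G)=\emptyset$ works the same way: $-1$ is vacuously non-negative on the empty set, so $-1\in\quadmod(G)$.

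\emph{Natural generators in $\quadmod(G)$ implies saturated.} By \th\ref{sec:preliminaries-1} (2) it suffices to show that every $f\in\Pos(\semialgebraic(G))$ lies in $\quadmod(\Nat(S))\subseteq\quadmod(G)$, where $S=\semialgebraic(G)=\bigcup_{i=0}^k[a_i,b_i]$. If $S=\emptyset$, then $-1\in\quadmod(G)$ and every $f$ is $\left(\frac{f+1}{2}\right)^2+(-1)\left(\frac{f-1}{2}\right)^2$, so we may assume $S\neq\emptyset$. Since $\quadmod(S)$ contains $X-a_0$ and $b_k-X$, it contains $(X-a_0)(b_k-X)$ up to sums of squares. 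This gives Archimedeanity, but only in the preordering sense, so this route needs care.

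The plan is to factor $f$ over $\mathbb{R}$. Write $f=c\,h^2\,q\prod_j(X-r_j)$, where $q$ is positive everywhere (hence a sum of squares), the $r_j$ are the real roots of odd multiplicity, and $h^2$ absorbs the even-multiplicity parts. Because $f\ge0$ on $S$, no odd-multiplicity root lies in the interior of $S$. So each $r_j$ either lies to the left of $a_0$, lies to the right of $b_k$, or lies in a closed gap $[b_{i-1},a_i]$. A sign count shows that the roots in each open gap occur in even number, except that an endpoint of the gap may absorb an odd one.

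Next, pair the roots inside each gap $[b_{i-1},a_i]$ and apply \th\ref{sec:cert-elem-ponats} with $a=b_{i-1}$, $b=a_i$. This gives $(X-c_1)(X-c_2)=\sigma_0+\gamma (X-b_{i-1})(X-a_i)$, which lies in $\quadmod$ of that quadratic natural generator. Roots left of $a_0$ are handled by writing $X-r=(X-a_0)+(a_0-r)$ with $a_0-r\ge0$. Roots right of $b_k$ are handled similarly with $b_k-X$. A single leftover unpaired root can be absorbed by these linear factors.

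\emph{Main obstacle.} The products of the resulting factors need not lie in the quadratic module, since a quadratic module is not closed under multiplication. I would try to rewrite each product of natural generators as an element of $\quadmod(\Nat(S))$. The key identity to use is that products of two distinct natural generators are non-negative on $S$ and vanish only at endpoints. For products such as $(X-a_0)(b_k-X)$, or $(X-a_0)$ times a quadratic factor, I would use \th\ref{sec:results--1} in the form $1=\sigma u+\tau v$ with $\sigma,\tau$ sums of squares. This gives $uv=\sigma u^2v+\tau uv^2$; since $u^2v\in\quadmod(v)$ and $uv^2\in\quadmod(u)$, the product reduces to single generators with square multipliers. Applying this reduction inductively over the number of factors is the step I expect to require the most work.
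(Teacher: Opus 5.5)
Your overall structure matches the route behind the paper. Saturation is equivalent to $\Pos(S)\subseteq\quadmod(G)$ by Theorem~\ref{sec:preliminaries-1}. Then $\Pos(S)=\PO(\Nat(S))$ by factoring $f$ into pieces, each in the quadratic module of a single natural generator. Finally $\quadmod(\Nat(S))=\PO(\Nat(S))$ for compact $S$. Your first direction and the case $S=\emptyset$ are fine.

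\textbf{The gap.} The step carrying the real content is that products of natural generators lie in $\quadmod(\Nat(S))$. You leave this open, and the tool you propose for it does not apply.

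Theorem~\ref{sec:results--1} requires $\semialgebraic(uv)$ to be compact. Among products of two natural generators, this holds only for $u=X-a_0$, $v=b_k-X$.
\begin{itemize}
\item For $u=X-a_0$ and $v=(X-b_i)(X-a_{i+1})$ we get $\semialgebraic(uv)=[a_0,b_i]\cup[a_{i+1},\infty)$. Then no sums of squares $\sigma,\tau$ with $1=\sigma u+\tau v$ exist at all. On $[a_{i+1},\infty)$ both terms are non-negative, so $0\le\sigma\le 1/u\to0$ forces $\sigma=0$, and likewise $\tau=0$.
\item For two quadratic factors, $\semialgebraic(uv)$ is unbounded on both sides.
\item When two natural generators share a zero at an isolated point of $S$, they are not even coprime.
\end{itemize}

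\textbf{The missing idea is compactification.} Take $\beta>b_k$ and $h:=\beta-X=(\beta-b_k)+(b_k-X)\in\quadmod(\Nat(S))$, which is strictly positive on $S$.
\begin{itemize}
\item $\semialgebraic(uvh)=[a_0,b_i]\cup[a_{i+1},\beta]$ is compact. So Theorem~\ref{sec:results--1} applied to $(v,uh)$ gives $uvh=\sigma_1(uh)^2\cdot v+\tau_1v^2\cdot uh$. Here $uh\in\quadmod(u,h)$ by the linear--linear identity.
\item Applying it again to $(uv,h)$ gives $uv=\sigma_3(uv)^2+\tau_3\cdot uvh$.
\item For two quadratic factors, use $h_1h_2$ with $h_1=X-\alpha$, $\alpha<a_0$.
\item Shared zeros need separate explicit identities, such as $-(X-a)^2=\frac{1}{4}(X-a-1)^2(X-a)+\frac{1}{4}(X-a+1)^2(a-X)$.
\end{itemize}
This is what Section~\ref{cert_prods} supplies.

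For longer products no further Basic Lemma is needed. If $g_2\cdots g_m=\sigma_0+\sum_l\sigma_lg_l$, then $g_1g_2\cdots g_m=\sigma_0g_1+\sum_l\sigma_l\,g_1g_l$. So pairwise products plus induction on $m$ suffice.

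\textbf{Second issue: the factorization step.} It needs repair when $S$ has isolated points. Putting every even-multiplicity root into $h^2$ and pairing only odd roots inside closed gaps can fail.
\begin{itemize}
\item Take $S=[-2,-1]\cup\{0\}\cup[1,2]$ and $f=X^2(X^2-\frac{1}{4})$, which is non-negative on $S$. The odd roots $\pm\frac{1}{2}$ lie in different gaps. You must write $f=X(X-\frac{1}{2})\cdot X(X+\frac{1}{2})$, splitting $X^2$ between the gaps $[-1,0]$ and $[0,1]$, before Lemma~\ref{sec:cert-elem-ponats} applies.
\item Similarly, $f=-X^2$ on $S=\{0\}$ must be written as $X\cdot(-X)$.
\end{itemize}
Augustin's induction handles this. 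Choose $c$ with $f(c)<0$ and split off the nearest zeros of $f$, of any multiplicity, on either side of $c$.

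A minor point: $(X-a_0)(b_k-X)$ already lies in $\quadmod(X-a_0,b_k-X)$ by the Case~1 identity of Section~\ref{cert_prods}, not merely in the preordering.
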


From the latter, we can see that if $\semialgebraic(G)$ is empty, the quadratic
module $\quadmod(G)$ is saturated. The following theorem from
\cite{marshall2005} gives a practical procedure to determine whether $QM(G)$ is
saturated.

\begin{theorem}\cite[Theorem 3.2]{marshall2005}
  \th\label{sec:cert-elem-ponats-2} $\QM(G)$ is saturated if and only if
  $\semialgebraic(G)$ satisfies the following two conditions:
  \begin{itemize}
  \item[(i)] for each left endpoint $a_{j}$ in $\semialgebraic(G)$, there exists
    $i \in \set{1, \dots, s}$ such that $g_i(a_j) = 0$ and
    $\frac{d g_i}{dX}(a_j) > 0$,
  \item[(ii)] for each right endpoint $b_j$ in $\semialgebraic(G)$, there exists
    $i \in \set{1, \dots, s}$ such that $g_i(b_j) = 0$ and
    $\frac{d g_i}{dX}(b_j) < 0$.
  \end{itemize}
\end{theorem}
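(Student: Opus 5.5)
By the equivalence stated earlier (\th\ref{sec:natur-gener-assoc}), when $\semialgebraic(G)$ is compact, $\QM(G)$ is saturated if and only if it contains every element of $\Nat(\semialgebraic(G))$. So I will prove that conditions (i) and (ii) are equivalent to membership of the natural generators. I assume $\semialgebraic(G)=\bigcup_{j=0}^{k}[a_j,b_j]$ is compact, as in the running setting; the empty case is trivial. The natural generators are $X-a_0$, $-(X-b_k)$, and $(X-b_{j-1})(X-a_j)$ for $j=1,\dots,k$.

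\emph{Necessity.} Suppose $\QM(G)$ is saturated and $a_j$ is a left endpoint. Pick $\epsilon>0$ small so that $a_j-\epsilon$ lies in the gap to the left of $a_j$ (or below $a_0$). The polynomial $h=(X-a_j)(X-a_j+\epsilon)$, or simply $X-a_0$ when $j=0$, is non-negative on $\semialgebraic(G)$. By \th\ref{sec:preliminaries-1}, $h=\sigma_0+\sum_i\sigma_i g_i$.

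Now evaluate at $a_j$. We have $h(a_j)=0$, and every term is non-negative on $\semialgebraic(G)$, so each $\sigma_i g_i$ vanishes at $a_j$. Next compare behaviour just to the right of $a_j$, inside $[a_j,b_j]$, using orders of vanishing. The function $h$ vanishes to order exactly $1$ at $a_j$. A sum of non-negative terms on $[a_j,a_j+\delta)$ has order equal to the minimum order of its terms, and no cancellation occurs among leading coefficients of non-negative functions. Hence some term $\sigma_i g_i$ has order $1$. Since $\sigma_0$ and each $\sigma_i$, being sums of squares, vanish to even order, this forces $i\ge 1$, $\sigma_i(a_j)>0$, and $g_i$ vanishing to order exactly $1$. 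Positivity on the right then gives $g_i'(a_j)>0$. The argument for a right endpoint $b_j$ is symmetric.

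\emph{Sufficiency.} Assume (i) and (ii). For the left linear factor, take $g_i$ with $g_i(a_0)=0$ and $g_i'(a_0)>0$, and write $g_i=(X-a_0)p$ with $p(a_0)>0$. I then want $X-a_0=\sigma g_i+\tau\cdot(\text{something in }\QM(G))$. The route I plan is the Basic Lemma, applied to the splitting of $g_i$ into its non-negative factors $(X-a_0)$ and $p$. The difficulty is that $p$ need not be non-negative on all of $\semialgebraic(G)$.

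Instead I plan to follow the univariate argument behind \th\ref{sec:natur-gener-assoc}. Every polynomial non-negative on $\semialgebraic(G)$ factors as a sum of squares times products of natural generators. So it suffices to show that each natural generator lies in the preordering, and then use that in the univariate Archimedean case $\QM(G)=\PO(G)$. Membership of $X-a_0$ can be obtained as follows. The polynomial $p$ is positive near $a_0$. One builds $q\in\PO(G)$ strictly positive on $\semialgebraic(G)\setminus\{a_0\}$, for instance $g_i$ plus products of other generators that handle the sign of $p$ on the other components. Then $(X-a_0)\cdot(\text{positive})=q+\text{sos}$, and dividing by the positive factor is done via the Basic Lemma identity $1=\sigma f+\tau g$ with $f=X-a_0$ and $g=q/(X-a_0)$.

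The quadratic factors $(X-b_{j-1})(X-a_j)$ are handled by the same method. One combines the generator vanishing at $b_{j-1}$ with negative slope and the generator vanishing at $a_j$ with positive slope: their product, suitably adjusted, equals $(X-b_{j-1})(X-a_j)$ times a factor positive on $\semialgebraic(G)$.

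\emph{Main obstacle.} The main obstacle is the sufficiency step: the cofactor $p$ may change sign on other components of $\semialgebraic(G)$, so the Basic Lemma cannot be applied directly. I would first try to repair this by multiplying by other generators that vanish at the offending endpoints, and inducting on the number of components.
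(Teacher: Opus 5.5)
\textbf{Sufficiency: the argument rests on an unproved identity.} The gap is in sufficiency. Everything there rests on ``$\QM(G)=\PO(G)$ in the univariate Archimedean case'', and you give no proof of it for an arbitrary $G$. The paper only asserts this in passing. The equality it actually cites (Kuhlmann--Marshall--Schwartz, Theorem~3.5) concerns $\Nat(S)$, not $G$. For general $G$, the inclusion $\PO(G)\subseteq\QM(G)$ is what saturation would give you, via $\QM(G)=\Pos(\semialgebraic(G))\supseteq\PO(G)$. So invoking it here is circular unless you prove it independently. Granting it reduces the statement to its preordering analogue and assumes away the one difficulty specific to quadratic modules: they are not closed under products.

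Your ``main obstacle'' is also misplaced. Since $X-a_0>0$ on $\semialgebraic(G)\setminus\{a_0\}$, $g_i\ge 0$ there, and $p(a_0)=g_i'(a_0)>0$, the cofactor $p$ is automatically non-negative on $\semialgebraic(G)$. The Basic Lemma (Theorem~\ref{sec:results-}) on $\semialgebraic(G)$ then gives $1=\sigma(X-a_0)+\tau p$ with $\sigma,\tau>0$ there, hence $X-a_0=\sigma(X-a_0)^2+\tau g_i$. By Schm\"udgen this already lies in $\PO(G)$, so no induction on components is needed. In $\QM(G)$, however, $\tau g_i$ is a product of two members, and that is the real obstruction. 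Your quadratic step has the same problem: it uses the product $g_lg_r$ directly, where $g_l(b_{j-1})=0>g_l'(b_{j-1})$ and $g_r(a_j)=0<g_r'(a_j)$. Even in $\PO(G)$ that step needs a coprimality fix, since $g_r$ may vanish at $b_{j-1}$. Replacing $g_l,g_r$ by $g_l+(X-b_{j-1})^2$ and $g_r+(X-a_j)^2$ works.

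\textbf{The missing idea, and a gap in necessity.} The paper quotes the theorem without proof. Its Section~\ref{repr_nat_gens}, however, is in effect a constructive sufficiency argument inside $\QM(G)$, and it shows the missing idea: keep the multiplier a sum of squares. Apply the sums-of-squares Basic Lemma (Theorem~\ref{sec:results--1}) on $\semialgebraic(g_i)$ rather than on $\semialgebraic(G)$. Then $\sigma,\tau\in\sum{\aX}^2$ and $X-a_0=\sigma(X-a_0)^2+\tau g_i\in\QM(g_i)$. The hypotheses of that lemma can fail, because $\semialgebraic(g_i)$ may be unbounded or may contain intervals disjoint from $\semialgebraic(G)$ on which the factors change sign. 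The paper repairs this in two ways. It compactifies, passing from $g_i$ to $g_ih$ with $h>0$ on $\semialgebraic(G)$. It also adds an element of $\QM(G)$ that vanishes to high order at the neighbouring endpoints, which deletes the redundant intervals. For quadratic factors it never multiplies two original generators. It obtains $(X-b_i)(X-c_{i_1})^{m_{i_1}}$ and $(X-c_{i_2})^{m_{i_2}}(X-a_{i+1})$ separately in $\QM(G)$. It then combines them either linearly (Lemma~\ref{sec:splitting-step-5}) or via the product certificates of Section~\ref{cert_prods} (Lemma~\ref{sec:cert-quadr-fact}).

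Your necessity argument also breaks at an isolated point $a_j=b_j$. There no interval $[a_j,a_j+\delta)$ lies in $\semialgebraic(G)$, so ``positivity on the right'' yields no sign, and the mirror argument fails the same way. Compare first derivatives instead. If $\sigma_i(a_j)=0$, then $\sigma_i$ vanishes to order at least $2$, so $(\sigma_ig_i)'(a_j)=0$. Otherwise $g_i(a_j)=0$ and $(\sigma_ig_i)'(a_j)=\sigma_i(a_j)g_i'(a_j)$. Hence $0<h'(a_j)=\sum_{i\ge1,\,\sigma_i(a_j)>0}\sigma_i(a_j)\,g_i'(a_j)$, so some $g_i$ has $g_i(a_j)=0<g_i'(a_j)$. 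With $h=(X-b_j)(X-b_j-\epsilon)$, or $-(X-b_k)$, the same computation handles right endpoints, isolated or not.
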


A method to generate a certificate of $f$ in $\PO(\Nat(S))$ is extracted
from the inductive argument about the degree of $f$ in \cite{Augustin2008TheMP};
it involves computing certificates of factors of $f$ that are non-negative over
$S$. Since preorderings are closed by multiplication, the certificate of $f$ is
obtained by multiplying and rearranging the certificates of each of its factors
appropriately in the preordering structure.

\begin{theorem}\cite[Theorem 1.6]{Augustin2008TheMP}
  \th\label{sec:background} For a basic closed semialgebraic set
  $S \subseteq \mathbb{R}$ we have $\Pos(S) = \PO(\Nat(S))$. Furthermore, the
  sums of squares multipliers in the preordering representation are computable.
\end{theorem}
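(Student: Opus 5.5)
The inclusion $\PO(\Nat(S)) \subseteq \Pos(S)$ is immediate, since every natural generator is non-negative on $S$. The left linear factor $X-a$ is non-negative because no point of $S$ lies left of $a$, and the right linear factor $-(X-d)$ is handled symmetrically. A quadratic factor $(X-b)(X-c)$ is negative only on $(b,c)$, which misses $S$. Sums of squares times products of non-negative functions stay non-negative. So all the work is in the reverse inclusion, and I would prove it by strong induction on $\deg f$, keeping every step explicit so that computability of the multipliers follows.

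Let $f \in \Pos(S)$. If $f$ is a non-negative constant, it is a square. The first reduction is to factor $f$ over $\mathbb{R}$ as $f = c \prod_j q_j \prod_i (X-r_i)^{m_i}$. Here each $q_j$ is a monic irreducible quadratic, hence a sum of two squares obtained by completing the square. Real roots of even multiplicity also contribute squares. Since $\PO(\Nat(S))$ is closed under multiplication by squares, I may strip these off and assume $f = c\prod_i (X-r_i)$ with distinct real roots, each of odd multiplicity reduced to one, and sign constant $c$. Non-negativity on $S$ then forces the following: between consecutive points of $S$ inside one interval $[a_j,b_j]$ there is no sign change, so no $r_i$ lies in the interior of a component. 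Each root lies in a gap $(b_{j-1},a_j)$ (closed at the endpoints), or left of $a_0$, or right of $b_k$.

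Now I group the linear factors. Consider a gap $[b_{j-1},a_j]$ with $b_{j-1}$ and $a_j$ in $S$. The sign of $f$ is non-negative on both sides of the gap, so an even number of roots lies in it. Pairing them up as $(X-c_1)(X-c_2)$ with $b_{j-1}\le c_1\le c_2\le a_j$, \th\ref{sec:cert-elem-ponats} gives
\[
(X-c_1)(X-c_2)=\sigma_0+\gamma (X-b_{j-1})(X-a_j), \qquad \gamma \ge 0 \text{ computable},
\]
which is an element of $\QM$ of a quadratic natural generator. Roots $r < a_0$ give factors $X-r = (X-a_0)+(a_0-r)$, a sum of the left linear factor and a non-negative constant; symmetrically, roots $r > b_k$ are handled via $-(X-b_k)$. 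The leftover sign is the delicate part. After these groupings, the remaining factors left of $a_0$ and right of $b_k$ may be linear with a negative sign, for instance $c(X-r)$ with $c<0$ and $r>b_k$. In that case I rewrite $c(X-r) = |c|(r-X) = |c|\big((b_k-X)+(r-b_k)\big)$. Tracking parity shows that, after pairing inside the gaps, the sign of $c$ matches the product of these outer factors written in the positively oriented forms $X-r$ (for $r<a_0$) and $r-X$ (for $r>b_k$). If $S$ is unbounded on one side, no roots can lie on that side except in pairs, which are handled by the pairing bookkeeping.

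Finally, $f$ is a product of elements of $\PO(\Nat(S))$, and this preordering is closed under products, so $f \in \PO(\Nat(S))$. Expanding the product and collecting terms by the monomials $\prod g^{e}$ in the natural generators gives the explicit multipliers. Computability comes from three facts: the roots are real algebraic numbers (isolable), each $\gamma$ is an arithmetic expression in them, and the quadratic completions are explicit. The main obstacle I anticipate is the sign and parity bookkeeping at the outer unbounded regions and in gaps containing an odd number of roots counted with the endpoint cases $c_i\in\{b_{j-1},a_j\}$. I would resolve it by arguing directly from the sign of $f$ just outside each component of $S$.
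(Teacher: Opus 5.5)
\textbf{There is a genuine gap in your reduction step.} Stripping square factors does not preserve membership in $\Pos(S)$ when $S$ has isolated points, and the statement allows such sets (the paper's own $S_1=[-1,1]\cup\{2\}\cup[3,4]$ has one). Suppose $f=\sigma\tilde f$, where $\sigma$ is a product of squares vanishing on a finite set $Z$. Continuity only gives $\tilde f\ge 0$ on the closure of $S\setminus Z$, i.e.\ on $S$ minus those isolated points of $S$ that lie in $Z$. At an isolated point $p$ that is a root of even multiplicity, $\tilde f(p)$ can be negative.

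The smallest instance is $S=\{0\}$ with $f=-X^2$. Your reduced polynomial is $-1\notin\Pos(S)$, yet $-X^2=X\cdot(-X)$ is a product of the two natural generators.

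For the gap pairing, take $f=(X-2)^2(X-\tfrac32)(X-\tfrac52)\in\Pos(S_1)$. After stripping, $\tilde f=(X-\tfrac32)(X-\tfrac52)$ has $\tilde f(2)=-\tfrac14$. Each of the gaps $(1,2)$ and $(2,3)$ then contains exactly one root, so your claim that every gap contains an even number of roots fails and no pairing exists. The correct decomposition uses the double root once on each side: $f=\bigl[(X-\tfrac32)(X-2)\bigr]\cdot\bigl[(X-2)(X-\tfrac52)\bigr]$. The two brackets lie in $\QM((X-1)(X-2))$ and $\QM((X-2)(X-3))$ by Augustin's Lemma 1.5, both with $\gamma=\tfrac12$. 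So the endpoint difficulty you anticipated is not parity bookkeeping on $\tilde f$. The polynomial $\tilde f$ is simply not in $\Pos(S)$, and no regrouping of its factors can produce a certificate.

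\textbf{How the paper avoids this, and how to repair your route.} The paper's (Augustin's) induction never discards a factor. It picks $c\notin S$ with $f(c)<0$, splits off the root of $f$ nearest to $c$ on each side of $c$ that faces $S$, and keeps the full quotient $g$. Because these roots are the nearest ones, $g>0$ strictly between them and $c$. By continuity $g\ge 0$ at an isolated endpoint, so $g\in\Pos(S)$, and a double root at an isolated point is consumed one copy per adjacent gap.

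Your global factorization is a legitimate alternative once repaired in the same spirit. At each isolated point $p$ of $S$ where $\tilde f(p)<0$, retain one factor $(X-p)^2$ and split it as $(X-p)(X-p)$. Give one copy to each side of $p$: a neighbouring gap, or the outer region if $p=a_0$ or $p=b_k$, where $X-p$, resp.\ $-(X-p)$, is itself a natural generator. Then redo the parity count with these roots included.

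You should also treat $S=\emptyset$ separately, since your sign argument evaluates at a point of $S$. There $\Nat(S)=\{-1\}$ and $f=\bigl(\tfrac{f+1}{2}\bigr)^2+\bigl(\tfrac{f-1}{2}\bigr)^2\cdot(-1)$.
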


The certificates in \th\ref{sec:background} are obtained using an inductive
proof on the degree of the given polynomial. The base case is when the
polynomial is non-negative \footnote{This is because, in the univariate case,
  non-negative polynomials correspond to sums of squares polynomials which are
  members of any quadratic module.} over $\mathbb{R}$; the inductive case is
when the given polynomial evaluates to a negative value, say $f(c) < 0$ for some
$c \in \mathbb{R}$, and $\deg(f) > 0$. This is divided into three cases:

\begin{itemize}
\item Case 1: If $c \leq a_0$, then there exists $c_0$ such that
  $c < c_0 \leq a$ such that $f(c_0) = 0$. Hence, $f = (X-c_0)g$ for some $g$
  that is non-negative over $S$.
\item Case 2: If $c \geq b_k$, then there exists $c_0$ such that
  $b_k \leq c_0 < c$ such that $f(c_0) = 0$. Hence, $f = (-(X-c_0)) g$ for some
  $g$ is non-negative over $S$.
\item Case 3: If $b_i < c < a_{i+1}$, then there exists $c_0, c_1$ such that
  $b_i \leq c_0 \leq c_1 \leq a_{i+1}$ and $f(c_0) = f(c_1) = 0$. Hence,
  $f = (X-c_0)(X-c_1)g$ for some $g$ that is non-negative over $S$.
\end{itemize}

For Case 1 (resp. Case 2), we can compute a certificate in $\PO(\Nat(S))$ of the
factor $X-c_0$ (resp. $-(X - c_0)$) since
$X - c_0 = (a_0 - c_0) + 1 \cdot (X - a_0)$ (resp.
$-(X - c_0) = (c_0 - b_k) + 1 \cdot (-(X - b_k))$). For Case 3, the construction
relies on \th\ref{sec:cert-elem-ponats}.

From the above properties, a method to construct a certificate of $f$ in
$\Pos(S)$ in terms of natural generators $\Nat(S)$ can iterate the inductive
step on the polynomial $g$ mentioned above until it is a non-negative polynomial
over $S$.  We illustrate this construction with the following example:

\begin{example1}
  \th\label{sec:background-2} Consider the semialgebraic set
  $S_1 := [-1, 1] \cup \{2\} \cup [3, 4]$. The natural generators of $S_1$ are
  $\Nat(S_1) = \{ g_1 : X+1, g_2 : (X-1)(X-2), g_3 : (X-2)(X-3), g_4 :
  -(X-4)\}$. 

  Consider a polynomial $f_1 : = (X + 2)(X-\frac{3}{2})(X-\frac{4}{3})$, which
  is non-negative over $S_1$. Based on the construction suggested by the
  inductive proof in ~\th\ref{sec:background}, $f$ is factored as $(X+2)$ and
  $(X-\frac{3}{2})(X-\frac{4}{3})$ and the certificates of each factor are
 computed.
  
  \begin{itemize}
  \item A certificate $X + 2$ is $1 + 1 \cdot g_1$

\item A certificate of $(X-\frac{3}{2})(X-\frac{4}{3})$ is
    $\frac{1}{6}((X-1)^2 + 1) + \frac{5}{6} \cdot g_2$
  \end{itemize}

  Hence, a certificate of $f$ is
  $\frac{1}{6}((X-1)^2 + 1) + \frac{5}{6} \cdot g_2 + \frac{1}{6}((X-1)^2 + 1)
  \cdot g_1 + \frac{5}{6} \cdot g_1 g_2$.

  Note that while the above certificates of $X+2$ and
  $(X-\frac{3}{2})(X-\frac{4}{3})$ belong to $\QM(\Nat(S_1))$ as well as
  $\PO(\Nat(S_1))$; however, the above certificate of $f$ is only for
  $\PO(\Nat(S_1))$ and not for $\QM(\Nat(S_1))$ since $(X+1)(X-1)(X-2)$ is not a
  generator in $\QM(\Nat(S_1)$.
\end{example1}

The above example illustrates the subtlety that a certificate of a polynomial
$f$ in a preordering structure $\PO(G)$ need not serve as a certificate in a
quadratic module $\QM(G)$, even though $\QM(G) = \PO(G)$. A certificate of a
product of factors is computed by taking the product of certificates of
individual factors in a preordering structure, whereas it need not work in the case
of generating a certificate in the associated quadratic module structure because
a preordering is closed under multiplication. Computing a certificate of a
polynomial in a quadratic module requires generating certificates of products of
natural generators as well.

 \section{Certificate of Products of Natural Generators in a Saturated Quadratic Module} \label{cert_prods}

As noted above, there exists a procedure to compute certificates of a
non-negative polynomial $f \in \qX$ over a compact semialgebraic set $S$ in
$\preord(\Nat(S))$. To obtain a certificate of $f$ in $\quadmod(\Nat(S))$, it is
enough to find certificates of the products of natural generators in
$\quadmod(\Nat(S))$. In this section, all five cases of products of natural
generators are considered one by one. By replacing $X$ with $-X$, these cases
reduce to the following three cases:

\begin{itemize}
\item Case 1: Product of a left linear factor $X - a_0$ and a right linear
  factor $-(X - b_k)$
\item Case 2: Product of a left linear factor $X - a_0$ and a quadratic factor
  $(X - b_i)(X - a_{i+1})$
\item Case 3: Product of two quadratic factors $(X - b_i)(X - a_{i+1})$ and
  $(X - b_j)(X - a_{j+1})$
\end{itemize}

For Case 1, we distinguish two cases. If $a_0 = b_k$, the certificate of the
product is computed using the following identity
$-(X-a_0)^2 = \frac{1}{4}(X - a_0 -1)^2 \cdot (X - a_0) + \frac{1}{4}(X-a_0+1)^2
\cdot (-(X-a_0))$. If $a_0 < b_k$, the certificates are obtained using the
following identity
$(X-a_0)(-(X-b_k)) = \frac{1}{b_k-a_0}((X-b_k)^2 \cdot (X-a_0) + (X-a_0)^2 \cdot
(-(X-b_k)))$.

Certificates in quadratic modules of products of natural generators sharing
common zeros of the form $(X-a_0), (X-a_0)(X-a_1)$ [Case 2] and
$(X-b_i)(X-a_{i+1}), (X-a_{i+1})(X-a_{j+1})$ [Case 3] are already discussed in
the proof of \cite[Theorem 2.5]{marshall2002}. In \cite[Theorem
3.5]{marshall2005}, the authors proved that
$\quadmod(\Nat(S)) = \preord(\Nat(S))$ when $S$ is compact by showing the
membership of the products of $\Nat(S)$ in $\quadmod(\Nat(S))$. Thus, to complete
the remaining cases, we turn the proof of \cite[Theorem 3.5]{marshall2005} into
an algorithm to compute certificates for Case 2 and Case 3 when the natural
generators do not share common zeros. We address each case separately in the
following theorems.

\begin{theorem}
  Let $g_i = X - a_0$ and $g_j = (X - b_i)(X - a_{i+1})$ be the left linear
  factor and a quadratic factor of $\Nat(S)$ such that $a_0 < b_i$. There exists
 a certificate of $g_i g_j$ in $\quadmod(g_i, g_j, -(X - b_k))$ and is
  computable.
\end{theorem}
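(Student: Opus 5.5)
The plan is to reduce the product $g_i g_j = (X-a_0)(X-b_i)(X-a_{i+1})$ to the Case 1 identity for a product of two linear natural generators. The reduction uses \th\ref{sec:results--1} with a carefully chosen partner polynomial. Write $f := X - a_0$, $g := (X-b_i)(X-a_{i+1})$ and $d := b_k$, so that $a_0 < b_i < a_{i+1} \le d$. Set
\[
h := g\cdot(-(X-d)) = -(X-b_i)(X-a_{i+1})(X-d).
\]
Since $a_0 < b_i$ and $a_0 < d$, the polynomials $f$ and $h$ are relatively prime.

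First I verify the hypotheses of \th\ref{sec:results--1} for the pair $f,h$. The product is $fh = -(X-a_0)(X-b_i)(X-a_{i+1})(X-d)$, a quartic with negative leading coefficient. Its nonnegativity set is $\semialgebraic(fh) = [a_0,b_i]\cup[a_{i+1},d]$, which is compact. This also covers the degenerate case $a_{i+1}=d$, where it is $[a_0,b_i]\cup\{d\}$.
\begin{itemize}
\item On $[a_0,b_i]$ we have $f \ge 0$, $g \ge 0$ and $d - X > 0$, so $h \ge 0$.
\item On $[a_{i+1},d]$ we have $f>0$, $g\ge 0$ and $d-X\ge 0$, so again $h\ge 0$.
\end{itemize}
Hence $f,h \in \Pos(\semialgebraic(fh))$. \th\ref{sec:results--1} then yields $\sigma,\tau\in\sum\rX^2$ with $1 = \sigma f + \tau h$, and these are computed by $\algSOSBasicLemma$.

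Next I multiply this identity by $fg$:
\[
fg = \sigma f^2 g + \tau f g h = \sigma f^2\cdot g + \tau g^2\cdot (X-a_0)(-(X-d)).
\]
The first summand is already a sum of squares times the generator $g_j$. For the second, I apply the Case 1 identity with $a_0<d$:
\[
(X-a_0)(-(X-d)) = \tfrac{1}{d-a_0}\bigl((X-d)^2 (X-a_0) + (X-a_0)^2(-(X-d))\bigr).
\]
Substituting gives the certificate
\[
g_i g_j = \sigma (X-a_0)^2\cdot g_j + \tfrac{\tau g^2 (X-d)^2}{d-a_0}\cdot g_i + \tfrac{\tau g^2 (X-a_0)^2}{d-a_0}\cdot(-(X-d)).
\]
Every multiplier is a sum of squares, so this lies in $\quadmod(g_i,g_j,-(X-b_k))$.

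Computability follows because $\sigma,\tau$ are produced by $\algSOSBasicLemma$ from polynomials with real algebraic coefficients, and the remaining operations are arithmetic on the algebraic numbers $a_0,b_i,a_{i+1},b_k$.

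The main obstacle is the choice of the partner polynomial for \th\ref{sec:results--1}. Pairing $f$ directly with $g$ fails, because $\semialgebraic(fg)$ contains $[a_{i+1},\infty)$ and is not compact. Multiplying $g$ by the right linear factor restores compactness, keeps $h$ nonnegative on $\semialgebraic(fh)$, and makes the leftover product collapse to $g^2$ times a Case 1 product. The step to double-check is the sign analysis of $h$ on both components, including the boundary case $a_{i+1}=b_k$.
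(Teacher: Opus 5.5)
Your proof is correct, and it is shorter than the paper's.

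\textbf{The paper's route.} It introduces an auxiliary $h=-(X-\beta)$ with rational $\beta>b_k$, so that $h$ is strictly positive on $S$. It then applies $\algSOSBasicLemma$ twice:
\begin{enumerate}
\item to the pair $g_j$, $g_ih$, multiplying the resulting identity by the triple product $g_ig_jh$ to certify $g_ig_jh$, with Case 1 used for $g_ih$;
\item to the pair $g_ig_j$, $h$, to divide $h$ back out, substituting the first certificate into the second.
\end{enumerate}

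\textbf{Your route.} You attach the right linear factor to the quadratic generator instead. You pair $X-a_0$ with $g_j\cdot(-(X-b_k))$ and multiply the identity only by $g_ig_j$. The first summand $\sigma g_i^2\cdot g_j$ is then already a generator term, and the second collapses to $g_j^2$ times a Case 1 product.

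\textbf{What each approach buys.} Your version needs a single call to $\algSOSBasicLemma$ and no auxiliary point $\beta$. It also avoids nested substitutions, which typically keeps the multipliers of lower degree. It handles the degenerate case $a_{i+1}=b_k$ without special care, because the only coprimality you need is between $X-a_0$ and $g_j\cdot(-(X-b_k))$. In the paper's route, it is precisely the strict inequality $\beta>b_k$ that keeps $g_ig_j$ and $h$ coprime in the second step. The paper's two-step version buys uniformity instead: it mirrors the proof of Theorem 3.5 of Kuhlmann--Marshall--Schwartz and is the same template reused for the product of two quadratic factors.
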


\begin{proof}
  Choose $\beta \in \mathbb{Q}$ such that $h := -(X - \beta)$ is
  strictly positive over $S$. We have $h \in \quadmod(\Nat(S))$
 since $h = (\beta - b_k) + 1 \cdot (-(X - b_k))$.

  Applying $\algSOSBasicLemma$ with $g_j$ and $g_ih$, we obtain
  $\sigma_1, \tau_1 \in \sum{\aX}^2$ such that $1 = \sigma_1 g_ih + \tau_1
  g_j$. Hence, multiplying $g_ig_jh$ by both sides of the previous equation, we
 obtain

  \begin{equation} 
    \label{sec:cert-prod-natur}
    \begin{split}
      g_i g_j h &= (g_i h)^{2} \sigma_1 \cdot g_j
                  + g_j^{2} \tau_1 \cdot g_i h\\
                &= (g_i h)^{2}
                  \sigma_1 \cdot g_j
                  +g_j^{2} \tau_1 \sigma_2 \cdot g_i
                  + g_j^{2} \tau_1 \tau_2 \cdot h
    \end{split}
  \end{equation}
  by Case 1 for some $\sigma_2, \tau_2 \in \sum{\aX}^{2}$. Notice that
  $\semialgebraic(g_ig_jh) = [a_0, b_i] \cup [a_{i+1}, \beta]$. Hence, we again use
 $\algSOSBasicLemma$ with $g_ig_j$ and $h$ to obtain sums of squares
  $\sigma_3, \tau_3$ such that $1 = \sigma_3 g_ig_j + \tau_3 h$. Hence,
  $g_ig_j = g_i^{2}g_j^{2} \sigma_3 + \tau_3 \cdot g_i g_j h$. Substituting
 the certificate of $g_i g_j h$ from the equation
 \eqref{sec:cert-prod-natur} in the previous equation, we obtain a certificate of $g_ig_j$ in
  $\quadmod(g_i, g_j, h) \subseteq \quadmod(g_i, g_j, -(X - b_k))$.
\end{proof}

\begin{theorem} 
  \th\label{sec:cert-prod-natur-3}
  Let $g_i = (X - b_i)(X - a_{i+1})$ and $g_j = (X - b_j)(X - a_{j+1})$ be two
  different quadratic factors of $\Nat(S)$ such that $a_{i+1} < b_j$. There
 exists a certificate of $g_i g_j$ in $\quadmod(X - a_0, g_i, g_j, -(X - b_k))$
  and is computable.
\end{theorem}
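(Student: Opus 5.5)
The plan is to mimic the proof of the preceding theorem (left linear factor times quadratic factor), with an auxiliary strictly positive linear factor playing the role of $h$. The goal is to reduce the product $g_i g_j$ to products already handled: a linear factor times a quadratic factor, or two linear factors. First choose $\beta \in \mathbb{Q}$ with $\beta > b_k$ and set $h := -(X-\beta)$. Then $h = (\beta - b_k) + 1\cdot(-(X-b_k)) \in \quadmod(\Nat(S))$, and $h$ is strictly positive on $S$. Symmetrically, choose $\alpha \in \mathbb{Q}$ with $\alpha < a_0$ and set $\ell := X-\alpha = (a_0-\alpha) + 1\cdot(X-a_0)$.

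The key step is to apply $\algSOSBasicLemma$ (\th\ref{sec:results--1}) to $g_j$ and $g_i h$. These are relatively prime, since the roots $b_j, a_{j+1}$ of $g_j$ lie to the right of $a_{i+1}$, the roots of $g_i h$ are $b_i, a_{i+1}, \beta$, and $\beta > b_k \ge a_{j+1}$. The set $\semialgebraic(g_i g_j h)$ is compact because the leading coefficient of $g_ig_jh$ is negative and its degree is odd, so the set is bounded above. Its lower unboundedness must be checked: for $X \to -\infty$ the product $g_ig_jh$ tends to $+\infty$, so the set is \emph{not} compact. I therefore multiply by $\ell$ as well and apply the lemma to $g_j$ and $\ell g_i h$; the degree is now even with negative leading coefficient, so the set is compact. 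Both factors must be non-negative on $\semialgebraic(\ell g_i g_j h)$. Since the roots interlace in the order $\alpha < b_i < a_{i+1} < b_j < a_{j+1} < \beta$, with $g_i,g_j$ each negative only between their own roots, the sign pattern can be checked directly. This yields $1 = \sigma_1 \ell g_i h + \tau_1 g_j$ with $\sigma_1, \tau_1$ sums of squares.

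Multiplying by $\ell g_i g_j h$ gives
\[
\ell g_i g_j h = (\ell g_i h)^2\sigma_1\cdot g_j + g_j^2\tau_1\cdot(\ell h)g_i .
\]
The term $\ell h\, g_i$ is handled as follows. The product $\ell h$ belongs to $\quadmod(X-a_0,-(X-b_k))$ by Case 1, after expanding $\ell$ and $h$ as constants plus the linear generators. The products (linear factor)$\cdot g_i$ are then certified by the previous theorem and its mirror image under $X\mapsto -X$, and the product $(X-a_0)(-(X-b_k))g_i$ is reduced the same way. Finally, the auxiliary factor $\ell h$ is removed as in the previous proof. I apply $\algSOSBasicLemma$ to $g_ig_j$ and $\ell h$, whose product has a compact, correctly signed semialgebraic set, to get $1=\sigma_3 g_ig_j+\tau_3\ell h$. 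This gives $g_ig_j = (g_ig_j)^2\sigma_3 + \tau_3\cdot \ell g_i g_j h$, and substituting the certificate above completes the argument. Every step is effective: $\algSOSBasicLemma$ is constructive and the roots involved are real algebraic.

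The main obstacle is the middle step. It requires a certificate of a triple product such as $(X-a_0)(-(X-b_k))g_i$, which the earlier theorems do not directly provide. I would first try to avoid the triple product altogether by applying $\algSOSBasicLemma$ to $g_j$ and $\ell g_i$ only. If this fails, I would fold a further Basic-Lemma splitting into the reduction, reusing the trick of multiplying by a strictly positive auxiliary factor and then dividing it out. Checking the compactness and non-negativity hypotheses of \th\ref{sec:results--1} at each application is the delicate bookkeeping.
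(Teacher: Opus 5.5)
Your proposal is correct and follows the paper's proof essentially step for step: the paper also takes $h_1 = X-\alpha$ and $h_2 = -(X-\beta)$ strictly positive on $S$. It applies $\algSOSBasicLemma$ first to $h_1 g_i h_2$ and $g_j$, and then to $g_i g_j$ and $h_1 h_2$ to divide out the auxiliary factor, exactly as you do.

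The triple product you flag as the main obstacle is not a real one. Your own reduction already works: Case 1 gives $\ell h = s\,\ell + t\,h$ with $s,t$ sums of squares, a sum of squares times a member of a quadratic module stays in it, and the previous theorem and its mirror image then finish. The paper handles it more directly by repeating the derivation of \eqref{sec:cert-prod-natur}. It applies $\algSOSBasicLemma$ to $g_i$ and $\ell h$, whose product has the compact set $[\alpha,b_i]\cup[a_{i+1},\beta]$, and then splits $\ell h$ by Case 1. By contrast, your fallback using only $g_j$ and $\ell g_i$ would fail, since $\semialgebraic(\ell g_i g_j)$ is unbounded to the right.
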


\begin{proof}
  Choose $\alpha, \beta \in \mathbb{Q}$ such that $h_1 := X - \alpha$ and
  $h_2 := -(X - \beta)$ are strictly positive over $S$. Both $h_1$ and $h_2$ are
  members of $\quadmod(\Nat(S))$ since
  $h_1 = (a_0 - \alpha) + 1 \cdot (X - a_0)$ and
  $h_2 = (\beta - b_k) + 1 \cdot (-(X - b_k))$.

  We apply $\algSOSBasicLemma$ with $h_1g_ih_2$ and $g_j$ to obtain
  $\sigma_1, \tau_1 \in \sum{\aX}^{2}$ such that
  $1 = \sigma_1 h_1g_ih_2 + \tau_1 g_j$. Hence, multiplying $h_1 g_i g_j h_2$ by
 both sides of the previous equation, we obtain

  \begin{equation} 
    \label{sec:cert-prod-natur-2}
    \begin{split}
      h_1g_ig_jh_2
      &= \sigma_1 (h_1 g_i h_2)^{2} \cdot g_j
        + \tau_1 g_j^{2} \cdot h_1 g_i h_2\\
      &= \sigma_1 (h_1 g_i h_2)^{2} \cdot g_j
        + \tau_1 g_j^{2} s_1 \cdot h_1\\
      &+ \tau_1 g_j^{2} s_2 \cdot g_i
        + \tau_1 g_j^{2} s_3 \cdot h_2
    \end{split}
  \end{equation}

  where the last equation is obtained from a similar derivation of the equation
 \eqref{sec:cert-prod-natur} for the polynomial $h_1g_ih_2$. Finally, we apply
  $\algSOSBasicLemma$ with $g_ig_j$ and $h_1h_2$ to obtain
  $\sigma_2, \tau_2 \in \sum{\aX}^{2}$ such that
  $1 = \sigma_2 g_ig_j + \tau_2 h_1 h_2$. We obtain a certificate of $g_ig_j$ by
  multiplying $g_ig_j$ by both sides of the previous equation and substitute the
  certificate of $h_1g_ig_jh_2$ in
  $\quadmod(h_1, g_i, g_j, h_2) \subseteq \quadmod(X - a_0, g_i, g_j, -(X -
  b_k))$.
\end{proof}

\begin{example1}
  \th\label{sec:cert-prod-natur-1}
 Consider $G = \{g_1, g_2, g_3, g_4\}$ where $g_1 = X + 3$, $g_2 = (X +
  2)(X + 1)$, $g_3 = (X - 1)(X - 2)$ and $g_4 = -(X - 3)$. We want to compute a
  certificate of $g_2 g_3$ in $\quadmod(G)$. We apply the procedure in the proof
  of \th\ref{sec:cert-prod-natur-3} for the latter:

  First, we compute the sums of squares $\sigma_1, \tau_1$ for the identity
  $1 = \sigma_1 g_1 g_2 g_4 + \tau_1 g_3$. Using the Algorithm
  $\algSOSBasicLemma$ with inputs $g_1 g_2 g_4$ and $g_3$ we obtain: 

  {\scriptsize
    $\sigma_1 = \frac{2240663747533}{63928589899840}(-\frac{2838880455867}{35850619960528}X^{3}-\frac{5070279942435}{35850619960528}X^{2}+X-\frac{1954774155034}{6721991242599})^2\\
    +\frac{4806615684894157212658049}{322295566093820521913463120}(\frac{1965594740382545239529541}{307623403833226061610115136}X^{3}-\frac{17069241747242628077096889}{153811701916613030805057568}X^{2}+1)^2\\
    +\frac{45055618345457224538985176048524046579}{39331860854494354284103736102538215956480}(-\frac{60091193011842907178590284062434677153}{180222473381828898155940704194096186316}X^{3}+X^{2})^2\\
    +\frac{1608598652282431290179570962045004567738313138857}{5898940718879626314976720334559556906740992214317793280}X^6$

    $\tau_1 =
    \frac{3408243953093221}{8182859507179520}(-\frac{81330974687725}{3408243953093221}X^{4}-\frac{3396784247798305}{30674195577838989}X^{3}+\frac{630553521671491}{3718084312465332}X^{2}+X+\frac{1844281814531328}{3408243953093221})^2+\frac{5913932904047613728065258471}{27235528744488255343321623080}(\frac{10740479987679269897853128335}{1703212676365712753682794439648}X^{4}-\frac{260955860135590366578430046149}{72670407524937077490465896091648}X^{3}-\frac{2379845916734658371072298110809}{12111734587489512915077649348608}X^{2}+1)^2+\\\frac{1085456813189269522216530332633425792209444565}{118930347021208299135706252768780883552781729792}(-\frac{1602913928501451029422394636199921184431802688}{16281852197839042833247954989501386883141668475}X^{4}\\
    +\frac{9671074188052417654083129659829142388068001791}{97691113187034256999487729937008321298850010850}X^{3}+X^{2})^2\\+\frac{81483365955960571648187500735366010328103022996482467469550969}{690675253323385404380206338600288558068186323163031057031692288000}(\\\frac{25734754223651000818159963525358900260248917738032343012730648}{81483365955960571648187500735366010328103022996482467469550969}X^{4}+X^{3})^2\\+\frac{13317839023340912696096944737535918990630326698700788196454800010845723911}{6751015224870915008960015389274073922308658036410110816951164361661316488668160}
    X^{8}$
  } 

  Then, using these polynomials, we obtain a certificate of $g_1 g_2 g_3 g_4$ in
  $\quadmod(g_3, g_1 g_2 g_4)$ as
  $g_1 g_2 g_3 g_4 = \sigma_1 (g_1 g_2 g_4)^{2} \cdot g_3 + \tau_1 g_3^{2} \cdot
  g_1 g_2 g_4$. In order to lift this certificate to $\quadmod(G)$, we compute
 two sums of squares $\sigma_2, \tau_2$ using $\algSOSBasicLemma$ with inputs
  $g_1 g_4$ and $g_2$ such that $1 = \sigma_2 g_1 g_4 + \tau_2 g_2$. We obtain:

  $\sigma_2 =
  \frac{5077893}{74983750}(-\frac{1210241}{3385262}X+1)^2
  +\frac{51983202049}{253839639492500}X^2$

  $\tau_2 =
  \frac{29282713}{149967500}(-\frac{22495125}{117130852}X+1)^2
  +\frac{116498980322719}{70263284189240000}X^2$ 

  Hence,
  $g_1 g_2 g_4 = \sigma_2 (g_1 g_4)^{2} \cdot g_2 + \tau_2 g_2^2 \cdot g_1 g_4 =
  \sigma_2 (g_1 g_4)^{2} \cdot g_2 + \frac{1}{6} \tau_2 (g_4 g_2)^2 \cdot g_1 +
  \frac{1}{6} \tau_2 (g_1 g_2)^2 \cdot g_4$. The last equation follows from
  applying Case 1 in this section. Now, we compute the sums of squares
  $\sigma_3, \tau_3$ for the identity $1 = \sigma_3 g_1 g_4 + \tau_3 g_2
  g_3$. Using the Algorithm $\algSOSBasicLemma$ with inputs $g_1 g_4$ and
  $g_2 g_3$ we obtain $\sigma_3 = \frac{1}{40}X^2 + \frac{1}{10}, \tau_3 =
  \frac{1}{40}$ 

  Thus, multiplying $g_2 g_3$ to the identity we obtain
  $g_2 g_3 = \sigma_3 \cdot g_1 g_2 g_3 g_4 + \tau_3 (g_2 g_3)^{2}$. Subtituting
  the certificate of $g_1 g_2 g_3 g_4$ obtained before in the last equation we
  obtain a certificate of $g_2 g_3$ in $\quadmod(G)$.
\end{example1}

 \section{Certificates of Natural Generators in terms of a set of generators $G$} \label{repr_nat_gens}

Although natural generators belong to any univariate saturated quadratic
module, these might not be included in a given set of generators. This section
addresses the problem of computing certificates of natural generators using
elements of the original set of generators.

Factors of the natural generators of a saturated quadratic modules belong to the
factors of the original generators $G$. This follows as a consequence of
\th\ref{sec:cert-elem-ponats-2}. From this observation, we develop a method
based on the Basic Lemma (\th\ref{sec:results-} and \th\ref{sec:results--1}) to
compute certificates of non-negative factors from the given generators. This
method computes certificates for the left natural factor $X - a_0$, right natural
factor $-(X - b_k)$, and quadratic factors $(X - b_i)(X - a_{i+1})$ that belong
to a single generator. When the factors of quadratic factors belong to two
different generators in $G$, we apply a procedure to compute a certificate
involving these two generators.

Motivated by the above construction, in this section we describe a procedure
to compute certificates non-negative factors from the original generators. Then,
we develop a procedure to compute certificates of a quadratic factor
$(X - b_i)(X - a_{i+1})$ using the certificates of the non-negative factors
$(X - b_i)(X - c_{i_1})^{m_{i_1}}$ and $(X - c_{i_2})^{m_{i_2}}(X - a_{i+1})$
from two generators $g_1, g_2 \in G$. Finally, we use these constructions to
provide an algorithm to compute certificates of natural generators in terms of a
given set of generators $G$.
\vspace{-6mm}

\subsection{Constructing certificates of Positive Polynomials}
\label{sec:results--10}

The algorithm proposed in \cite{weifeng2025} to compute the certificate of a
strictly positive polynomial $f$ in Archimedean quadratic modules $\quadmod(G)$
is used whenever certificates are needed for such polynomials. It is based on
a modification and extension of the following result:

\begin{lemma}\cite[Lemma 7][Page 4]{10.1007/s10957-012-0261-9}
  \th\label{lemma_7_averkov}
  Let $G = \{g_1, \dots, g_s\}$, $S := \semialgebraic(G)$, and let $f \in \aX$
  be strictly positive on $S$. Let $B$ be a compact subset of
  $\mathbb{R}^{d}$. Then there exists $g \in \quadmod(G)$ such that $f - g$ is
  strictly positive on $B$.
\end{lemma}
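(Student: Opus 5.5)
The plan is the classical "penalty" construction: build $g\in\quadmod(G)$ that is only slightly positive near $S$ and very negative on the part of $B$ away from $S$. Here $B\subseteq\mathbb{R}$ is compact, since we are in the univariate setting.

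First, normalize the generators. Since $B$ is compact, pick $N>0$ with $g_i\le N$ on $B$ for all $i$. Replacing each $g_i$ by $g_i/N$ changes neither $\quadmod(G)$ nor $S$, so assume $g_i\le 1$ on $B$.

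Next, separate $B$ into a part near $S$ and a part far from it. Let $m:=\min_{S\cap B} f>0$; if $S\cap B=\emptyset$, the near part below is empty and the condition involving $m$ is vacuous. Choose an open set $U\supseteq S\cap B$ on which $f>m/2$; this exists by continuity of $f$ and compactness of $S\cap B$. The set $K:=B\setminus U$ is compact and disjoint from $S$. Hence the continuous function $\min_i g_i$ is negative on $K$, and there is $\delta>0$ with $\min_i g_i\le-\delta$ at every point of $K$.

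For parameters $\lambda>0$ and $k\in\mathbb{N}$, consider
\begin{equation*}
  g:=\lambda\sum_{i=1}^{s}(1-g_i)^{2k}\, g_i \in\quadmod(G).
\end{equation*}
Membership is immediate, since each $\lambda(1-g_i)^{2k}$ is a square times a nonnegative constant. The elementary estimate needed is $t(1-t)^{2k}\le \frac{1}{2k+1}$ for $t\in[0,1]$. Any term with $g_i(x)<0$ is nonpositive. Hence at every $x\in B$ we have $g(x)\le \lambda s/(2k+1)$.

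Now consider the point $x$ in each part of $B$.
\begin{itemize}
\item On $U\cap B$ we get $f-g> m/2-\lambda s/(2k+1)$.
\item On $K$, pick an index $i$ with $g_i(x)\le-\delta$. That term is at most $-\lambda\delta(1+\delta)^{2k}\le-\lambda\delta$, and the remaining terms total at most $\lambda s/(2k+1)$. Hence $f-g\ge -\max_B|f|+\lambda\delta-\lambda s/(2k+1)$.
\end{itemize}
So first fix $\lambda$ with $\lambda\delta\ge \max_B|f|+1$. Then choose $k$ large enough that $\lambda s/(2k+1)<\min(m/2,1)$. Both bounds are then strictly positive, so $f-g>0$ on $B$.

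The main obstacle is the order of quantifiers. The penalty must dominate on $K$ while the positive leakage near $S$ stays below $m/2$. The choice of $(1-g_i)^{2k}g_i$ resolves this: increasing $k$ shrinks the leakage without weakening the penalty, because $(1+\delta)^{2k}\ge1$. This lets us choose $\lambda$ before $k$.

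Computability is clear in the present setting. The quantities $N$, $m$, and $\delta$ (the latter via a computable $U$, e.g. a union of small intervals around the real-algebraic endpoints of $S$) can be taken rational, so $\lambda$ and $k$ are explicitly computable.
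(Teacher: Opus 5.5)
Your argument is correct. It is essentially the penalty argument the paper attributes to Averkov: split $B$ into a region near $S$ where $f$ has a positive lower bound and a compact remainder on which some $g_i$ is uniformly negative, then pick $g\in\quadmod(G)$ that is small on the first region and dominates on the second. The one difference is that you cut out the neighbourhood using $f$ (via $U=\{f>m/2\}$) rather than the enlarged set $S_1=\{x : g_i(x)+2\epsilon\ge 0\}$; this makes the case $S\cap B=\emptyset$ (e.g.\ $f=-1$) automatic, whereas the paper needs a separate proposition choosing $\epsilon$ so that $S_1$ is empty.
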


The algorithm for the univariate case combines two ideas: first, it computes $g$
in $\quadmod(G)$ in order to avoid the inclusion of a polynomial of the form
$N - X^{2}$ in the set of generators; second, it uses a generalization of Lemma
7 in \cite{10.1007/s10957-012-0261-9} which produces a polynomial of the form
$f - \tilde{g}$ that is strictly positive over $\mathbb{R}$ where
$\tilde{g} \in \quadmod(G)$. The latter is a sums of squares in the univariate
case. We state the generalization of the Averkov lemma used in this algorithm.

\begin{theorem} \cite[Proposition 5.1]{weifeng2025} Let $f, g \in \aX$ be two
  polynomials with $\semialgebraic(g)$ bounded and $ f > 0 $ on
  $\semialgebraic(g)$. Then there exists a sum of squares $\delta$ such that
  $f - \delta \cdot g$ is strictly positive on $\semialgebraic(g)$ and has a
  lower bound over $\mathbb{R}$.
\end{theorem}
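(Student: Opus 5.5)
The plan is to take $\delta$ to be a small positive multiple of one explicit sum of squares, $\delta := \lambda\,(1 + X^{2m})$ with $\lambda > 0$ rational and $m \in \mathbb{N}$. The exponent $m$ makes $-\delta g$ dominate $f$ at infinity, which gives the lower bound over $\mathbb{R}$. The small factor $\lambda$ keeps $\delta g$ below $\min f$ on the compact set $\semialgebraic(g)$.

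First I dispose of the shape of $g$. If $\semialgebraic(g) = \emptyset$, then $g < 0$ on all of $\mathbb{R}$, so $g$ is either a negative constant or of even degree with negative leading coefficient. If $\semialgebraic(g) \neq \emptyset$, then $g$ cannot be a constant, since a nonnegative constant gives $\semialgebraic(g) = \mathbb{R}$, which is unbounded. It also cannot have odd degree, since then $g \geq 0$ on a half-line. It cannot have positive leading coefficient either, since then $g \geq 0$ near $\pm\infty$. So in every case $\deg g$ is even, possibly $0$, and $\operatorname{lc}(g) < 0$. Now choose $m$ with $2m + \deg g > \deg f$ and $2m \geq 2$. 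Then for any $\lambda > 0$, the polynomial $f - \lambda(1+X^{2m}) g$ has even degree $2m + \deg g$ and leading coefficient $-\lambda\operatorname{lc}(g) > 0$. Hence it tends to $+\infty$ at both ends and is bounded below over $\mathbb{R}$.

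Next I choose $\lambda$ to secure strict positivity on $\semialgebraic(g)$. If $\semialgebraic(g)$ is empty this is vacuous, and $\lambda = 1$ works. Otherwise $\semialgebraic(g)$ is compact, being closed and bounded, and $f > 0$ on it, so $\varepsilon := \min_{\semialgebraic(g)} f > 0$. Fix $R \in \mathbb{Q}$ with $\semialgebraic(g) \subseteq [-R, R]$, and let $M \geq \max_{\semialgebraic(g)} g \geq 0$ be rational. For $x \in \semialgebraic(g)$ we have $0 \leq g(x) \leq M$, and therefore
\[
f(x) - \delta(x) g(x) \geq \varepsilon - \lambda (1 + R^{2m}) M .
\]
Choosing a rational $\lambda$ with $0 < \lambda < \varepsilon / \bigl((1+R^{2m})M + 1\bigr)$ makes the right-hand side strictly positive. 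Then $\delta = \lambda(1+X^{2m}) \in \sum{\aX}^2$ has all the required properties.

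The only delicate step is effectiveness. The existence argument above is elementary, but an algorithm also needs a computable positive lower bound for $\varepsilon$ and computable upper bounds for $R$ and $M$. I would obtain these from the real-algebraic endpoints of $\semialgebraic(g)$, which come from isolating the real roots of $g$. On each interval, a lower bound for $f$ comes from the minimum of $f$ over the critical points of $f$ and the endpoints, all of which are real algebraic numbers. A Cauchy-type root bound then supplies $R$ and a crude bound for $M$. Since only inequalities are required, it suffices to replace each quantity by a rational under- or over-approximation.
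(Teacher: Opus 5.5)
Your proof is correct. The shape analysis is right: a bounded $\semialgebraic(g)$, empty or not, forces $g$ to have even degree and negative leading coefficient. The choice $2m+\deg g>\deg f$ then makes $f-\lambda(1+X^{2m})g$ of even degree with leading coefficient $-\lambda\,\mathrm{lc}(g)>0$ for every $\lambda>0$, so it is bounded below. The estimate $0\le\delta g\le\lambda(1+R^{2m})M$ on $\semialgebraic(g)$, with the $+1$ covering $M=0$, gives strict positivity there. Finally, $\lambda(1+X^{2m})=(\sqrt{\lambda})^2+(\sqrt{\lambda}X^m)^2$ is a sum of squares in $\aX$.

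The paper gives no proof of this statement; it imports it from Shang et al., so there is no in-paper argument to compare against. The paper does show how the result is used. In the worked certificates of $-1$, the step of $\algPutinar$ that produces a globally bounded-below $\tilde f$ is apparently skipped because $\tilde f=-1$ already is bounded below. Positivity on the rest of $\mathbb{R}$ then comes from a separate Averkov-type multiplier $\bigl(\frac{g-\gamma}{\gamma+\epsilon}\bigr)^{2N}$. This is a power of an affine function of $g$: it is at most $1$ where $-\epsilon\le g\le 2\gamma+\epsilon$ and becomes large where $g<-\epsilon$. Your $\delta=\lambda(1+X^{2m})$ supplies exactly the input that step needs, using only degree information and crude bounds, and it is plainly effective. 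Its limitation does not matter for the statement as posed. Because it is a function of $X$ rather than of $g$, it is small on all of $[-R,R]$, gaps included. So it yields nothing off $\semialgebraic(g)$ beyond a lower bound: in a bounded gap between components where $f<0$, $f-\delta g$ can stay negative. Its bounded-below conclusion also rests entirely on the univariate fact that bounded $\semialgebraic(g)$ forces even degree and negative leading coefficient.
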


We use the algorithm $\algPutinar$ from \cite{weifeng2025} to compute
certificates of strictly positive polynomials over bounded semialgebraic
sets. This algorithm does not introduce new constants into the base field
$\mathbb{A}$, as intermediate polynomials and parameters use rational
numbers.

 \subsection{Compactification of polynomials over Archimedean quadratic modules}
\label{compactification}

In this section, we introduce the concept of \textbf{compactification} of a
polynomial with respect to the generators $G$ of an Archimedean quadratic module
$\quadmod(G)$ \footnote{This quadratic module does not need to be saturated}. In
simple terms, given a polynomial $f \in \quadmod(G)$, the compactified
polynomial $f$, denoted $\compact_G(f)$, is a polynomial such that
$\semialgebraic(\compact_G(f))$ is bounded and $\compact_G(f)$ satisfies the
same order and sign conditions at the roots of $f$. To accomplish the latter, we
find a polynomial $h$ such that $f h$ satisfies the above properties of
$\compact_G(f)$; consequently $\compact_G(f) \in \quadmod(G)$. We remove the
subscript $G$ in $\compact_G$ whenever it is clear from the context the set of
generators being considered.

If $\semialgebraic(f)$ is bounded, then $\compact(f) = f$; otherwise, there are
two cases to consider: the semialgebraic set of the polynomial $f$ is unbounded
to the right (resp. left), i.e.,
$\semialgebraic(f) = \bigcup_{i=0}^{k}{[a_i, b_i]}$, with $a_0 \in \mathbb{A}$,
and $b_k = \infty$ (resp. $a_0 = -\infty$, and $b_k \in \mathbb{A}$), and when
the semialgebraic set $f$ is unbounded from both sides, i.e., $a_0 = -\infty$
and $b_k = \infty$. For the first case, the polynomial $h$ is a monomial of the
form $-(X - a)$ (resp. $X - a$) when $\semialgebraic(f)$ is unbounded to the
right (resp. left); for the second case, the polynomial $h$ is of the form
$-(X - a)(X - b)$. In the following, we introduce algorithms to compute the
polynomial $h$ and, given certificates of $f$ in $\quadmod(G)$, we can compute a
certificate of $\compact(f)$ in $\quadmod(G)$.

\subsubsection{Auxiliary polynomial $h$ when $\semialgebraic(f)$ is unbounded to
  the right}

The intuitive idea is to lift the polynomial $-f$ by a positive constant
$\epsilon$ such that $\epsilon - f$ only has one real root. The additional roots
of $\epsilon - f$ are complex conjugates, so these are sums of squares for the
multiplier $\tau$. In order to guaranty that the properties that $\epsilon - f$ have
one real root, the positive constant $\epsilon$ is chosen over an interval $I$
such that $f$ increases monotonically outside of $I$; this is done using the
fact that whenever the derivative of a polynomial is positive, the original
polynomial increases monotonically. To compute this interval $I$ we use
bounds encoding the left and right end points of the semialgebraic set
of the generators and the Cauchy bound \cite[Lemma 10.2, page
354]{10.5555/1197095} of the derivative of $f$ to guaranty the
monotonically increasing condition.

\begin{algorithm}
  \caption{Computing auxiliary polynomial $h$ when $\semialgebraic(f)$ is
    unbounded to the right}
  \label{sec:addit-constr-from-4}
  \kwInput{$f \in \aX, bound \in \mathbb{A}$}
  \kwOutput{$\sigma \in \mathbb{A}^{+}$ and $\tau, h \in \aX$}
  \kwRequires{$\deg(f)$ is odd and the leading coefficient of $f$ is positive}
  \kwEnsures{$1 = \sigma f + \tau h$, $\sigma, \tau \in \sum{\aX}^2$}  
  Let $M$ be the Cauchy bound of the derivative of $f$\;
  $I := [-M, \max(bound, M)]$\;
  $\epsilon_1 := \max_{x \in I}(f)$\;
  \tcp{There is only one real root of $\epsilon_1 - f$ of odd degree}
  $\alpha := \text{real root of } \epsilon_1 - f$ of odd degree\;
  $\sigma := \frac{1}{\epsilon_1}$\;
  $h := -(X - \alpha)$\;
  $\tau := \sigma \cdot quotient(\epsilon_1 - f, h)$\;
  \Return $\sigma, \tau, h$\;   
\end{algorithm}

\begin{example1}
  \th\label{sec:addit-constr-from-2} Let us consider
  $g_1 = (X + 1)(X - 2)(X - 4)$ and $g_2 = -(X - 1)$. The semialgebraic set of
  $G = \{g_1, g_2\}$ is the interval $[-1, 1]$ so it is bounded, however
  $\semialgebraic(g_1)$ is $[-1, 2] \cup [4, \infty)$, which it is unbounded to
  the right. We use Algorithm \ref{sec:addit-constr-from-4} to compute the
  polynomial $h$ such that $\compact(f) = fh$. The largest end point of
  $\semialgebraic(G)$ of $1$, which is used as the bound for Algorithm
  \ref{sec:addit-constr-from-4}.

  The Cauchy bound of the derivative of $g$ is $5$, therefore we compute the maximum
  value of $g$ between the interval $[-5, 5]$ to be $\epsilon_1 = 18$. Then we
  factorize $\epsilon_1 - g_1 = (X^{2} + 2)(X - 5)$. Hence, the real root
  $\alpha$ of $\epsilon_1 - g_1$ of odd degree is $5$.

  Following the algorithm, $\sigma := \frac{1}{18}$ is a positive
  constant. We have $h = -(X - 5)$, and finally $\tau
  = \frac{1}{18} (X^{2} + 2)$.
\end{example1}

\subsubsection{Auxiliary polynomial $h$ when $\semialgebraic(f)$ is unbounded in
  both sides}

Similarly to the previous case, we lift the polynomial $-f$ by a positive
constant. Since $f$ is unbounded to both sides, the polynomial $\epsilon - f$
will have only two real roots, and the rest of its roots are complex conjugates.

\begin{algorithm}
  \caption{Computing auxiliary polynomial $h$ when $\semialgebraic(f)$ is
    unbounded from both sides}
  \label{sec:addit-constr-from-5}
  \kwInput{$f \in \aX, lowerbound, upperbound \in \mathbb{A}$}
  \kwOutput{$\sigma \in \mathbb{A}^{+}$ and $\tau, h \in \aX$}
  \kwRequires{$\deg(f)$ is even and leading coefficient of $f$ is positive}
  \kwEnsures{$1 = \sigma f + \tau h$, $\sigma, \tau \in \sum{\aX}^2$}

  \uIf{$f$ is a positive constant}{
    \Return $\frac{1}{f}$, 0, 0\;
  }
  \Else{
    Let $M$ be the Cauchy bound of the derivative of $f$\;
    $I := [\min(lowerbound, -M), \max(upperbound, M)]$\;
    $\epsilon_1 := \max_{x \in I}(f)$\; 
    $\sigma := \frac{1}{\epsilon_1}$\;
    \tcp{There should be only two real roots of $\epsilon_1 - f$ of
      odd degree}
    $\alpha := \set{x \mid x \in \mathbb{A}, \epsilon_1 - f(x) = 0}$\;
    $h := -\prod_{r \in \alpha}^{}{X - r}$\;
    $\tau := \sigma \cdot quotient(\epsilon_1 - f, h)$\;
    \Return $\sigma, \tau, h$\;  
  }
\end{algorithm}

\subsubsection{Compactification algorithm}
\label{sec:comp-algor-2}

We assume that $\semialgebraic(G) = \bigcup_{i=0}^{k}{[a_i, b_i]}$. Using the
algorithms above to compute the auxiliary polynomial $h$ and the sums of squares
$\sigma, \tau$ for the identity $1 = \sigma f + \tau h$, we set the values for
the lowerbounds and upperbounds with $a_0$ and $b_k$ respectively to force $h$
is strictly positive over $\quadmod(G)$. Using the algorithm $\algPutinar$, we
compute certificates of $h$ in $\quadmod(G)$. Finally, we use the following
algorithm to compute the certificates of $\compact(f)$ in $\quadmod(G)$.

\begin{algorithm}
  \DontPrintSemicolon
  \caption{Compactification algorithm}
  \label{sec:comp-algor}
  \kwInput{$f \in \aX$, $G = \{g_1, \dots, g_s\} \subseteq \aX$}
  \kwOutput{$\tau_0, \tau_1, \dots, \tau_s \in \aX$}
  \kwRequires{$f = \sigma_0 + \sum_{i=1}^{s}{\sigma_i \cdot g_i} \in \quadmod(G)$
      and $\semialgebraic(G) = \bigcup_{i=0}^{k}{[a_i, b_i]}$}
  \kwEnsures{$\compact_G(f) = \tau_0 + \sum_{i=1}^{s}{\tau_i \cdot g_i}$,
    $\tau_i \in \sum{\aX}^2$ and $\semialgebraic(\compact_G(f))$ is bounded}
  \uIf{$\semialgebraic(f)$ is bounded}{
    \Return $\sigma_0, \sigma_1, \dots, \sigma_s$\;
  }
  \uIf{$\semialgebraic(f)$ is unbounded to the right}{
    Let $\sigma, \tau, h$ be the output of Algorithm
    \ref{sec:addit-constr-from-4} with input $f$\; 
  }
  \uIf{$\semialgebraic(f)$ is unbounded to the left}{
    Let $\sigma, \tau, h$ be the output of Algorithm
    \ref{sec:addit-constr-from-4} with input $f[X/-X]$\; 
    $\sigma := \sigma[X/-X], \tau := \tau[X/-X], h := h[X/-X]$\;
  }
  \uIf{$\semialgebraic(f)$ is unbounded from both sides}{
    Let $\sigma, \tau, h$ be the output of Algorithm
    \ref{sec:addit-constr-from-5} with input $f$\; 
  }
  Let $\rho_0, \dots, \rho_s$ be the certificate obtained from Algorithm
  $\algPutinar$ with input $G$, $h$\;
  \Return $\sigma f^2 \rho_0 + \tau h^{2} \sigma_0, \dots,
  \sigma f^2 \rho_s + \tau h^{2} \sigma_s$\; 
\end{algorithm}

Since both Algorithm \ref{sec:addit-constr-from-4} and Algorithm
\ref{sec:addit-constr-from-5} satisfy $1 = \sigma f + \tau h$, then multiplying
$\compact(f) = fh$ on both sides of the previous equation gives 
$f h = \sigma f^{2} \cdot h + \tau h^{2} \cdot f$. Since $h$ is strictly
positive over $\semialgebraic(G)$, we can use algorithm $\algPutinar$ to compute
a certificate of it in $\quadmod(G)$, hence the result.

\begin{example1}
  \th\label{sec:comp-algor-1} We will continue with
  \th\ref{sec:addit-constr-from-2}. We need to compute a certificate of $h$ in
  $\quadmod(g_1, g_2)$. Using $\algPutinar$, we obtain
  $h = \frac{78266}{78991} ((X - \frac{19929}{78266})^2 +
  \frac{2949332587}{6125566756}) + \frac{39133}{78991} \cdot g_1 +
  \frac{39133}{78991} (X - 1)^{2} \cdot g_2$. Notice that there is a simpler
  certificate of $h$ in $\quadmod(g_1, g_2)$ using $h = 4 + g_2$ which we will
  use in this example. Hence, a certificate of the compactified generator $g_1$
  is
  $\compact_{\{g_1, g_2\}}(g_1) = 4 \sigma g_1^2 + h^2 \tau \cdot g_1 + \sigma
  g_1^2 \cdot g_2$. We can verify
 $\semialgebraic(\compact_{\{g_1, g_2\}}(g_1)) = [-1, 2] \cup [4, 5]$.
\end{example1}

\begin{example1}
  \th\label{sec:comp-algor-3} The following example shows that Algorithm
  \ref{sec:comp-algor} can effectively handle the case when the associated
  semialgebraic set of the generators is empty. Consider the generators
  $G = \{g_1, g_2, g_3\}$ where $g_1 := X - 1$,
  $g_2 := -(X + 2) (X + 1) (X - 2) (X - 3)$ and $g_3 := (X + 1) (X - 4)$. We
  want to compactify the generator $g_1$. Since $\semialgebraic(g_1)$ is
  unbounded to the right, we use Algorithm \ref{sec:addit-constr-from-4} with
  inputs $g_1$ and $1$ as bound to compute sums of squares $\sigma, \tau$ and an
  auxiliary polynomial $h$. The output of this step is $\sigma = 1, \tau = 1$ and
  $h = -(X - 2)$.  Then, we compute the certificate of $h$ in
  $\{g_1, g_2, g_3\}$ as any polynomial is strictly positive over the empty
  set. For the latter, we use the algorithm $\algPutinar$ to obtain
  $h = \rho_0 + \rho_2 \cdot g_2$ where:

  \begin{itemize}
  \item[] $\rho_0 =
    \frac{1616315}{140311}(-\frac{13168023}{64652600}X^{2}
    +\frac{469529}{3232630}X+1)^2\\
    +\frac{146587437152}{226786773965}
    (-\frac{3674166724833}{11726994972160}X^{2}+X)^2 
    +\frac{1304025475172613633}{658170556615496704000}X^{4}$
  \item[] $\rho_2 = \frac{76230}{140311}$
  \end{itemize}

  Finally, we obtain the certificate of $\compact(g_1) = g_1 h$ by combining the
  certificates of $g_1$ and $h$ in Line 11 from Algorithm \ref{sec:comp-algor},
  we obtain
  $\compact_G(g_1) = g_1^{2} \rho_0 + h^{2} \cdot g_1 + g_1^{2} \rho_2 \cdot
  g_2$.
\end{example1}

 \subsection{Certificates of non-negative factors of a members of $\quadmod(G)$}
\label{sec:cert-non-neg-cert}

Suppose that $f = t_1 t_2 \in \quadmod(G)$ where both $t_1$ and $t_2$ are
coprime and non-negative over $\semialgebraic(f)$ and assume we have
certificates of $f$ in $\quadmod(G)$, i.e.,
$f = \sigma_0 + \sum_{i=1}^{s}{\sigma_i \cdot g_i}$. Our goal is to compute
certificates of $t_1$ in $\quadmod(G)$ using the certificates
$\sigma_0, \dots, \sigma_s$ of $f$. There are two cases that consider the
boundedness of $\semialgebraic(f)$.

In case $\semialgebraic(f)$ is bounded, using algorithm $\algSOSBasicLemma$ with
inputs $t_1$ and $t_2$, we compute sums of squares polynomials $\sigma, \tau$
such that $1 = \sigma t_1 + \tau t_2$. The certificate of $t_1$ is computed by
multiplying $t_1$ to both sides of the previous equation, thus obtaining
$t_1 = t_1^2 \sigma + \tau \cdot f$. Finally, we obtain a certificate of $t_1$
in $\quadmod(G)$ by substituting $f$ with its certificate in $\quadmod(G)$.

In case $\semialgebraic(f)$ is unbounded, we use the compactification algorithm
\ref{sec:comp-algor} to compactify the polynomial $f$. Then, the above
construction for the bounded case is used, thus finding a certificate of the
factor of $t_1$ in $\quadmod(\compact_G(g))$. This certificate is lifted to
$\quadmod(G)$ as the Algorithm \ref{sec:comp-algor} provides a certificate of
$\compact_G(f)$ in $\quadmod(G)$.

 \subsection{Removing redundant intervals from semialgebraic sets}
\label{sec:remov-redund-strictl-9}

In this section, we discuss a method that addresses a shortcoming from Section
\ref{sec:cert-non-neg-cert} about the requirement for both factors of a given
member $f$ in $\quadmod(G)$ to be non-negative over $\semialgebraic(f)$. The
latter limits the application of the Basic Lemma (\th\ref{sec:results-}). We
motivate this construction with the following example:

\begin{example1}
  \th\label{sec:remov-redund-strictl}
 Consider the following set of generators $G = \{g_1, g_2\}$ where
  $g_1 = -(X + 4)(X + 3)(X + 1)(X - 1)(X - 3)(X - 4)$ and
  $g_2 = (X + 2)(X - 2)$. The set of natural generators of $\quadmod(G)$ are
  $n_1 = X + 4, n_2 = (X + 3)(X - 3)$ and $n_3 = -(X - 4)$. Certificates of the
  natural generators $n_1$ (resp $n_3$) can be obtained using the method
  discussed in Section \ref{sec:cert-non-neg-cert} using $g_1$ as these
  polynomials are non-negative over $\semialgebraic(\compact_G(g_1))$. However,
  this method is not applicable to the natural generator $n_2$ as it is not
  non-negative over $\semialgebraic(\compact_G(g_1))$ nor over
  $\semialgebraic(\compact_G(g_2))$.
\end{example1}

In general, given a polynomial $f \in \quadmod(G)$ such that

\begin{equation} 
  \label{sec:remov-redund-strictl-8}
  \begin{split}
    \semialgebraic(f) &= \bigcup_{i=0}^{j}{[c_i, d_i]} \cup
                        \bigcup_{i=j+1}^{k}{[c_i, d_i]} \cup
                        \bigcup_{i=k+1}^{l}{[c_i, d_i]}\\ 
    \semialgebraic(G) &\cap \bigcup_{i=j+1}^{k}{[c_i, d_i]} = \emptyset
  \end{split}
\end{equation}

we want to find a polynomial $p \in \quadmod(G)$ such that
$\semialgebraic(f + p) \cap (d_j, c_{k+1}) = \emptyset$ and $f + p$ has the same
order and sign conditions at $d_{j}, c_{k+1}$ as $f$. Clearly, if certificates
of $f \in \quadmod(G)$ are given, then it is enough to find certificates of
$p \in \quadmod(G)$ to have a certificate of $f + p \in \quadmod(G)$. In this
way, we can find a certificate of the natural generator $n_2$ in
\th\ref{sec:remov-redund-strictl} by removing the interval $[-1, 1]$ from the
semialgebraic set of $g_1$ and applying the method in Section
\ref{sec:cert-non-neg-cert}.

We describe a procedure to compute the certificates of $f + p$ mentioned above. We
assume $f = \sigma_0 + \sum_{i=1}^{s}{\sigma_i \cdot g_i}$. A high level
description is the following:

\begin{enumerate}
\item Compute positive constants $\alpha, \epsilon_1$, and $\epsilon_2$
  satisfying the following conditions:
  \begin{itemize}
  \item $d_j + \epsilon_1 < c_{j+1}$ and $d_k < c_{k+1} - \epsilon_2$
  \item Let
    $h = (X - (d_j + \epsilon_1))(X - (c_{k+1} - \epsilon_2))$,
    $q = (X - d_j)^{\gamma_1}(X - c_{k+1})^{\gamma_2}$ where
    \begin{equation} 
      \label{item:let-h-=}
      \begin{split}
        \gamma_1 &= \begin{cases}
          \ord_{d_j}(f) + 1  & \text{if } \ord_{d_j}(f) \text{ is odd}\\
          \ord_{d_j}(f) + 2 & \text{otherwise} \\
        \end{cases}\\
        \gamma_2 &= \begin{cases}
          \ord_{c_{k+1}}(f) + 1  & \text{if } \ord_{c_{k+1}}(f) \text{ is odd}\\
          \ord_{c_{k+1}}(f) + 2 & \text{otherwise} \\
        \end{cases}
      \end{split}
    \end{equation}
  \item
    $\semialgebraic(f + \alpha q \cdot h) \cap (d_j, c_{k+1}) = \emptyset$
  \end{itemize} 
\item Since $h$ is strictly positive over $\semialgebraic(G)$, compute a
 certificate $\tau_0 + \sum_{i=1}^{s}{\tau_i \cdot g_i}$ of $h$ using the
  $\algPutinar$ algorithm.
\item Return certificate
  $\sigma_0 + \alpha q \tau_0, \dots, \sigma_s + \alpha q \tau_s$.
\end{enumerate}

In the next proposition, we prove that the polynomial $\alpha q \cdot h$ above
satisfies the required order and sign conditions at $d_j, c_{k+1}$.

\begin{proposition}
  \th\label{sec:remov-redund-strictl-2} Let $f$ and $p := \alpha q \cdot h$ be
  the polynomials as described above. We have that,
  \begin{itemize}
  \item $\ord_{d_j}(f + p) = \ord_{d_j}(f)$,
    $\ord_{c_{k+1}}(f + p) = \ord_{c_{k+1}}(f)$
  \item $\epsilon_{d_j}(f + p) = \epsilon_{d_j}(f)$, $\epsilon_{c_{k+1}}(f + p)
    = \epsilon_{c_{k+1}}(f)$
  \end{itemize}
\end{proposition}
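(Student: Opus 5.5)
We treat the endpoint $d_j$ in detail; the endpoint $c_{k+1}$ is handled by the same argument with $\gamma_2$ in place of $\gamma_1$. Write $m := \ord_{d_j}(f)$, so $f = (X - d_j)^{m} u$ with $u(d_j) \neq 0$. Here $\epsilon_{d_j}(f)$ denotes the sign of $u(d_j)$, i.e.\ the sign of the first nonvanishing derivative of $f$ at $d_j$.

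The plan is to show that $p = \alpha q h$ vanishes at $d_j$ to order strictly greater than $m$. Then, near $d_j$, the lowest-order term of $f+p$ is unchanged. Factor
\[
p = \alpha (X - d_j)^{\gamma_1} (X - c_{k+1})^{\gamma_2} h .
\]
By the definition of $\gamma_1$ we have $\gamma_1 \geq m+1$ in both cases. Indeed, $\gamma_1 = m+1$ if $m$ is odd and $\gamma_1 = m+2$ otherwise, so $\gamma_1$ is always even and exceeds $m$. This gives
\[
f + p = (X - d_j)^{m}\bigl(u + (X - d_j)^{\gamma_1 - m}\, \alpha (X - c_{k+1})^{\gamma_2} h\bigr).
\]
Call the bracketed factor $v$. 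Since $\gamma_1 - m \geq 1$, evaluating at $d_j$ gives $v(d_j) = u(d_j) \neq 0$.

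From this factorization we read off both claims at once. First, $\ord_{d_j}(f+p) = m = \ord_{d_j}(f)$. Second, the sign of $v(d_j)$ equals the sign of $u(d_j)$, so $\epsilon_{d_j}(f+p) = \epsilon_{d_j}(f)$. The analogous factorization at $c_{k+1}$, using $\gamma_2 \geq \ord_{c_{k+1}}(f) + 1$, gives the second pair of equalities.

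The only point needing care is the degenerate case in which $d_j$ or $c_{k+1}$ is not a root of $f$, i.e.\ $m = 0$. In that case $\gamma_1 = 2$ and the same argument still applies, with $v(d_j) = f(d_j)$. Beyond this, the argument relies on no property of $h$, $\alpha$ or the $\epsilon_i$ other than $h$ being a polynomial and $d_j \neq c_{k+1}$. I expect no real obstacle: the proof is essentially the observation that adding a term of strictly higher vanishing order preserves the order and the leading sign.
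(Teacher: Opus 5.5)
Your proposal is correct and takes essentially the same approach as the paper, whose proof is the one-line observation that $\ord_{d_j}(f) < \ord_{d_j}(p)$ and $\ord_{c_{k+1}}(f) < \ord_{c_{k+1}}(p)$, so adding $p$ cannot change the order or the sign of the lowest-order term at either endpoint. Your factorization $f + p = (X-d_j)^{m} v$ with $v(d_j) = u(d_j)$ makes that observation explicit; incidentally, you do not need $d_j \neq c_{k+1}$, since any extra vanishing of $p$ at $d_j$ only raises its order further.
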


\begin{proof}
  Since $\ord_{d_j}(f) < \ord_{d_j}(p)$ and
  $\ord_{c_{k+1}}(f) < \ord_{c_{k+1}}(p)$, the polynomial $f + p$ maintains the same
  sign and order values in $d_j$ and $c_{k+1}$. Thus, the statement holds.
\end{proof}

\begin{example1}
  \th\label{sec:remov-disj-interv-1}

  Continuing with \th\ref{sec:remov-redund-strictl}, we want to remove the
  redundant interval $[-1, 1]$ from the semialgebraic set of $g_1$. We find that
  $\alpha = \frac{175}{216}$ and $\epsilon_1 = \epsilon_2 = 1$ satisfy the
  conditions mentioned above. We have

  \begin{equation*} 
    \label{sec:remov-redund-strictl-1}
    \begin{split}
      &g_1 + \frac{175}{216}((X + 3)(X - 3))^{2} \cdot (X + 2)(X - 2)\\
      &= g_1 + \frac{175}{216}((X + 3)(X - 3))^{2} \cdot g_2\\
      &= \frac{41 X^2 + 79}{216} \cdot (-(X + 6)(X + 3)(X - 3)(X - 6))\\
    \end{split}
  \end{equation*}

  The semialgebraic set of the last polynomial is $[-6, -3] \cup [3, 6]$. The
  certificate in $\quadmod(g_1, g_2)$ of the above expression is straightforward
  to compute as the strictly positive polynomial $h$ used is $g_2$. Hence, we
  can apply Section \ref{sec:cert-non-neg-cert} to compute a certificate of
  $(X + 3)(X - 3)$ in $\quadmod(g_1, g_2)$.
\end{example1}

\begin{example1}
  \th\label{sec:remov-disj-interv-2} In this example, we modify
  \th\ref{sec:remov-redund-strictl} by changing the multiplicities of $X+3$ and
  $X-3$ to be $g_1 = -(X+4)(X+3)^{3}(X+1)(X-1)(X-3)^{5}(X-4)$. The goal is also
  to remove the interval $[-1, 1]$ from the semialgebraic set of $g_1$. Using
  the same parameters $\alpha, \epsilon_1, \epsilon_2$ as in
  \th\ref{sec:remov-disj-interv-1}, we have

  \begin{equation*} 
    \label{sec:remov-redund-strictl-3}
    \begin{split}
      &g_1 + \frac{175}{216}((X + 3)^{2}(X - 3)^{3})^2 \cdot (X + 2)(X - 2)\\
      &= \frac{41 X^2 + 79}{216} \cdot (-(X + 6)(X + 3)^{3}(X - 3)^{5}(X - 6))\\
    \end{split}
  \end{equation*}

  The semialgebraic set of the last polynomial is $[-6, -3] \cup [3, 6]$. 
\end{example1}

\begin{example1}
  \th\label{sec:remov-disj-interv-3}

 Consider $g_1 = -(X+3)^{2}(X+1)(X-1)(X-3)^{4}$ and
  $g_2 = (X + 2)(X - 2)$. We want to remove the redundant interval $[-1, 1]$
  from $\semialgebraic(g_1)$. Using the parameters
  $\alpha = \frac{5}{196}, \epsilon_1 = 1, \epsilon_2 = 1$, we have

  \begin{equation*} 
    \label{sec:remov-redund-strictl-4}
    \begin{split}
      &g_1 + \frac{5}{196}((X + 3)^{2}(X - 3)^{3})^{2} \cdot (X + 2)(X - 2)\\
      &= g_1 + \frac{5}{196}((X + 3)^{2}(X - 3)^{3})^{2} \cdot g_2\\
      &=
        \frac{89}{196}
        \left((-\frac{793}{3560}X^{2}+1)^2+\frac{193}{20}(X)^2
        +\frac{83151}{142400}(X^{2})^2\right)\\
      &\cdot (-(X + 4)(X + 3)^{2}(X - 
        3)^{4}(X - 4))\\ 
    \end{split}
  \end{equation*}

  The semialgebraic set of the last polynomial is $[-4, -3] \cup [3, 4]$.
\end{example1}

The steps 2 and 3 in the high level description are easy to implement while the
first step requires more detail. In order to find values for
$\alpha, \epsilon_1, \epsilon_2$, first we compute a value for $\alpha$ fixing
the values of $\epsilon_1, \epsilon_2$ with
$\frac{c_{j+1} - d_j}{2}, \frac{c_{k+1} - d_k}{2}$, respectively, such that
$\semialgebraic(\alpha h) \cap \bigcup_{i=j+1}^{k}{[c_i, d_i]}$ is empty. The
polynomial $p = \alpha q \cdot h$ introduces additional roots at $d_j$ and
$c_{k+1}$, so the semialgebraic set of $\semialgebraic(f + p)$ might have
intervals near these endpoints; to fix the latter, we update the values of
$\epsilon_1, \epsilon_2$ from
$(0, \frac{c_{j+1} - d_j}{2}), (0, \frac{c_{k+1} - d_k}{2})$, respectively,
 to remove these intervals. We use the method in the following proof to
find a value for $\alpha$:

\begin{proposition}
  \th\label{sec:remov-redund-strictl-7} Let $f \in \quadmod(G)$ with
  semialgebraic set as in \eqref{sec:remov-redund-strictl-8}. Let
  $p = (X - d_j)^{\gamma_1}(X - c_{k+1})^{\gamma_2} \cdot (X - \frac{c_{j+1} -
    d_j}{2})(X - \frac{c_{k+1} - d_k}{2})$ where $\gamma_1, \gamma_2$ are
  defined in \eqref{item:let-h-=}. There exists a positive $\alpha$ such that
  $\semialgebraic(f + \alpha \cdot p) \cap \bigcup_{i=j+1}^{k}{[c_i, d_i]}$ is
  empty.
\end{proposition}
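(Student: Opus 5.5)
The idea is to show that $p$ is strictly negative on the compact set $K := \bigcup_{i=j+1}^{k}{[c_i, d_i]}$. Once that is known, a large enough $\alpha$ lets $\alpha p$ dominate $f$ there. Write $p = q \cdot h$ with $q = (X - d_j)^{\gamma_1}(X - c_{k+1})^{\gamma_2}$ and $h = (X - r_1)(X - r_2)$. Here $r_1, r_2$ are the two roots of the quadratic factor. I read them as the midpoints $r_1 = d_j + \frac{c_{j+1} - d_j}{2}$ and $r_2 = d_k + \frac{c_{k+1} - d_k}{2}$ of the gaps, which is the choice $\epsilon_1 = \frac{c_{j+1}-d_j}{2}$, $\epsilon_2 = \frac{c_{k+1}-d_k}{2}$ described just before the statement. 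The only property used is that $d_j < r_1 < c_{j+1}$ and $d_k < r_2 < c_{k+1}$.

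\textbf{Step 1 (sign of $q$).} By \eqref{item:let-h-=}, $\gamma_1$ and $\gamma_2$ are always even: we add $1$ to an odd order and $2$ to an even order. Hence $q \geq 0$ on $\mathbb{R}$, and $q$ vanishes only at $d_j$ and $c_{k+1}$. Both points lie outside $K$, since $K \subseteq [c_{j+1}, d_k]$ and $d_j < c_{j+1}$, $d_k < c_{k+1}$. So $q > 0$ on $K$.

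\textbf{Step 2 (sign of $h$).} Every $x \in K$ satisfies $r_1 < c_{j+1} \leq x \leq d_k < r_2$. Therefore $h(x) = (x - r_1)(x - r_2) < 0$ on $K$.

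\textbf{Step 3 (choice of $\alpha$).} Steps 1 and 2 give that $p = qh$ is continuous and strictly negative on the compact set $K$. Hence $m := \min_{x \in K}(-p(x)) > 0$. Likewise $M := \max_{x \in K} f(x)$ is finite, and $M \geq 0$ could fail only if $f$ were already negative there, which is harmless. Choose any positive $\alpha > \max(M, 0)/m$, for instance a rational one. Then for every $x \in K$,
\[
f(x) + \alpha p(x) \leq M - \alpha m < 0,
\]
so $\semialgebraic(f + \alpha p) \cap K = \emptyset$.

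For computability, $M$ and $m$ can be replaced by effective bounds. For $M$, bound $|f|$ on $[c_{j+1}, d_k]$ by the sum of absolute values of the coefficients times powers of $\max(|c_{j+1}|, |d_k|)$. For $m$, use the fact that on $K$ the factors satisfy $|x - d_j| \geq c_{j+1} - d_j$, $|x - c_{k+1}| \geq c_{k+1} - d_k$, $|x - r_1| \geq c_{j+1} - r_1$ and $|x - r_2| \geq r_2 - d_k$. Multiplying these lower bounds with the appropriate exponents gives an explicit positive lower bound for $-p$ on $K$. All the quantities involved are real algebraic, so $\alpha$ is computable.

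\textbf{Main obstacle.} Once the roots of the quadratic factor are correctly placed inside the two gaps, there is no real difficulty; making that placement precise is the main point to get right. Taken literally, $\frac{c_{j+1} - d_j}{2}$ is a length, not a point in the gap, so I would make the midpoint reading above explicit. The one substantive observation is the parity of the $\gamma$'s, which keeps $q$ from changing sign across $K$. The subsequent adjustment of $\epsilon_1$ and $\epsilon_2$ near $d_j$ and $c_{k+1}$ is not needed for this proposition.
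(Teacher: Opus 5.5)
Your proof is correct and follows essentially the paper's route: show $p<0$ on $\bigcup_{i=j+1}^{k}[c_i,d_i]$, then take $\alpha$ larger than the ratio of $\max f$ to $\min(-p)$ there (the paper does this interval by interval with $\alpha_i=-\max_{[c_i,d_i]} f/\max_{[c_i,d_i]} p$ and $\alpha=1+\max_i\alpha_i$, arguing by contradiction rather than by your direct estimate). Your midpoint reading of the roots matches the paper's own $h=(X-(d_j+\epsilon_1))(X-(c_{k+1}-\epsilon_2))$ with $\epsilon_1=\frac{c_{j+1}-d_j}{2}$, $\epsilon_2=\frac{c_{k+1}-d_k}{2}$, and your parity argument for $\gamma_1,\gamma_2$ justifies $p<0$ more carefully than the paper, whose stated description of $\semialgebraic(p)$ is inaccurate (it is really $(-\infty,r_1]\cup[r_2,\infty)$).
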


\begin{proof}
  Let $S = \bigcup_{i=j+1}^{k}{[c_i, d_i]}$. The semialgebraic set of $p$ is
  $(d_j, \frac{c_{j+1} - d_j}{2}] \cup [\frac{c_{k+1} - d_k}{2}, c_{k+1})$,
 and therefore $p$ is strictly negative over $S$. For each of the intervals
  $[c_i, d_i]$ in $S$, we compute the maximum value of $p$ and the maximum value of
  $f$, i.e.,
 $\alpha_i = -\frac{\max_{x \in [c_i, d_i]}(f)}{\max_{x \in [c_i, d_i]}(p)} >
  0$. We will prove that $\semialgebraic(f + (\alpha_i + 1) \cdot p) \cap [c_i, d_i]$
 is empty. In contrast, there exists a point $x \in [c_i, d_i]$ such that
  $f(x) + (\alpha_i + 1) \cdot p(x) \geq 0$, which implies
  $f(x) + p(x) \geq - \alpha_i \cdot p(x) = \max_{x \in [c_i, d_i]}(f)
  \frac{p(x)}{\max_{x \in [c_i, d_i]}(p)}$. Notice that
  $\frac{p(x)}{\max_{x \in [c_i, d_i]}(p)} \geq 1$ as $p$ is negative over
  $[c_i, d_i]$ so $f(x) + p(x) \geq \max_{x \in [c_i, d_i]}(f)$, which is a
  contradiction. In fact, any $\beta > \alpha_i$ guaranties that
  $\semialgebraic(f + \beta \cdot p) \cap [c_i, d_i]$ is empty. From this,
  choose $\alpha := 1 + \max\{\alpha_{j+1}, \dots, \alpha_k\}$, which satisfies
  the property of empty intersection with each interval in $S$.
\end{proof}

\begin{proposition}
  \th\label{sec:remov-redund-strictl-5}

  Let the polynomial $f$ and the positive constant $\alpha$ be as in
  \th\ref{sec:remov-redund-strictl-7}. Let
  $p_1 = \alpha (X - d_j)^{\gamma_1}(X - c_{k+1})^{\gamma_2} \cdot (X - (d_j +
  \epsilon_1))(X - (c_{k+1} - \epsilon_2))$ where $\gamma_1, \gamma_2$ are
  defined in \eqref{item:let-h-=}. There exist
  $\epsilon_1 \in (0, \frac{c_{j+1} - d_j}{2})$ and
  $\epsilon_2 \in (0, \frac{c_{k+1} - d_k}{2})$ such that the semialgebraic set
  of $f + p_1$ intersecting $(d_j, c_{k+1})$ is empty.
\end{proposition}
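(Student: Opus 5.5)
The plan is to show that shrinking $\epsilon_1,\epsilon_2$ can only help on the part of $(d_j,c_{k+1})$ where $p_1\le 0$, and to control $f+p_1$ near $d_j$ and $c_{k+1}$ by a local order comparison. First record sign facts. By \eqref{item:let-h-=}, $\gamma_1$ and $\gamma_2$ are even, so $q=(X-d_j)^{\gamma_1}(X-c_{k+1})^{\gamma_2}\ge 0$ everywhere. Write $h_\epsilon=(X-(d_j+\epsilon_1))(X-(c_{k+1}-\epsilon_2))$. Then $p_1=\alpha q h_\epsilon$ satisfies:
\begin{itemize}
\item $p_1\le 0$ on $[d_j+\epsilon_1,\,c_{k+1}-\epsilon_2]$;
\item $p_1\ge 0$ on the two thin strips $(d_j,d_j+\epsilon_1)$ and $(c_{k+1}-\epsilon_2,c_{k+1})$.
\end{itemize}
Also, by \eqref{sec:remov-redund-strictl-8}, $f<0$ on every gap of $(d_j,c_{k+1})$ not covered by $S=\bigcup_{i=j+1}^{k}[c_i,d_i]$. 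In particular $f<0$ on $(d_j,c_{j+1})$ and on $(d_k,c_{k+1})$.

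\emph{Step 1 (the set $S$).} Fix $x\in S$. Then $x-(d_j+\epsilon_1)>0$ and $x-(c_{k+1}-\epsilon_2)<0$. As $\epsilon_1$ decreases the first factor grows, and as $\epsilon_2$ decreases the second factor becomes more negative. Hence $|h_\epsilon(x)|$ is nonincreasing in $\epsilon_1,\epsilon_2$, and it is at least its value at the reference parameters used in \th\ref{sec:remov-redund-strictl-7}. So for every admissible $\epsilon$,
\[
f(x)+p_1(x)\le f(x)+\alpha\,p(x)<0 \quad\text{on } S,
\]
and \th\ref{sec:remov-redund-strictl-7} gives the strict negativity there.

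\emph{Step 2 (the middle region outside $S$).} On points of $[d_j+\epsilon_1,c_{k+1}-\epsilon_2]$ lying outside $S$, we have $f<0$ and $p_1\le 0$, so $f+p_1<0$.

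\emph{Step 3 (the strip at $d_j$; the main step).} Let $m=\ord_{d_j}(f)$. Since $f<0$ just to the right of $d_j$, there are $\delta\in(0,\frac{c_{j+1}-d_j}{2})$ and $\kappa>0$ with
\[
f(x)\le -\kappa (x-d_j)^m \quad\text{on } (d_j,d_j+\delta].
\]
Set $K=\max_{x\in[d_j,d_j+\delta]}\alpha\,(c_{k+1}-x)^{\gamma_2}\,|x-c_{k+1}+\epsilon_2|$. This is bounded uniformly for $\epsilon_2$ in its allowed range. For $x\in(d_j,d_j+\epsilon_1]$ with $\epsilon_1\le\delta$ we get $|x-(d_j+\epsilon_1)|\le\epsilon_1$, so
\[
p_1(x)\le K\epsilon_1(x-d_j)^{\gamma_1}\le K\epsilon_1\delta^{\gamma_1-m}(x-d_j)^m ,
\]
using $\gamma_1>m$. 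Choosing $\epsilon_1<\min\{\delta,\ \kappa/(K\delta^{\gamma_1-m})\}$ gives $f+p_1<0$ on $(d_j,d_j+\epsilon_1]$.

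\emph{Step 4 (the strip at $c_{k+1}$).} The same argument, or the substitution $X\mapsto -X$, handles $[c_{k+1}-\epsilon_2,c_{k+1})$ using $\ord_{c_{k+1}}(f)<\gamma_2$, and fixes $\epsilon_2$.

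Combining Steps 1--4, $f+p_1<0$ on all of $(d_j,c_{k+1})$, which is the claim.

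The main obstacle is Step 3: the bound on $p_1$ must be uniform in the other parameter, so that $\epsilon_1$ and $\epsilon_2$ can be chosen independently. I would handle this by bounding $|x-c_{k+1}+\epsilon_2|$ by $c_{k+1}-d_j$. Constructively, $\kappa$ and $\delta$ can be obtained from $f/(X-d_j)^m$, which has nonzero value at $d_j$, together with a Cauchy-type bound on its derivative.
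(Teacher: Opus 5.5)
Your proof is correct. Steps 1 and 2 match the first half of the paper's argument. The paper also compares $p_1$ with $t=\alpha p$ between the two reference roots, where shrinking $\epsilon_1,\epsilon_2$ only makes $h$ more negative and $q\ge 0$. It likewise uses $f<0$, $p_1\le 0$ on the rest of $[d_j+\epsilon_1,c_{k+1}-\epsilon_2]$.

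The routes diverge on the strips next to $d_j$ and $c_{k+1}$. The paper argues by contradiction, in three moves:
\begin{enumerate}
\item it takes $\epsilon_1$ small enough that $f$ is strictly decreasing on $I_1=(d_j,d_j+\epsilon_1)$;
\item it claims, by counting zeros of $(f+p_1)'$, that $\semialgebraic(f+p_1)\cap I_1$ must be a single interval with left endpoint $d_j$;
\item it uses the expansion $f+p_1=(X-d_j)^{\ord_{d_j}(f)}\bigl(\beta_1+(X-d_j)^{r}(\cdots)\bigr)$ with $\beta_1<0$ to get negativity just to the right of $d_j$.
\end{enumerate}
There the driving mechanism is the order gap $\gamma_1>\ord_{d_j}(f)$.

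Your comparison of $f\le-\kappa(x-d_j)^m$ with $p_1\le K\epsilon_1\delta^{\gamma_1-m}(x-d_j)^m$ controls the whole strip at once. The smallness comes from the factor $|x-(d_j+\epsilon_1)|\le\epsilon_1$. So for this step $\gamma_1\ge m$ would already suffice; the strict gap is needed only to preserve order and sign at $d_j$.

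Your route gains two things. First, it avoids the paper's shape argument. That zero count does not follow from $f'<0$ on $I_1$ and $p_1'$ having a single zero there. The count is needed only because the Taylor step, as written, gives negativity on an unspecified neighbourhood of $d_j$. Second, it gives an explicit threshold, uniform in the other parameter, below which every $\epsilon_1$ (and symmetrically every $\epsilon_2$) works. That is exactly what makes the simultaneous halving of $\epsilon_1,\epsilon_2$ in $\algRemoveStrictPosBet$ terminate. The paper's version avoids computing constants such as $\kappa$ and $\delta$, but yours is the cleaner and constructive one.
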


\begin{proof} 
  Let $p$ be from \th\ref{sec:remov-redund-strictl-7}, and
  $t := \alpha \cdot p$. First, we prove that for every
  $x \in [\frac{c_{j+1} - d_j}{2}, \frac{c_{k+1} - d_k}{2}]$ we have 
$p_1(x) \leq t(x)$. By continuity, it suffices to prove the existence of a
  point $x \in [\frac{c_{j+1} - d_j}{2}, \frac{c_{k+1} - d_k}{2}]$ such that
 $p_1(x) \leq t(x)$ since both $p_1$ and $t$ have no additional roots in
  $[\frac{c_{j+1} - d_j}{2}, \frac{c_{k+1} - d_k}{2}]$. This is satisfied with
  $x = \frac{c_{j+1} - d_j}{2}$ as $t(x)$ is zero while $p_1(x)$ is a negative
  value. This implies that the polynomial $p_1$ also satisfies the property
  $\semialgebraic(f + p_1) \cap [\frac{c_{j+1} - d_j}{2}, \frac{c_{k+1} -
    d_k}{2}]$ is empty.

  Now, we prove the existence of $\epsilon_1$ and $\epsilon_2$ satisfying the
  main statement. Suppose, on the contrary, that for every
  $\epsilon_1 \in (0, \frac{c_{j+1} - d_j}{2})$ and
  $\epsilon_2 \in (0, \frac{c_{k+1} - d_k}{2})$, we have 
$\semialgebraic(f + p_1) \cap (d_j, c_{j+1})$ is non-empty. Let
  $I_1 = (d_j, d_j + \epsilon_1)$, $I_2 = (c_{k+1} - \epsilon_2, c_{k+1})$, and
  let us choose a value of $\epsilon_1, \epsilon_2$ close to $0$ such that
 $f$ strictly decreases over $I_1$ and strictly increases over
  $I_2$. Clearly, $f + p_1$ is non-positive over
  $[d_j + \epsilon_1, \frac{c_{j+1} - d_j}{2}]$ and non-positive over
  $[\frac{c_{k+1} - d_k}{2}, c_{k+1} - \epsilon_2]$, so we will prove that
  $\semialgebraic(f + p_1) \cap (I_1 \cup I_2)$ is also empty, thus reaching a
  contradiction.

  We prove if $\semialgebraic(f + p_1) \cap I_1$ is non-empty, then it has at
  most one single interval. If $\semialgebraic(f + p_1) \cap I_1$ has more than
  a single interval, it means that $f + p_1$ should have at least two local
  maxima and one local minimum. The latter implies that the derivative of
  $f + p_1$ has at least three real zeros over $I_1$. However, since $p_1$ has a
  single local maximum in $I_1$, its derivative has a single real zero in
  $I_1$. Additionally, $f$ is strictly decreasing over $I_1$, so the derivative
  of $f$ is negative over $I_1$, therefore the derivative of $f + p_1$ can have at
  most two real zeros in $I_1$. Therefore, if $\semialgebraic(f + p_1) \cap I_1$
  is not empty, then it is a single interval. Furthermore, the left endpoint of
  this interval is $d_j$ as it is a common zero of $f$ and $p_1$. A similar
  argument concludes that if $\semialgebraic(f + p_1) \cap I_2$ is non-empty,
  then it has at most a single interval with right endpoint
  $c_{k+1}$.

  Now, we will prove that if $\semialgebraic(f + p_1) \cap I_1$ is a single
  interval with left endpoint $d_j$, we reach a contradiction, proving our main
  assertion about the existence of a value of $\epsilon_1$. First, we compute
  the Taylor series expansion of $f$ and $p_1$ at $d_j$, i.e.,
  $f = \beta_1 (X - d_j)^{\ord_{d_j}(f)} + \dots$ and
  $p_1 = \alpha \gamma_1 (X - d_j)^{\ord_{d_j}(f) + r} + \dots$ where
  $\beta_1 < 0$ and $r$ is either 1 or 2. Then
  $f + p_1 = (\beta_1 + (X - d_j)^{r}(\dots))(X - d_j)^{\ord_{d_j}(f)}$. If we
  pick a point $\delta$ in $I_1$ close to $d_j$, then the sign of
  $f(\delta) + p_1(\delta)$ is the sign of $\beta_1$, which is negative, as
  $(\delta - d_j)^{r}$ is close to $0$ as well as
  $(\delta - d_j)^{\ord_{d_j}(f)} > 0$. However, this contradicts the fact that
  $\delta \in \semialgebraic(f + p_1) \cap I_1$. A similar argument also shows that
 $\semialgebraic(f + p_1) \cap I_2$ is empty. Therefore, there exist
  values of $\epsilon_1 \in (0, \frac{c_{j+1} - d_j}{2})$ and
  $\epsilon_2 \in (0, \frac{c_{k+1} - d_k}{2})$ such that
  $\semialgebraic(f + p_1) \cap (d_j, c_{k+1})$ is empty.
\end{proof}

The following algorithm allows us to compute the parameters $\alpha, \epsilon_1,
\epsilon_2$ required by the construction.

\begin{algorithm}[ht]
  \DontPrintSemicolon
  \caption{Removing redundant strictly positive factors \\ $ h, \alpha q :=
    \algRemoveStrictPosBet(f) $}
  \label{sec:remov-redund-strictl-6}
  \kwInput{$f \in \qX, d_j, c_{k+1} \in \mathbb{A}$}
  \kwOutput{$h, \alpha q \in \aX$}
  \kwRequires{$\semialgebraic(f) = \bigcup_{i=0}^{j}{[c_i, d_i]} \cup
    \bigcup_{i=j+1}^{k}{[c_i, d_i]} \cup \bigcup_{i=k+1}^{l}{[c_i, d_i]}$}
  \kwEnsures{$\semialgebraic(f + \alpha q \cdot h) \cap (d_j, c_{k+1})$ is
    empty} 

  $\epsilon_1 := \frac{c_{j+1} - d_j}{2}$\;
  $\epsilon_2 := \frac{c_{k+1} - d_k}{2}$\;
  $h := (X - (d_j + \epsilon_1))(X - (c_{k+1} - \epsilon_2))$\; 
  $q := (X - d_j)^{\gamma_1}(X - c_{k+1})^{\gamma_2}$ where $\gamma_1, \gamma_2$
  are defined as in equations \eqref{item:let-h-=}\;
  $\alpha := 1$\;

  \For{$i = j+1$ to $k$}{
    Let $\alpha_i := -\frac{\max_{x \in [c_i, d_i]}(f)}{\max_{x \in
        [c_i, d_i]}(q \cdot h)} $\; 
    \uIf{$\alpha_i > \alpha$}{
      $\alpha := \alpha_i$\;
    }
  }

  $\alpha := 1 + \alpha$\;

  \While{$\semialgebraic(f + \alpha q \cdot h) \cap ((d_j, d_j + \epsilon_1) \cup
    (c_{k+1} - \epsilon_2, c_{k+1})) = \emptyset$}{
    $\epsilon_1 := \frac{1}{2} \epsilon_1$\;
    $\epsilon_2 := \frac{1}{2} \epsilon_2$\;
  }
\Return $h, \alpha q$\;
\end{algorithm}

Finally, we consider another variation of the problem, where we want to remove
the intervals from the semialgebraic set of $f$ located to the left of the
semialgebraic set of $\semialgebraic(G)$ \footnote{The approach is symmetric
  in the sense that it can also address the case when the intervals to be removed
 are on the right of $\semialgebraic(G)$}.
\vspace{-6mm}

\begin{equation} 
  \label{eq:beginspl-semi-=}
  \begin{split}
    \semialgebraic(f) &= \bigcup_{i=0}^{j}{[c_i, d_i]} \cup
                        \bigcup_{i=j+1}^{k}{[c_i, d_i]}\\
    \semialgebraic(G) &\cap \bigcup_{i=0}^{j}{[c_i, d_i]} = \emptyset
  \end{split}
\end{equation}
\vspace{-4mm}

The theoretical approach is the same as the previous problem. The following
algorithm solves this problem.

\begin{algorithm}[ht]
  \DontPrintSemicolon
  \caption{Removing redundant strictly positive factors \\ $ h, \alpha q :=
    \algRemoveStrictPosLeft(f) $}
  \label{sec:remov-redund-strictl-10}
  \kwInput{$f \in \qX, c_{j+1} \in \mathbb{A}$}
  \kwOutput{$h, \alpha q \in \aX$}
  \kwRequires{$\semialgebraic(f) = \bigcup_{i=0}^{j}{[c_i, d_i]} \cup
    \bigcup_{i=j+1}^{k}{[c_i, d_i]}$}
  \kwEnsures{$\semialgebraic(f + \alpha q \cdot h) \cap (-\infty, c_{j+1})$ is
    empty} 

  $\epsilon := \frac{c_{j+1} - d_j}{2}$\;
  $h := X - (c_{j+1} - \epsilon)$\; 
  \uIf{$\ord_{c_{j+1}}(f)$ is even}{
    $q := (X - c_{j+1})^{\ord_{c_{j+1}}(f) + 2}$\;
  }
  \Else{
    $q := (X - c_{j+1})^{\ord_{c_{j+1}}(f) + 1}$\;
  }
  $\alpha := 1$\;
  \For{$i = 1$ to $j$}{
    Let $\alpha_i := -\frac{\max_{x \in [c_i, d_i]}(f)}{\max_{x \in
        [c_i, d_i]}(q \cdot h)} $\; 
    \uIf{$\alpha_i > \alpha$}{
      $\alpha := \alpha_i$\;
    }
  }

  $\alpha := 1 + \alpha$\;

  \While{$\semialgebraic(f + \alpha q \cdot h) \cap (c_{k+1} - \epsilon,
    c_{k+1}) = \emptyset$}{ 
    $\epsilon := \frac{1}{2} \epsilon$\;
  }
  \Return $h, \alpha q$\;
\end{algorithm}

\begin{example1}
  \th\label{sec:remov-redund-strictl-11} Let us consider $g_1 = (X+4)(X+3)(X+1)$
  and $g_2 = -(X + 2)(X - 1)$. The semialgebraic set of
  $\semialgebraic(g_1) = [-4, -3] \cup [1, \infty)$. We want to remove the
  redundant interval $[-4, -3]$ from $\semialgebraic(g_1)$.

  First, we fix $\epsilon = 1$ and set $h = X + 2, q = (X + 1)^{2}$. Then, we
  compute $\alpha$ by taking the ratio of maximum values of $g_1$ and
  $q \cdot h$ over $[-4, -3]$. This is
  $\alpha = 1 - \frac{\max_{x \in [-4, -3](f)}}{\max_{x \in [-4, -3](q \cdot
      h)}} = 1 + \frac{1}{54}(10 - 7 \sqrt{7})$. We can check that
  $f + \alpha q \cdot h = \frac{1}{54} (X + 1) (776 - 14 \sqrt{7} + (570 - 21
  \sqrt{7})X + (118 - 7 \sqrt{7})X^2)$ and
  $\semialgebraic(f + \alpha q \cdot h)$ is $[-1, \infty)$.
\end{example1}

 \subsection{Certificates of natural generators}
\label{sec:cert-natur-gener-1982}

\subsubsection{Certificates of a linear factor}
\label{sec:cert-line-fact}

We compute a certificate of the left linear factor $X - a_0$. Using
\th\ref{sec:cert-elem-ponats-2}, we know that there exists a polynomial
$g \in G$ such that $g(a_0) = 0$ and $\frac{d g}{dX}(a_0) > 0$, that is,
$g = (X - a_0) \tilde{g}$ with $\tilde{g}$ is not divisible by $X -
a_0$.

We can assume $X - a_0$ is non-negative over $\semialgebraic(g)$, as otherwise
we use the algorithm $\algRemoveStrictPosLeft$ in Section
\ref{sec:remov-redund-strictl-9} to find a polynomial $\hat{g}$ that preserves
the order and sign condition at $a_0$ and compute its certificates in
$\quadmod(G)$ such that $X - a_0$ is non-negative over
$\semialgebraic(\hat{g})$. The procedure discussed in Section
\ref{sec:cert-non-neg-cert} allows us to compute a certificate of $X - a_0$
setting $t_{1} = X - a_0$ and $t_{2} = \frac{g}{t_{1}}$ in terms of
$\quadmod(G)$. Similarly, the same procedure is used to compute a certificate
for the right linear factor $-(X - b_k)$.

 \subsubsection{Certificates of a quadratic factor}

First, we state the following lemma that we will use to describe a method to
compute the certificates of quadratic factors $(X - b_i)(X - a_{i+1})$.

\begin{lemma}
  \th\label{sec:splitting-step-5} Let
  $a_0, b_i, c_{i_1}, c_{i_2}, a_{i+1}, b_k \in \mathbb{A}$ such that
  $a_0 \leq b_i < c_{i_2} < c_{i_1} < a_{i+1} \leq b_k$ and $m_{i_1}, m_{i_2}$
  odd numbers. We have
  $(X - b_i)(X - a_{i+1}) \in Q := \quadmod(X-a_0, (X - b_i)(X -
  c_{i_1})^{m_{i_1}}, (X - c_{i_2})^{m_{i_2}}(X- a_{i+1}), -(X - b_k))$ and its
  certificates in $Q$ module are computable.
\end{lemma}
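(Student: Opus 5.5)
The plan is to produce an explicit element of $Q$ of the form $q\cdot R$, where $q:=(X-b_i)(X-a_{i+1})$ and $R$ is strictly positive on $\semialgebraic(Q)$. We then divide out $R$ using the Basic Lemma machinery, exactly as in Section \ref{sec:cert-non-neg-cert}.

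Write $p_1:=(X-b_i)(X-c_{i_1})^{m_{i_1}}$ and $p_2:=(X-c_{i_2})^{m_{i_2}}(X-a_{i+1})$.

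\textbf{Step 1: the semialgebraic set.} Since $m_{i_1},m_{i_2}$ are odd, $p_1\ge 0$ exactly on $(-\infty,b_i]\cup[c_{i_1},\infty)$, and $p_2\ge 0$ exactly on $(-\infty,c_{i_2}]\cup[a_{i+1},\infty)$. Because $c_{i_2}<c_{i_1}$, intersecting with $[a_0,b_k]$ gives $S:=\semialgebraic(Q)=[a_0,b_i]\cup[a_{i+1},b_k]$. This set is compact, and $q\ge 0$ on $S$.

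\textbf{Step 2: a product certificate.} Consider
\begin{equation*}
(X-a_{i+1})^2\cdot p_1+(X-b_i)^2\cdot p_2 = q\cdot R,\qquad R:=(X-a_{i+1})(X-c_{i_1})^{m_{i_1}}+(X-b_i)(X-c_{i_2})^{m_{i_2}}.
\end{equation*}
This is a certificate of $qR$ in $Q$ with square multipliers. A sign check shows $R>0$ on $S$:
\begin{itemize}
\item For $x\le b_i$, the first summand is a product of two negatives, hence positive, and the second is $\ge 0$.
\item For $x\ge a_{i+1}$, the first summand is $\ge 0$ and the second is positive.
\end{itemize}
Since $S$ is compact, $R$ is strictly positive on $S$. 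Moreover $R$ does not vanish at $b_i$ or $a_{i+1}$, so $\gcd(q,R)=1$.

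\textbf{Step 3: dividing out $R$.} I would apply the construction of Section \ref{sec:cert-non-neg-cert} to $f:=qR\in Q$ with $t_1:=q$ and $t_2:=R$. That construction requires both factors to be non-negative on $\semialgebraic(f)$, but $\semialgebraic(qR)$ may contain spurious pieces. These come from zeros of $R$ inside $(b_i,a_{i+1})$, where $q<0$, and possibly from unbounded rays when $\deg R$ is even.
\begin{itemize}
\item Unboundedness will be removed with Algorithm \ref{sec:comp-algor}. The auxiliary $h$ there is certified in $Q$ via $\algPutinar$, which applies since $Q$ is Archimedean: it contains $X-a_0$ and $-(X-b_k)$.
\item Intervals of $\semialgebraic(qR)$ lying inside $(b_i,a_{i+1})$ are disjoint from $S$. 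They will be removed with $\algRemoveStrictPosBet$ (Propositions \ref{sec:remov-redund-strictl-7} and \ref{sec:remov-redund-strictl-5}), taking $d_j=b_i$ and $c_{k+1}=a_{i+1}$. This yields $\hat f\in Q$ with $\hat f=q\hat R$, $\hat R$ non-negative on $\semialgebraic(\hat f)$, and the same order and sign conditions at $b_i$ and $a_{i+1}$.
\end{itemize}
Then $\algSOSBasicLemma$ gives sums of squares $\sigma,\tau$ with $1=\sigma q+\tau\hat R$. Multiplying by $q$,
\[
q=\sigma q^2+\tau\hat f,
\]
and substituting the certificate of $\hat f$ gives the certificate of $q$ in $Q$. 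Every step uses only algebraic numbers and the effective subroutines already described, which gives computability.

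\textbf{Main obstacle.} I expect Step 3 to be the delicate part. One must verify that after compactification and interval removal, the cofactor $\hat R$ is genuinely non-negative on $\semialgebraic(\hat f)$ and coprime to $q$, so that Theorem \ref{sec:results--1} applies. If this becomes messy, the first fallback is to shift $R$ by adding further square multiples of $p_1$ and $p_2$, for instance $(X-a_{i+1})^2(X-c_{i_2})^{2}$ times $p_1$, so that $R$ has no real zeros in $(b_i,a_{i+1})$. Then $\semialgebraic(qR)\cap(b_i,a_{i+1})=\emptyset$ holds directly, and only compactification is needed.
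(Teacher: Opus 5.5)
Your proposal is correct, provided the order of operations in Step~3 is fixed as described below. It divides out the cofactor by a different mechanism than the paper.

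\textbf{How the paper's proof differs.} The paper starts from essentially your identity: a combination of $(X-a_{i+1})^2p_1$ and $(X-b_i)^2p_2$, plus a harmless constant multiple of $g_{\mathrm{nat}}^2$, where $g_{\mathrm{nat}}=(X-b_i)(X-a_{i+1})$ is your $q$. After equalizing the odd exponents to a common $m$, it does not use unit weights. It uses $\alpha_2=\frac{1}{(b_i-a_{i+1})(b_i-c_{i_1})^{m}}$ and $\alpha_3=\frac{1}{(a_{i+1}-b_i)(a_{i+1}-c_{i_2})^{m}}$. With these weights, an appendix lemma shows the cofactor equals $1+q_2\,g_{\mathrm{nat}}$ with $q_2$ a sum of squares. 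This gives $q_1=g_{\mathrm{nat}}(q_3g_{\mathrm{nat}}+1)\in Q$ with $q_3$ a sum of squares.

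That special shape is what lets the paper skip your Step~3. It applies the non-SOS $\algBasicLemma$ only over $\semialgebraic(G_{\mathrm{fix}})$ and certifies $\sigma,\tau,\sigma\tau$ with $\algPutinar$. It then concludes from $\tau g_{\mathrm{nat}}=\sigma\tau g_{\mathrm{nat}}^2+\tau^2q_1$ and $g_{\mathrm{nat}}=\sigma g_{\mathrm{nat}}^2+q_3g_{\mathrm{nat}}^2\,\tau+\tau g_{\mathrm{nat}}$. No compactification or interval removal is needed, and the real zeros of the cofactor in $(b_i,a_{i+1})$ never matter.

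\textbf{What each route buys.} You replace the appendix lemma (the analysis showing $p<1$ on $(a,d)$) with a two-line sign check. The price is that your cofactor $R$ has no special shape, so you need the full compactification, $\algRemoveStrictPosBet$ and $\algSOSBasicLemma$ pipeline. That is how the paper itself treats the companion case $c_{i_1}\le c_{i_2}$. So your route handles both cases uniformly. On the other hand, it depends on the less explicit removal and compactification algorithms and will produce larger certificates.

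\textbf{The points you flagged are automatic.} Since $\ord_{b_i}(qR)=\ord_{a_{i+1}}(qR)=1$, both exponents in the removal step equal $2$. So the added term is $\alpha q^2h'$, where $h'$ is the quadratic used in $\algRemoveStrictPosBet$, and $\hat f=q(R+\alpha qh')$. Hence $\hat R(b_i)=R(b_i)>0$ and $\hat R(a_{i+1})=R(a_{i+1})>0$, which gives coprimality. Once $\semialgebraic(\hat f)\cap(b_i,a_{i+1})=\emptyset$, the positivity of $q$ on the rest of $\semialgebraic(\hat f)$ forces $\hat R\ge0$ there.

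\textbf{What does need care is the order.} Remove the intervals from $qR$ first, then compactify. This works because $\hat f$ has even degree and positive leading coefficient. Multiplying by an $h$ that is positive on $[a_0,b_k]$ keeps $(b_i,a_{i+1})$ out of the semialgebraic set, since $\hat f<0$ there.

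If you compactify first, the argument can fail. For $m_{i_1}=m_{i_2}=1$, the polynomial $qRh$ has degree $6$ and leading coefficient $-2$. The added term $\alpha q^2h'$ also has degree $6$, with leading coefficient $\alpha\ge2$. So $\semialgebraic(\hat f)$ can become unbounded again, and then $\algSOSBasicLemma$ does not apply.

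A smaller remark: the unbounded rays are not merely possible. Your own sign check gives $R>0$ on $(-\infty,b_i]\cup[a_{i+1},\infty)$, so $\semialgebraic(qR)$ is always unbounded on both sides.
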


See the proof in the Appendix.

\begin{lemma}
  \th\label{sec:cert-quadr-fact} Let
  $a_0, b_i, c_{i_1}, c_{i_2}, a_{i+1}, b_k \in \mathbb{A}$ such that
  $a_0 \leq b_i < c_{i_1} \leq c_{i_2} < a_{i+1} \leq b_k$ and
  $m_{i_1}, m_{i_2}$ odd numbers. We have
  $(X - b_i)(X - a_{i+1}) \in Q := \quadmod(X-a_0, (X - b_i)(X -
  c_{i_1})^{m_{i_1}}, (X - c_{i_2})^{m_{i_2}}(X- a_{i+1}), -(X - b_k))$ and its
  certificates in $Q$ are computable.
\end{lemma}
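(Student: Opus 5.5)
The plan is to begin with a sanity check rather than a construction, because the ordering of $c_{i_1}$ and $c_{i_2}$ here is the reverse of the one in \th\ref{sec:splitting-step-5}. Every element of $Q$ is non-negative on the semialgebraic set of its generators. So before looking for certificates I will compute
\[
T := \semialgebraic\bigl(X-a_0,\ (X-b_i)(X-c_{i_1})^{m_{i_1}},\ (X-c_{i_2})^{m_{i_2}}(X-a_{i+1}),\ -(X-b_k)\bigr)
\]
and test whether $(X-b_i)(X-a_{i+1})$ is non-negative on $T$.

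Since $m_{i_1}$ is odd, $(X-b_i)(X-c_{i_1})^{m_{i_1}}\ge 0$ exactly on $(-\infty,b_i]\cup[c_{i_1},\infty)$. Since $m_{i_2}$ is odd, $(X-c_{i_2})^{m_{i_2}}(X-a_{i+1})\ge 0$ exactly on $(-\infty,c_{i_2}]\cup[a_{i+1},\infty)$. Intersecting with $[a_0,b_k]$ and using $b_i<c_{i_1}\le c_{i_2}<a_{i+1}$ gives
\[
T=[a_0,b_i]\cup[c_{i_1},c_{i_2}]\cup[a_{i+1},b_k].
\]
The middle interval is non-empty; it is at least the point $c_{i_1}$. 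At any $x\in[c_{i_1},c_{i_2}]$ we have $b_i<x<a_{i+1}$, so $(x-b_i)(x-a_{i+1})<0$.

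This contradicts membership. Suppose $(X-b_i)(X-a_{i+1})=\sigma_0+\sum_j\sigma_j g_j$. Evaluating at $x=c_{i_1}$ makes the right-hand side $\ge 0$, while the left-hand side is $<0$. So under the stated ordering $c_{i_1}\le c_{i_2}$ the lemma cannot hold. A concrete counterexample is $a_0=0$, $b_i=1$, $c_{i_1}=c_{i_2}=2$, $a_{i+1}=3$, $b_k=4$, $m_{i_1}=m_{i_2}=1$. There both middle generators vanish at $X=2$, and $(X-1)(X-3)$ takes the value $-1$ there.

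The proof proposal is therefore to record this evaluation argument and to identify the correct hypothesis. The intended statement is presumably the complementary configuration of \th\ref{sec:splitting-step-5}, where $c_{i_2}<c_{i_1}$. Then the two "negative windows" $(b_i,c_{i_1})$ and $(c_{i_2},a_{i+1})$ overlap and cover $(b_i,a_{i+1})$. In that case one proceeds as in the appendix proof of \th\ref{sec:splitting-step-5}.

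First, use $X-a_0$ and $-(X-b_k)$ to manufacture strictly positive linear polynomials $h_1,h_2$, as in the proofs of Section \ref{cert_prods}. Next, apply $\algSOSBasicLemma$ to split off $X-b_i$ and $X-a_{i+1}$ from the respective generators, in the manner of Section \ref{sec:cert-non-neg-cert}. Finally, combine the two linear pieces through the product-of-natural-generators identities of Section \ref{cert_prods}, Case 1, together with \th\ref{sec:cert-elem-ponats}.

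The main obstacle is not computational; it is the hypothesis itself. I expect the correct resolution to be that the inequality should read $c_{i_2}<c_{i_1}$, or that an extra generator vanishing to odd order somewhere in $(b_i,a_{i+1})$ is needed to exclude $[c_{i_1},c_{i_2}]$ from $T$. Without such a change, no certificate exists.
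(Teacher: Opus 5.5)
Your evaluation argument is correct: the lemma as stated is false. Every element of $Q$ is non-negative on $T=[a_0,b_i]\cup[c_{i_1},c_{i_2}]\cup[a_{i+1},b_k]$. The middle interval is non-empty when $c_{i_1}\le c_{i_2}$, and $(X-b_i)(X-a_{i+1})<0$ on it. Your example with generators $X$, $(X-1)(X-2)$, $(X-2)(X-3)$, $-(X-4)$ is a valid counterexample.

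So the paper's proof must fail somewhere, and it is worth locating where. Its first two steps are fine. It certifies $(X-b_i)(X-c_{i_1})\cdot(X-c_{i_2})(X-a_{i+1})$ using Section~\ref{cert_prods}. It then multiplies by the square $(X-c_{i_1})^{m_{i_1}-1}(X-c_{i_2})^{m_{i_2}-1}$ to obtain $p\in Q$. Note that these two quadratics are exactly the quadratic natural generators of your set $T$, which already signals the problem.

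The error is the next step, which invokes $\algRemoveStrictPosBet$ to delete $[c_{i_1},c_{i_2}]$ from $\semialgebraic(p)$. That procedure presupposes, as in \eqref{sec:remov-redund-strictl-8}, that the deleted intervals are disjoint from the semialgebraic set of the generators. The reason is that the auxiliary polynomial $h=(X-(b_i+\epsilon_1))(X-(a_{i+1}-\epsilon_2))$ is negative on the deleted interval, yet must be certified in the module by $\algPutinar$. Here $[c_{i_1},c_{i_2}]\subseteq T$, so this hypothesis fails. By your own argument, no $p'\in Q$ at all can make $p+p'$ negative on $[c_{i_1},c_{i_2}]$.

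Of your two suggested repairs, the first ($c_{i_2}<c_{i_1}$) is not a separate intended statement. It is precisely the hypothesis of \th\ref{sec:splitting-step-5}. The paper handles that case by a different construction: \th\ref{sec:splitting-step-7} produces an element $g_{\mathrm{nat}}(q_3 g_{\mathrm{nat}}+1)$ of the module, followed by $\algBasicLemma$ and $\algPutinar$. Your sketch for that case would also fail as written. In general $X-b_i$ and $X-a_{i+1}$ are not non-negative on $[a_0,b_i]\cup[a_{i+1},b_k]$, so they cannot be split off into the module.

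Your second repair is the right one, and it matches how the lemma is used. For the full generating set $G$ of the saturated module, $(b_i,a_{i+1})\cap\semialgebraic(G)=\emptyset$. So if the certificate is sought in $\QM(G)\supseteq Q$, then $h$ is strictly positive on $\semialgebraic(G)$ and the removal step becomes legitimate. The rest of the argument then goes through with $\QM(G)$ in place of $Q$.

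The main text's ``otherwise'' branch, which claims a certificate in $\QM(X-a_0,g_{i_1},g_{i_2},-(X-b_k))$, has the same defect. Add $(X-\tfrac32)(X-\tfrac52)$ to your example. This gives a saturated module with $\semialgebraic(G)=[0,1]\cup[3,4]$. The procedure is then forced into that branch with $g_{i_1}=(X-1)(X-2)$, $g_{i_2}=(X-2)(X-3)$ and $c_{i_1}=c_{i_2}=2$, which is exactly your counterexample module.
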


See the proof in the Appendix.

By \th\ref{sec:cert-elem-ponats-2}, we know that there exists a polynomial
$g_{i_1} \in G$ such that $g_{i_1}(b_i) = 0$ and
$\frac{d g_{i_1}}{dX}(b_i) < 0$, i.e., $g_{i_1} = (X - b_i) \tilde{g_{i_1}}$
where $\tilde{g_{i_1}}$ is not divisible by $X - b_i$. If $g_{i_1}(a_{i+1}) = 0$
and $\frac{d g_{i_1}}{dX}(a_{i+1}) > 0$ then
$g_{i_1} = (X - b_i)(X - a_{i+1}) t$ and $t$ is not divisible by $X -
a_{i+1}$. We can assume $(X - b_i)(X - a_{i+1})$ is non-negative over
$\semialgebraic(g_{i_1})$, as otherwise we use the algorithm
$\algRemoveStrictPosBet$ to find a polynomial that preserves the sign and order
conditions in $b_i$ and $a_{i+1}$. Hence, we use the procedure discussed in
Section \ref{sec:cert-non-neg-cert} to compute a certificate of
$(X - b_i)(X - a_{i+1})$ by setting $t_{1} = (X - b_i)(X - a_{i+1})$ and
$t_{2} = \frac{g_{i_1}}{t_{1}}$.

If $g_{i_1}(a_i) \neq 0$ or $\frac{d g_{i_1}}{dX}(a_{i+1}) = 0$ then let
$c_{i_1}$ be the real root of $g_{i_1}$ closest to the right, \footnote{i.e.,
 $c_{i_1} - b_i$ is minimal and positive.} of $b_i$ and
$m_{i_1} := \ord_{c_{i_1}}(g_{i_1})$. We find that
$(X - b_i)(X - c_{i_1})^{m_{i_1}}$ divides $g_{i_1}$ and
$b_i < c_{i_1} \leq a_{i+1}$. If $m_{i_1}$ is even, we use
\th\ref{sec:addit-constr-from-3} to compute a certificates of
$(X - b_i)(X - c_{i_1})^{m_{i_1} + 1}$ in
$\quadmod(\compact_G(g_{i_1})) \subseteq \quadmod(G)$. On the other hand, if
$m_{i_1}$ is odd, we use the procedure
discussed in Section \ref{sec:cert-non-neg-cert}, we obtain a certificate of
$(X - b_i)(X - c_{i_1})^{m_{i_1}}$ in
$\quadmod(\compact_G(g_{i_1})) \subseteq \quadmod(G)$.

Similarly, by \th\ref{sec:cert-elem-ponats-2}, there exists a polynomial
$g_{i_2} \in G$ such that $g_{i_2}(a_{i+1}) = 0$ and
$\frac{d g_{i_2}}{d X}(a_{i+1}) > 0$. Repeating the steps above to $g_{i_2}$ we
find certificates of $(X - c_{i_2})^{m_{i_2}}(X - a_{i+1})$ in $\quadmod(G)$
where $c_{i_2}$ is the real root of $g_{i_2}$ closest to the left of $a_{i+1}$
and $m_{i_2}$ is odd.

Finally, if $c_{i_2} < c_{i_1}$ we use \th\ref{sec:splitting-step-5} to obtain a
certificate of $(X - b_i)(X - a_{i+1})$ in
$\quadmod(X - a_0, g_{i_1}, g_{i_2}, -(X - b_k))$. Otherwise, we use
\th\ref{sec:cert-quadr-fact} to obtain a certificate of
$(X - b_i)(X - a_{i+1})$ in $\quadmod(X - a_0, g_{i_1}, g_{i_2}, -(X - b_k))$.

 \subsubsection{Certificates of $-1$}

If the semialgebraic set associated with the set of generators is empty, by
definition the set of natural generators is $\{-1\}$. As the algorithm 
$\algPutinar$ excludes the case when $\semialgebraic(G)$ is empty, we modify the
core result in \th\ref{lemma_7_averkov} to compute a certificate of $-1$ to use
this construction.

In the proof of \th\ref{lemma_7_averkov}, the author proves the non-negativity
of $f - g$ over $B$ for a particular $g \in \quadmod(G)$ constructed in the
proof considering two cases, on $B \cap S_1$ and $B \cap S_1^{c}$ where
$S_1 := \{x \in \mathbb{R} \mid g_i(x) + 2\epsilon \geq 0 \text{ for } g_i \in
G\}$. For the former case, the author uses the constant
$\mu := \min\{f(x) \mid x \in B \cap S_1 \}$. If $B$ is empty, then the
certificate is trivially a sequence of $0$'s as sums of squares
multipliers. However, if both $B$ and $S_1$ are non-empty, the construction
requires that $\mu$ is positive; nonetheless, for the case of an input
polynomial to be $-1$, this cannot be the case. To address this issue, we find an
$\epsilon > 0$ such that $S_1$ is empty. In doing so, the proof of non-negativity
of $f - g$ for the $g$ constructed only relies on the second case. We construct
such $\epsilon$ using the following proposition:

\begin{proposition}
 Let $G = \{g_1, \dots, g_s\}$ be such that $\semialgebraic(G)$ is empty. There
  exists $\epsilon > 0$ such that
  $\{x \in \mathbb{R} \mid g_i(x) + \epsilon \geq 0, g_i \in G\}$ is empty.
\end{proposition}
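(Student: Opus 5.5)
The plan is to use a compactness/continuity argument on the function $m(x) := \min_{1 \le i \le s} g_i(x)$. This function is continuous on $\mathbb{R}$ as a minimum of finitely many polynomials. Since $\semialgebraic(G)$ is empty, $m(x) < 0$ for every $x \in \mathbb{R}$. The desired statement is equivalent to finding $\epsilon > 0$ with $m(x) + \epsilon < 0$ for all $x$, that is, $\sup_{x \in \mathbb{R}} m(x) < 0$. Strict negativity pointwise does not give this directly, because the supremum could be approached at infinity. So the work splits into a bounded part and a part near $\pm\infty$.

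First I will handle large $|x|$. If some $g_i$ is a constant, it must be negative (otherwise it imposes no constraint, and we discard it). In that case $\epsilon := -g_i/2$ already works. So assume all $g_i$ are nonconstant, and look at their behaviour as $x \to +\infty$ and as $x \to -\infty$.

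The key point is that there is an index $i_+$ with $g_{i_+}(x) \to -\infty$ as $x \to +\infty$. Otherwise every $g_i$ tends to $+\infty$ at $+\infty$, because a nonconstant polynomial tends to $\pm\infty$. Then all $g_i$ would be nonnegative on some ray $[R, \infty)$, contradicting emptiness. The same reasoning gives an index $i_-$ with $g_{i_-}(x) \to -\infty$ as $x \to -\infty$. Hence there is $R > 0$, computable from Cauchy bounds of $g_{i_\pm}$ and of $g_{i_\pm}+1$, such that $g_{i_+}(x) \le -1$ for $x \ge R$ and $g_{i_-}(x) \le -1$ for $x \le -R$. So $m(x) \le -1$ outside $(-R, R)$.

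On the compact interval $[-R, R]$, the continuous function $m$ attains a maximum $M$, and $M < 0$. I will then take $\epsilon := \tfrac{1}{2}\min\{1, -M\}$, which gives $m(x) + \epsilon < 0$ everywhere, as required.

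The main obstacle is making this effective rather than merely existential, since the paper cares about computability. My approach is to compute $M$ exactly: partition $[-R, R]$ at all real roots of the differences $g_i - g_l$. On each resulting subinterval a single $g_i$ realizes the minimum, and its maximum there is attained at a critical point or an endpoint. All of these points are real algebraic numbers, so $M$ is computable, and it is strictly negative. Alternatively, one can avoid computing $M$ and instead take any positive rational below $-M$ after isolating the candidate points.
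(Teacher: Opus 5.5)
Your proof is correct and takes essentially the paper's route: form the pointwise minimum $s=\min_i g_i$, show it is bounded above by a negative number, and take $\epsilon$ to be half that gap (the paper uses $\epsilon=-v/2$ with $v=\max_{\mathbb{R}} s$). Your argument that some generator tends to $-\infty$ at each end supplies the step the paper leaves unproved, namely that $s$ attains a negative maximum on $\mathbb{R}$, so your version is the more complete one.
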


\begin{proof}
 Consider the case where $G$ is a singleton. In this case, it is enough
  to find the maximum value $v$ of $g_1 \in G$ over $\mathbb{R}$. Then set
  $\epsilon := -\frac{v}{2}$.

  When $G$ has more than one generator and contains a generator $g$ such that
  $\semialgebraic(g)$ is empty, we apply the previous
  construction. Otherwise, we apply the following construction. Let
  $s(x) := \min_{g_i \in G}(g_i(x))$. As the $g_i$'s are polynomials, the
  function $s$ is continuous. Since $\semialgebraic(G)$ is empty, we see that
 $\semialgebraic(s)$ is also empty. Hence, we find the maximum value $v$ of $s$
  over $\mathbb{R}$. Finally, we choose $\epsilon := -\frac{v}{2}$ as in the
  previous case. We prove that this choice of $\epsilon$ makes
  $\{x \in \mathbb{R} \mid g_i(x) + \epsilon \geq 0, g_i \in G\}$
  empty. Otherwise, there exists a point $x^{*}$ such that all $g_i \in G$
  satisfy $g_i(x^{*}) + \epsilon \geq 0$. In particular, let $g \in G$ be the
  polynomial such that $s(x^{*}) = g(x^{*})$, then
  $0 \leq g(x^{*}) - \frac{v}{2} = g(x^{*}) - \frac{\max_{x \in
      \mathbb{R}}(s(x))}{2} \leq g(x^{*}) - \frac{s(x^{*})}{2} \leq
  \frac{g(x^{*})}{2}$, which implies that $g(x^{*}) \geq 0$. As $g$ evaluates to
  a minimum value at $x^{*}$ then for every $g_i \in G$ we have $g_i(x^{*}) \geq
  0$. This means that $x^{*} \in \semialgebraic(G)$, which contradicts the
  assumption that $\semialgebraic(G)$ is empty.
\end{proof}

\begin{example1}
  \th\label{sec:certificates--1} Let $G := \{g_1, g_2\}$ where
  $g_1 := (X + 2)(X + 1)(X - 3)$ and $g_2 := -(X + 3)(X - 1)(X - 2)$. We can
  check $\semialgebraic(g_1, g_2)$ is empty, so $-1$ belongs to
  $\quadmod(g_1, g_2)$. We use the Algorithm $\algPutinar$ to compute its
  certificate as follows:

  \begin{enumerate}
  \item We construct the polynomial $g$ in Line 11 as
    $g = X^2 \cdot g_1 + (X + \frac{1}{2})^2 \cdot g_2 = -\frac{3}{2} -
    \frac{17}{4}X - 5 X^2 - \frac{1}{4}X^3 - X^4$.
\item The output in Line 12 of the algorithm is $(0, 0)$ as
    $\semialgebraic(g)$ is empty.
  \item We skip Lines 14, 15 as $\tilde{f} = -1$ is a global lower bound.
  \item The output in Line 16 is
    $\left(\frac{g - \gamma}{\gamma + \epsilon}\right)^{2N}$ where $\gamma =
    0$, $\epsilon= \frac{1}{4}$, and $N = 1$.
  \item The final certificate for $-1$ is $s_0 + s_1 \cdot g_1 + s_2 \cdot g_2$
    where 

    \begin{itemize}
    \item $s_0 =
      \frac{29117}{4}(-\frac{465}{29117}X^{6}-\frac{36296}{262053}X^{5}+\frac{50429}{174702}X^{4}+X^{3}+\frac{33127}{58234}X^{2}+\frac{12841}{524106}X-\frac{607}{29117})^2\\
      +\frac{16745322875}{4192848}(\frac{363342842}{16745322875}X^{6}-\frac{1495292264}{16745322875}X^{5}-\frac{10778946873}{16745322875}X^{4}+X^{2}+\frac{9257474543}{16745322875}X+\frac{5807084472}{83726614375})^2\\
      +\frac{73883223703533}{66981291500}(-\frac{11446181087671}{221649671110599}X^{6}+\frac{1229211466062151}{3989694079990782}X^{5}+X^{4}+\frac{1067029427027813}{3989694079990782}X+\frac{404381048939809}{6649490133317970})^2\\
      +\frac{1117740136583888157521}{4308869606390044560}(\frac{65437371804073029534}{1117740136583888157521}X^{6}+X^{5}\\
      -\frac{369072239463856436735}{1117740136583888157521}X-\frac{123666976097610416331}{1117740136583888157521})^2\\
      +\frac{1419243873302438492546149}{11177401365838881575210}(-\frac{33027302787456026609361}{2838487746604876985092298}X^{6}+X+\frac{1177306232301428228679629}{2838487746604876985092298})^2\\
      +\frac{4299616456470144258174085631}{510927794388877857316613640}(X^{6}+\frac{1464526336700556542015270577}{21498082282350721290870428155})^2\\
      +\frac{28772713500013993157761011337207}{19348274054115649161783385339500}$
    \item $s_1 = (X (6 + 17 X + 20 X^2 + X^3 + 4 X^4))^2$
    \item $s_2 = \frac{1}{4} ((1 + 2 X)(6 + 17 X + 20 X^2 + X^3 + 4 X^4))^2$
    \end{itemize}
  \end{enumerate}
\end{example1}

\begin{example1}
 Consider $G = \{g_1, g_2\}$ where $g_1 = -(X+1)^3$ and $g_2 = (X -
  1)^{3}$. We use the algorithm $\algPutinar$ to compute the certificate of $-1$
  in $\quadmod(G)$ as follows:

  \begin{enumerate}
  \item We construct the polynomial $g$ in Line 11 as $g = X^2 \cdot g_1 + (X -
    1)^{2} \cdot g_2 = -1 + 5 X - 11 X^2 + 7 X^3 - 8 X^4$
  \item The output in Line 12 of the algorithm is $(0, 0)$ as
    $\semialgebraic(g)$ is empty.
  \item We skip Lines 14, 15 as $\tilde{f} = -1$ is a global lower bound.
  \item The output in Line 16 is $\left(\frac{g - \gamma}{\gamma + \epsilon}\right)^{2N}$
    where $\gamma = 0, \epsilon = \frac{1}{4}$ and $N = 2$.
  \item The final certificate of $-1$ is $s_0 + s_1 \cdot g_1 + s_2 \cdot g_2$
    where
    \begin{itemize}
    \item $s_0 =
      \frac{17126597}{18}(-\frac{23414157}{342531940}X^{6}-\frac{24360705}{68506388}X^{5}+X^{4}-\frac{32228775}{68506388}X^{3}-\frac{36214623}{890583044}X^{2}+\frac{1551483}{34253194}X-\frac{77125}{17126597})^2\\+\frac{2845853977962779}{7124664352}(\frac{58824716915055}{406550568280397}X^{6}-\frac{18820203452463907}{25612685801665011}X^{5}+X^{3}-\frac{217008317257091}{406550568280397}X^{2}+\frac{1802315523100690}{25612685801665011}X+\frac{1428537132620}{2845853977962779})^2\\+\frac{566586418524550614601987}{4610283444299701980}(-\frac{441293949876084058754595}{1133172837049101229203974}X^{6}+X^{5}-\frac{4143062391045482797784955}{14731246881638315979651662}X^{2}+\frac{1798937864575798316360035}{14731246881638315979651662}X\\
-\frac{14420321768221338393045}{1133172837049101229203974})^2\\+\frac{4476823748424340803006975645797}{226634567409820245840794800}(X^{6}-\frac{185898767017264353856715669836315}{523788378565647873951816150558249}X^{2}+\frac{11440562998404149526041495015385}{58198708729516430439090683395361}X-\frac{1412446312023354092724091653625}{58198708729516430439090683395361})^2\\+\frac{460470784858936482413688371689482081326}{61283240292180801252362489615315133}(X^{2}\\-\frac{1795120297195923561216625168705161451155}{3683766278871491859309506973515856650608}X\\+\frac{615713974097244985604477960693254249353}{9209415697178729648273767433789641626520})^2\\+\frac{3253754066836303201506521383308561565538761447}{24902260045171284968932267140967190958110080}(X\\-\frac{996418825596998558913597005026911581215564210}{3253754066836303201506521383308561565538761447})^2\\+\frac{1102747962685314258229584020324746023825986273561231}{494895993565801716949141902401232214118445616088700}$
    \item $s_1 = 100 (X ((-1 + X)^5 - X^2 (1 + X)^3))^2$
    \item $s_2 = 100 ((-1 + X) ((-1 + X)^5 - X^2 (1 + X)^3))^2$
    \end{itemize}
  \end{enumerate}
\end{example1}

 \subsection{Examples}

\subsubsection{A simple example}

In this section, we discuss a simple example to highlight the key steps of our
approach. Consider the generators $G := \set{g_1, g_2}$, where
$g_1 := (X+3)(X+2)(X-1)$ and $g_2 := -(X+1)(X-2)(X-3)$. The semialgebraic set of
$G$ is $S = [-3, -2] \cup [2, 3]$. We will compute a certificate for the
polynomial $(X + 3)$, which is the left linear factor of $\Nat(S)$.

We compute a certificate for $X+3$ by splitting the generator $g_1$. The
semialgebraic set of $g_1$ is unbounded, thus we apply Algorithm
\ref{sec:addit-constr-from-4} to compute sums of squares $\sigma_1$ and $\tau_1$
and an auxiliary polynomial $h_1$ such that $1 = \sigma_1 g_1 + \tau_1 h_1$. We
obtain:

\begin{itemize}
\item $\sigma_1 := \frac{1}{2880}$
\item $\tau_1 := \frac{1}{2880}((X + \frac{17}{2})^2 + \frac{599}{4})$
\item $h_1 := -(X - 13)$
\end{itemize}

The semialgebraic set of $g_1 h_1$ is bounded. We can check 
$g_1 h_1 = \sigma_1 g_1^{2} \cdot h_1 + \tau_1 h_1^{2} \cdot g_1$. Since $h_1$
is strictly positive over $\semialgebraic(G)$, we find certificates in
$\quadmod(G)$ using the algorithm of Section \ref{sec:results--10};
$h_1 := s_0 + s_1 \cdot g_1 + s_2 \cdot g_2$ where $s_1 :=
\frac{589}{526513}X^2$, $s_2 := \frac{589}{526513}(X+5)^2$ and $s_0 := h_1 - s_1
\cdot g_1 - s_2 \cdot g_2$.

Therefore, the certificate of $g_1 h_1$ in $\quadmod(G)$ is
$$g_1 h_1 = \sigma_1 g_1^{2} s_0 +
 (\sigma_1 g_1^{2} s_1 + \tau_1 h_1^{2}) \cdot g_1
+ \sigma_1 g_1^{2} s_2 \cdot g_2$$ Next, using
$\algBasicLemma$ with inputs $X + 3, \frac{g_1}{X+3}h_1$ and $\{g_1 h_1\}$, we
find $\sigma_2$ and $\tau_2$ satisfying
$\sigma_2 (X+3) + \tau_2 \frac{g_1}{X+3} h_1$ :) :)

\begin{itemize}
\item $\sigma_2 := \frac{2543783}{78097856}X^2-\frac{16649805}{78097856}X
  +\frac{16870389}{39048928}-\frac{27573}{19524464}X^3-\frac{232}{1102409}\\
  (\frac{2543783}{78097856}X^2
  -\frac{16649805}{78097856}X+\frac{16870389}{39048928}
  -\frac{27573}{19524464}X^3)(X+3)(2-\frac{232}{1102409}(X+3)(-X^3+12X^2+15X-26)
  +(1-\frac{232}{1102409}(X+3)(-X^3+12X^2+15X-26))^2+(1-\frac{232}{1102409}(X+3)
  (-X^3+12X^2+15X-26))^3+(1-\frac{232}{1102409}(X+3)(-X^3+12X^2+15X-26))^4+(1
  -\frac{232}{1102409}(X+3)(-X^3+12X^2+15X-26))^5)(-X^3+12X^2+15X-26)$
\item
  $\tau_2 := \frac{889403}{78097856}-\frac{27573}{19524464}X
  +\frac{232}{1102409}(\frac{2543783}{78097856}X^2-\frac{16649805}{78097856}X
  +\frac{16870389}{39048928}-\frac{27573}{19524464}X^3)(X+3)^2(2
  -\frac{232}{1102409}(X+3)(-X^3+12X^2+15X-26)+(1
  -\frac{232}{1102409}(X+3)(-X^3+12X^2+15X-26))
  ^2+(1-\frac{232}{1102409}(X+3)(-X^3+12X^2+15X-26))^3+(1
  -\frac{232}{1102409}(X+3)(-X^3+12
  X^2+15X-26))^4+(1-\frac{232}{1102409}(X+3)(-X^3+12X^2+15X-26))^5)$
\end{itemize}

We can check $X + 3 = (X+3)^2 \sigma_2 + \tau_2 \cdot g_1 h_1$. Since
$\sigma_2$ and $\tau_2$ are strictly positive polynomials over
$\semialgebraic(g_1 h_1)$, we compute certificates of these in
$\quadmod(g_1 h_1)$. Using a square free decomposition, we notice that
$\sigma_2 := \sigma_{2_{sos}} \sigma_{2_{rest}}$ where
$\sigma_{2_{sos}} := (232X^4-2088X^3-11832X^2-4408X+1120505)^{6} \in
\sum{\qX}^2$ and $\sigma_{2_{rest}} :=
-\frac{3939}{5006538738653003018117160074353414225724432}X^3\\
+\frac{21023}{1158537889936232103365954397371038002646976}X^2\\
-\frac{16649805}{140183084682284084507280482081895598320284096}X\\
+\frac{16870389}{70091542341142042253640241040947799160142048}$ is strictly
positive over $\semialgebraic(g_1 h_1)$. We find certificates for the strictly
positive polynomial $\sigma_2$ using the algorithm $\algPutinar$ from
\cite{weifeng2025};
$\sigma_{2} := \sigma_{2_{sos}} s_3 + \sigma_{2_{sos}} s_4 \cdot g_1 h_1$ where:

\begin{itemize}
\item $s_3 := \sigma_{2_{rest}} - s_4 g_1 h_1$,
\item $s_4 := \frac{1}{1107706970296998652079462328093230026002541}\\
  +\frac{1}{75725680710000000000}(\frac{519162}{1690512571} g_1 h_1
  -\frac{1111230756}{1690512571})^{116}$
\end{itemize}

Using the algorithm $\algPutinar$, we also find
$\tau_2 = s_5 + s_6 \cdot g_1 h_1$ where

\begin{itemize}
\item $s_5 := \tau_2 - s_6 \cdot g_1 h_1$, 
\item $s_6 := \frac{1}{30574736731299198850326303450749} (\frac{2999}{1000} +
  X)^{22} \\
  +336370532867760591471442397905687904583764945\dots\\
  \dots 9211931020821776966282970246217703491 (\frac{57575}{240688632}g_1
h_1 -\frac{20539225}{40114772})^{304}$ 
\end{itemize}

Finally, the certificate of $X+3$ in $\quadmod(G)$ is:
\vspace{-4mm}

\begin{equation*}
  \label{sec:comp-cert-elem}
  \begin{split}
    X + 3
    &= (X+3)^2 \sigma_2 + \tau_2 \cdot g_1 h_1\\
    &=(X+3)^2 (\sigma_{2_{sos}} s_3 + \sigma_{2_{sos}} s_4 \cdot g_1 h_1) +
      (s_5 + s_6 \cdot g_1 h_1) g_1 h_1\\
    &=((X+3)^2\sigma_{2_{sos}} s_3 + s_6 (g_1 h_1)^{2}) +
      (\sigma_{2_{sos}} s_4 + s_5) \cdot g_1 h_1\\
    &=
      ((X+3)^2\sigma_{2_{sos}} s_3 + s_6 (g_1 h_1)^{2}) +
      (\sigma_{2_{sos}} s_4 + s_5)\sigma_1 g_1^{2} s_0\\
    &+
      (\sigma_{2_{sos}} s_4
      + s_5)(\sigma_1 g_1^{2} s_1 + \tau_1 h_1^{2}) \cdot g_1
      + (\sigma_{2_{sos}} s_4 + s_5)\sigma_1 g_1^{2} s_2\cdot g_2\\
  \end{split}
\end{equation*}

\subsubsection{Comparison with an alternative approaches}

In this section, we compare the certificates produced by our $\algSaturatedCert$
algorithm, the algorithm by Shang et al, and the Extended Gram matrix
approach. For the latter, we consider $G = \{g_1\}$ where
$g_1 := -(X + 2)(X + 1)(X - 1)(X - 3)$ and compute a certificate for different
input polynomials.

\begin{example1}
  \label{sec:comparison-with-an-1}
  Let the input polynomial be the left natural generator $X + 2$.

  \begin{itemize}
  \item $\algSaturatedCert$: We use $\algSOSBasicLemma$ with inputs $X + 2$ and
    $-(X + 1)(X - 1)(X - 3)$ to obtain sums of squares $\sigma, \tau$ such that
    $1 = \sigma (X + 2) + \tau (-(X + 1)(X - 1)(X - 3))$ where

    \begin{itemize}
    \item $\sigma = \frac{3}{5}(-\frac{5}{18}X+1)^2+\frac{11}{540}(X)^2$
    \item $\tau = \frac{1}{15}$
    \end{itemize}

    Hence, we obtain the certificate $X + 2 = s_0 + \frac{1}{15} \cdot g_1$
    where
    $s_0 = \frac{3}{5}((X +
    2)(-\frac{5}{18}X+1))^2+\frac{11}{540}((X+2)X)^2$. In expanded form, the
    degrees of the certificates are $\deg(s_0) = 4$ and $\deg(s_1) = 0$.
    
  \item \wgm: First, we fix the input polynomial $f$ at $x = -2$ with
    $f_1 := f - \frac{1}{15}\frac{(X - (-1))^{2}}{(-2 - (-1))^{2}}\frac{(X -
      1)^{2}}{(-2 - 1)^{2}}\frac{(X - 3)^{2}}{(-2 - 3)^{2}} \cdot g_1$ which is
    a sums of squares, i.e.,

    \begin{equation*} 
      \begin{split}
        f_1 &=
              \frac{63}{125}((X+2)(\frac{521}{559872}X^{4}
             +\frac{13513}{435456}X^{3}-\frac{61819}{483840}X^{2}-\frac{16}{63}X+1))^2\\ 
            &+\frac{2153239}{10080000}((X+2)(\frac{39646495}{2092948308}X^{4}-\frac{3092335}{38758302}X^{3}-\frac{558449}{2153239}X^{2}+X))^2\\
            &+\frac{21390631747541989}{428635813478400000}((X+2)\\
            &(\frac{21685151003276690}{577547057183633703}X^{4}-\frac{25691071038847010}{64171895242625967}X^{3}+X^{2}))^2\\
            &+\frac{202747529142388112939383}{89820118333198713490560000}((X+2)\\
            &(-\frac{923471009955226677313145}{3649455524562986032908894}X^{4}+X^{3}))^2\\
            &+\frac{59573723983908118887294869491}{13791788753274864784463186035584000}((X+2)(X^{4}))^2
      \end{split}
    \end{equation*}

    Hence, we obtain the certificates $X + 2 = s_2 + s_3 \cdot g_1$ where 

    \begin{itemize}
    \item $s_2 = f_1$
    \item $s_3 = \frac{1}{15}\frac{(X - (-1))^{2}}{(-2 - (-1))^{2}}\frac{(X -
        1)^{2}}{(-2 - 1)^{2}}\frac{(X - 3)^{2}}{(-2 - 3)^{2}}$
    \end{itemize}

    In expanded form, the degrees of the certificates are $\deg(s_2) = 10$ and
    $\deg(s_3) = 6$.

  \item Extended Gram matrix method: Let $m^{d} := \left(
      \begin{array}{c}
        1 \\
        X \\
        \vdots \\
        X^{d-1} \\
      \end{array}
    \right)$. There are no positive semidefinite matrices $A_0$ and $A_1$ of
    dimension $d \times d$ with $d < 3$ such that
    $f = ((m^{d})^{T} A_0 m^{d}) + ((m^{d})^{T} A_1 m^{d}) \cdot g_1$. However,
    there exist positive semidefinite matrices that satisfy these conditions with
    $d = 3$. We obtain the following.

    \begin{equation*} 
      \label{sec:comparison-with-an}
      \begin{split}
        A_0
        &= \left(
          \begin{array}{ccc}
            2.40002 & 0.533338 & -0.333341 \\
            0.533338 & 0.199996 & -0.0333376 \\
            -0.333341 & -0.0333376 & 0.0666673 \\
          \end{array}
          \right)\\
        A_1
        &= \left(
          \begin{array}{ccc}
            0.066669 & 4.526684 \times 10^{-7} & -0.000024 \\
            4.526684 \times 10^{-7} & 0.000050 & 4.720584 \times 10^{-8}\\
            -0.000024 & 4.720584 \times 10^{-8} & 2.817258 \times 10^{-8}\\
          \end{array}
          \right)
      \end{split}
    \end{equation*}

    The certificate obtained with this approach is 

    {
      \small
      \begin{equation*} 
        \begin{split}
          f &= ((m^{3})^{T} A_0 m^{3}) + ((m^{3})^{T} A_1 m^{3}) \cdot g_1\\
            &\approx (1.5492 +0.344267 X-0.21517 X^2)^2+(0. +0.000581784
              X^2)^2\\
            &\quad+(0. +0.285441 X+0.14272 X^2)^2\\
            &\quad+((0.258205 +1.75314*10^{-6} X-0.0000967477 X^2)^2\\
            &\quad\quad+(0. +0.00708567
              X+6.68609*10^{-6} X^2)^2+(0. +0.000136995 X^2)^2) \cdot g_1\\
        \end{split}
      \end{equation*}
    }

    In expanded form, the degrees of the certificates are $\deg((m^{3})^{T} A_0
    m^{3}) = 4$ and $\deg((m^{3})^{T} A_1 m^{3}) = 4$.
  \end{itemize}
\end{example1}

\begin{example1}
  Let the input polynomial be a natural generator $(X + 1)(X - 1)$.

  \begin{itemize}
  \item $\algSaturatedCert$: We use $\algSOSBasicLemma$ with inputs
    $(X + 1)(X - 1)$ and $-(X + 2)(X - 3)$ to obtain sums of squares
    $\sigma, \tau$ such that
    $1 = \sigma (X + 1)(X - 1) + \tau (-(X + 2)(X - 3))$ where

    \begin{itemize}
    \item
      $\sigma =
      \frac{1224263}{3896156}(-\frac{1518593}{9794104}X^{2}
      -\frac{723815}{14691156}X+1)^2
      +\frac{14940634371083}{686868427156032}(
      -\frac{10649700814773}{29881268742166}X^{2}+X)^2
      +\frac{360146266268039063}{931376675979220111168}(X^{2})^2
      + \frac{1323}{170573}((X + 2)(X - 3))^{2}$
    \item
      $\tau =
      \frac{1043377}{4093752}(-\frac{280155}{2086754}X^{2}
      -\frac{138821}{2086754}X+1)^2
      +\frac{222337281587}{17085306722016}(
      -\frac{105150010263}{222337281587}X^{2}+X)^2
      +\frac{76390416521127}{303397897057114808}(X^{2})^2 
      + \frac{10426}{974039}((X + 1)(X - 1))^{2}$
    \end{itemize}

    Hence, we obtain the certificate $(X + 1)(X - 1) = s_0 + s_1 \cdot g_1$ where

    \begin{itemize}
    \item $s_0 = (X+1)^2(X-1)^2 \sigma$
    \item $s_1 = \tau$
    \end{itemize}
    
    In expanded form, the degrees of the certificates are $\deg(s_0) = 8$ and
    $\deg(s_1) = 4$.

  \item \wgm: Let
    $t_n(a, b) := \sum_{i=0}^{n}{\frac{(-1)^{i}}{i!}  \frac{(X -
        a)^{2i}}{(2b)^{i}}}$ First, we fix the input polynomial $f$ at $x = -1$
    with
    $$f_1 := f - \frac{1}{4}t_{1}(-1, 4)^{2}\frac{(X - (-2))^{2}}{(-1 -
      (-2))^{2}} 
    \frac{(X - 1)^{2}}{(-1 - 1)^{2}} \frac{(X - 5)^2}{(-1 - 5)^2} \cdot g_1$$ 
    which is a local sums of squares in $\mathbb{R}[[X - 1]]$.

    We fix $f_1$ at $x = 1$ with

    $f_2 := f_1 - \frac{(X - 2)^{2}}{6}
    \frac{(X - (-2))^{2}}{(1 - (-2))^{2}}
    \frac{(X - (-1))^{2}}{(1 - (-1))^{2}}
    \frac{(X - 5)^2}{(1 - 5)^2}
    \cdot g_1$, which is a sums of squares. Therefore, the certificate
    obtained using this method is $(X + 1)(X - 1) = s_2 + s_3 \cdot g_1$ where

    \begin{itemize}
    \item $s_2 =
      \frac{2267}{4608}
      ((X+1)(X-1)(-\frac{3445}{1305792}X^{5}+\frac{47369}{3917376}X^{4}
      +\frac{12215}{217632}X^{3}-\frac{359423}{1450880}X^{2}
      -\frac{5875}{27204}X+1))^2 
      +\frac{13078917289}{30085447680}((X+1)(X-1)
      (\frac{277035625}{52315669156}X^{5}
      +\frac{828545605}{470841022404}X^{4}
      -\frac{7343687830}{39236751867}X^{3}
      +\frac{3249138825}{104631338312}X^{2}+X))^2
      +\frac{9964899093437107939}{666512004476200550400}
      ((X+1)(X-1)
      (\frac{256724554600819050}{9964899093437107939}X^{5}
      -\frac{176869388065834730}{1423557013348158277}X^{4}
      -\frac{2239275551240532900}{9964899093437107939}X^{3}+X^{2}))^2
      +\frac{175161724132613869398983}{952160936147766697988063232}((X+1)(X-1)
      (\frac{8243659133685020005983}{875808620663069346994915}X^{5}
      -\frac{124440064307804420379113}{525485172397841608196949}X^{4}+X^{3}))^2\\
      +\frac{7501703071687520501626007353}{188290838042063601193243027537920}
      ((X+1)(X-1)
      (-\frac{1488811505704947470507385048}{7501703071687520501626007353}X^{5}
      +X^{4}))^2\\
      +\frac{2330206160538201965573873681}{746665511493659640584241064664678400}
      ((X+1)(X-1)(X^{5}))^2$ 
    \item $s_3 =
      \frac{(X - 2)^{2}}{6}\frac{(X - (-2))^{2}}{(1 - (-2))^{2}}
      \frac{(X - (-1))^{2}}{(1 - (-1))^{2}} \frac{(X - 5)^2}{(1 - 5)^2}\\
      + \frac{1}{4}(t_{1}(-1, 4))^{2}\frac{(X - (-2))^{2}}{(-1 - (-2))^{2}}
      \frac{(X -1)^{2}}{(-1 - 1)^{2}}\frac{(X - 5)^2}{(-1 - 5)^2}$
    \end{itemize}

    In expanded form, the degrees of the certificates are $\deg(s_2) = 14$ and
    $\deg(s_3) = 10$.

  \item Extended Gram matrix method: We use the vector of monomials $m^{d}$ as
    in Example~\ref{sec:comparison-with-an-1}. There are no positive
    semidefinite matrices $A_0$ and $A_1$ of dimension $d \times d$ with $d < 4$
    such that $f = ((m^{d})^{T} A_0 m^{d}) + ((m^{d})^{T} A_1 m^{d}) \cdot
    g_1$. However, there exist positive semidefinite matrices that
    satisfy these conditions with $d = 4$. We obtain the following.

    \begin{equation*} 
      \label{sec:comparison-with-an-3}
      \begin{split}
        A_0
        &= \left(
          \begin{array}{cccc}
            0.10149 & -0.0332116 & -0.10149 & 0.0332115 \\
            -0.0332116 & 0.0247483 & 0.0332116 & -0.0247482 \\
            -0.10149 & 0.0332116 & 0.10149 & -0.0332115 \\
            0.0332115 & -0.0247482 & -0.0332115 & 0.0247481 \\
          \end{array}
          \right)\\
        A_1
        &= \left(
          \begin{array}{cccc}
            0.183582 & -0.020833 & -0.000437 & 0.000017
            \\
            -0.020833 & 0.025625 & -0.000016 &
                                               -0.000021 \\
            -0.000437 & -0.000016 & 0.000044 &
                                               2.766806 \times 10^{-8}\\
            0.000017 & -0.000021 &
                                   2.766806 \times 10^{-8}&
                                                            5.527610 \times 10^{-8}\\
          \end{array}
          \right)
      \end{split}
    \end{equation*}

    The certificate obtained with this approach is 

    {
      \small
      \begin{equation*} 
        \begin{split}
          f &= ((m^{4})^{T} A_0 m^{4}) + ((m^{4})^{T} A_1 m^{4}) \cdot g_1\\
            &\approx(0. +0.117814 X+2.53858*10^{-7} X^2-0.117813 X^3)^2\\
            &\quad+(0. +0.000294231
              X^2-4.34777*10^{-8} X^3)^2\\
            &\quad+(0. +0.00029423 X^3)^2+(0.318576 -0.10425 X-0.318575 X^2+0.10425
              X^3)^2\\
            &\quad+ ((0. +0.152517 X-0.000436259 X^2-0.000130442
              X^3)^2\\
            &\quad\quad+(0. +0.0065479 X^2+1.81908*10^{-6} X^3)^2\\
            &\quad\quad+(0.428464 -0.0486242
              X-0.00102061 X^2+0.0000403189 X^3)^2\\
            &\quad\quad+(0. +0.000191395 X^3)^2) \cdot g_1 
        \end{split}
      \end{equation*}
    }

    In expanded form, the degrees of the certificates are $\deg((m^{4})^{T} A_0
    m^{4}) = 6$ and $\deg((m^{4})^{T} A_1 m^{4}) = 6$.

  \end{itemize}
\end{example1}

 \section{Experiments} \label{experiments}

The algorithms discussed in Section \ref{cert_prods} have been implemented in \maple
as a package named \csq \footnote{\ csq is freely available on the following
  GitHub repository: \url{https://github.com/typesAreSpaces/SatQMCert}}. It
provides a procedure, \lpq, which uses natural generators and an arbitrary
polynomial as generators. The method first checks the membership of the input
polynomial in the quadratic module by checking non-negativity and returns a
\maple's table encoding generators in the quadratic module as indices and the
sum of squares multipliers as entries.
The implementation has been experimented with on several examples. 

The implementation has been compared with the implementation with \rc
\cite{10.1145/3282678.3282681}, a \maple package that computes
certificates of strictly positive elements in quadratic modules using a
semidefinite programming approach.

All experiments were performed on an M1 MacBook Air with 8GB of RAM running
macOS.

To the best of our knowledge, this is the first implementation that computes
certificates for saturated quadratic modules. Previous implementations have
focused on the case whenever the input polynomial is strictly positive
polynomials using Positivstellens\"atz's results. In particular, the latter
excludes polynomials sharing common zeros with the original generators without,
and most of the time, it is needed to include a polynomial in the generators
that witnesses the Archimedean property. A consequence is that these
implementations generate certificates in terms of the original generators, and
it is not necessary to include an Archimedean polynomial in the original
generators.

\subsection{Computing certificates of natural generators}

This benchmark uses a monogenic generator of the form
$p = -\prod_{i=1}^{k}{(X + i)(X-i)}$. We compute a certificate for the left
natural generator $X + k$ and a certificate for the right natural generator
$-(X - k)$. Table \ref{tab:comp-cert-natur-1} reports the time in seconds
required to complete this benchmark.

\begin{table}[htpb]
  \centering
  \begin{tabular}{ccc}
    \makecell{$k$}
    & \makecell{Left natural\\generator (seconds)}
    & \makecell{Right natural\\generator (seconds)}\\ 
    \hline
    1 & 0.101 & 0.035\\
    2 & 0.189 & 0.157\\
    3 & 0.295 & 0.297\\
    4 & 0.371 & 0.352\\
    5 & 0.524 & 0.534\\
    6 & 0.705 & 0.666\\
    7 & 0.903 & 0.847\\
    8 & 1.220 & 1.141\\
    9 & 1.523 & 1.345\\
    10 & 1.848 & 1.683\\
  \end{tabular}
  \caption{Computing certificates of products of natural generators}
  \label{tab:comp-cert-natur-1}
\end{table}

Additionally, \csq and \rc are compared on strictly positive polynomials. Table
\ref{tab:comp-cert-natur-2} reports the timings of the two packages.

\begin{table}[htpb]
  \centering
  \begin{tabular}{cccc}
    \makecell{$k$}
    & \makecell{Average time\\\csq (seconds)}
    & \makecell{Average time\\\rc (seconds)} \\
    \hline
    3 & 0.451 & 0.028 \\
    4 & 2.497 & 0.023 \\
    5 & 12.162 & 0.022 \\
    6 & 58.307 & 0.0155 \\
  \end{tabular}
  \caption{Computing certificates of strictly positive polynomials}
  \label{tab:comp-cert-natur-2}
\end{table}

 \section{Conclusions} \label{conclusion_sect}

The paper introduces a method for computing certificates of the membership of a
polynomial in an Archimedean univariate saturated quadratic module specified by a
finite set of generators. The method is novel and is based on computing
certificates in terms of the natural generators of a univariate saturated
quadratic module and combining the certificates of its natural generators in
terms of the original generators.  Given that a univariate saturated quadratic
module has a unique set of natural generators determined by its semialgebraic
set, the first step works for different sets of generators of the univariate
saturated quadratic module defined by the same semialgebraic set; this is one
of the advantages of the proposed method.  The second step changes for different
sets of generators, since certificates of natural generators must be computed in
terms of a given set of generators.

The method has been implemented in Maple and compared with \rc in a collection
of examples.

\section{Acknowledgments}
  We appreciate Prof. Chenqi Mou's and Weifeng Shang's insightful discussions on
  certificate computation in quadratic modules. We also thank anonymous
 reviewers who provided valuable feedback on this work.

\bibliographystyle{elsarticle-harv} 
\bibliography{references}

\appendix

\section{Implementation of Basic Lemma in univariate case}
\label{basiclemma-appendix}

This appendix includes an algorithmic description (Algorithm
\ref{sec:impl-basic-lemma}) of the constructive proof of
\th\ref{sec:results-}. The implementation is done using the programming
language $\maple$. The algorithm \ref{sec:impl-basic-lemma-sos} describes a
procedure for effectively constructing certificates for \th\ref{sec:results--1}.

\begin{algorithm}[ht]
  \DontPrintSemicolon
  \caption{Algorithm to find $\epsilon$ in Basic Lemma\\ $ \epsilon_{curr} :=
    \algFindEps(f, g, s_1, t_1, S, \delta) $} 
  \label{sec:impl-basic-lemma-findeps}
  \kwInput{$f, g, s_1, t_1 \in \aX$, $S \subseteq \mathbb{R}$, $\delta \in \mathbb{Q}$}
  \kwOutput{$\epsilon_{curr} \in \mathbb{Q}$}
  \kwRequires{$S$ is a non-empty finite union of intervals with algebraic
    numbers as end points.}

  \tcc{Find positive rational $\epsilon$ so that $f > 0, 1 > \epsilon \delta f$,
    and $\epsilon t_1 + s_1 f > 0$ over $L_2 = \{x \in S \mid g(x) \leq
    \epsilon\}$}

  $\epsilon_{top} := \ceil{\max_{x \in S}(g)}$\;
  $\epsilon_{curr} := \frac{\epsilon_{top}}{2}$\;

  \While{$\texttt{true}$}{
    $cond_1 := isEmpty(S \cap \semialgebraic(\epsilon_{curr} - g) \cap
    \semialgebraic(-f))$\; 
    $cond_2 := isEmpty(S \cap \semialgebraic(\epsilon_{curr} - g) \cap
    \semialgebraic(\epsilon_{curr} \delta f - 1))$\;
    $cond_3 := isEmpty(S \cap \semialgebraic(\epsilon_{curr} - g) \cap
    \semialgebraic(-(\epsilon_{curr}t_1 + s_1 f)))$\;
    \uIf{$cond_1$ and $cond_2$ and $cond_3$}{
      \Return $\epsilon_{curr}$\; 
    }
    $\epsilon_{curr} = \frac{\epsilon_{curr}}{2}$\;
  }
\end{algorithm}

\newpage

\begin{algorithm}[ht]
  \DontPrintSemicolon
  \caption{Algorithm to find $k$ in Basic Lemma\\ $ k :=
    \algFindK(f, g, s_1, t_1, S, \delta) $}
  \label{sec:impl-basic-lemma-findk}
  \kwInput{$f, g, s_1, t_1 \in \aX$, $S \subseteq \mathbb{R}$, $\delta \in
    \mathbb{Q}$} 
  \kwOutput{$k \in \mathbb{N}$}
  \kwRequires{$S$ is a non-empty finite union of intervals with algebraic
    numbers as end points.} 

  $\epsilon := \algFindEps(f, g, s_1, t_1, S, \delta)$\; 
  $L_2 := S \cap \semialgebraic(\epsilon - g)$\;
  $L_3 := S \cap \semialgebraic(g - \epsilon)$\;

  \tcc{Find $k$ such that $\epsilon t_1 + s_1 f > s_1 f (1 - \epsilon \delta
    f)^{k}$ over $L_2$ and $s_1 f (1 - \epsilon \delta f)^{k} < 1$ over $L_3$}
  $k := 0$\;

  \uIf{$L_2$ is non-empty}{
    \While{true}{
      $x^{*} := \argmax_{x \in L_2}(s_1 f (1 - \epsilon \delta f)^k)$\;  
      $value := (s_1 f (1 - \epsilon \delta f)^{k}) (x^{*})$\;
      $test := (\epsilon t_1 + s_1 f)(x^{*})$\;
      \uIf{$value < test$}{
        \kwBreak\;
      }
      \Else{
        $k := \left \lceil \frac{\log(\frac{test}{(s_1 f)(x^{*})})}{\log((1 -
            \epsilon  
            \delta f)(x^{*}))} \right \rceil$\;  
      }
    }
  }
  
  \uIf{$L_3$ is non-empty}{
    \While{true}{
      $x^{*} := \argmax_{x \in L_3}(s_1 f (1 - \epsilon \delta f)^{k})$\;
      $value := (s_1 f (1 - \epsilon \delta f)^{k})(x^{*})$\;
      \uIf{$value < 1$}{
        \kwBreak\;
      }
      \Else{
        $k := \left \lceil \frac{\log(\frac{1}{(s_1 f)(x^{*})})}{\log((1 -
            \epsilon \delta f)(x^{*}))} \right \rceil$\;
      }
    }
  }
  \Return $k$\;
\end{algorithm}

\begin{algorithm}[ht]
  \DontPrintSemicolon
  \caption{Implementation of Basic Lemma\\ $ \sigma, \tau := \algBasicLemma(f,
    g, G)$}
  \label{sec:impl-basic-lemma}
  \kwInput{$f, g \in \aX$, $G \subseteq \aX$}
  \kwOutput{$\sigma, \tau \in \aX$}
  \kwRequires{$\semialgebraic(G)$ is a union of bounded intervals
    $\bigcup_{i=0}^{k}{\left[a_i, b_i\right]}$; $f, g$ are non-negative over
    $\semialgebraic(G)$; $f, g$ are relatively prime.}
  \kwEnsures{$\sigma, \tau$ are strictly positive over $\semialgebraic(G)$ and
    $1 = \sigma f + \tau g$}
  
  $S := \semialgebraic(G)$\;

  Find $s, t \in \aX$ such that $1 = s f + t g$ using the extended Euclidean
  algorithm\;

  \uIf{$S$ is empty}{ 
    \Return $s, t$\;
  }

  $L_1 := S \cap \semialgebraic(-s)$\;

  \tcc{Find $N$ such that $s + N g > 0$ over $L_1$}

  \uIf{$isEmpty(L_1)$}{
    $N := 0$\;
  }
  \Else{
    \tcp{The polynomial $g$ is positive over $L_1$}
    $N := (1+\frac{1}{100})\left \lceil \max_{x \in L_1}(-\frac{s}{g}) \right
    \rceil$\;  
  }

  $s_1 := s + N g$\;
  $t_1 := t - N f$\;

  \tcc{Find positive rational $\delta$ such that $\delta f g < 1$ over $S$}

  $\delta := \frac{1}{\ceil{\max_{x \in S}(f g)}}$\;
  
  $k := \algFindK(f, g, s_1, t_1, S, \delta)$\;

  $r := s_1 \delta f \sum_{i=0}^{k-1}{(1 - \delta f g)^i}$\;
  $\sigma := s_1 - r g$\;
  $\tau := t_1 + r f$\;
  \Return $\sigma, \tau$\;
\end{algorithm}

\begin{algorithm}[ht]
  \DontPrintSemicolon
  \caption{Implementation of SOS variant of Basic Lemma\\ $ \sigma, \tau :=
    \algSOSBasicLemma(f, g)$}
  \label{sec:impl-basic-lemma-sos}
  \kwInput{$f, g \in \aX$}
  \kwOutput{$\sigma, \tau \in \aX$}
  \kwRequires{$f, g$ are relatively prime; $f, g$ are non-negative over
    $\semialgebraic(f g)$; $\semialgebraic(f g)$ is bounded}
  \kwEnsures{$\sigma, \tau$ are sums of squares and
    $1 = \sigma f + \tau g$}

  $\sigma_1, \tau_1 := \algBasicLemma(f, g, \{f g\})$\; 
  \uIf{$\sigma_1$ is a sums of squares}{
    $\sigma_{1,0}, \sigma_{1, 1} := \sigma_1, 0$\;
  }
  \Else{
    $\sigma_{1,0}, \sigma_{1, 1} := \algPutinar(\{f g\}, \sigma_1)$\;
  }
  \uIf{$\tau_1$ is a sums of squares}{
    $\tau_{1,0}, \tau_{1, 1} := \tau_1, 0$\;
  }
  \Else{
    $\tau_{1,0}, \tau_{1, 1} := \algPutinar(\{f g\}, \tau_1)$\;
  }
  $\sigma := \sigma_{1, 0} + \tau_{1, 1} g^{2}$\;
  $\tau := \tau_{1, 0} + \sigma_{1, 1} f^{2}$\;
  \Return $\sigma, \tau$\;
\end{algorithm}

 \section{Additional constructions from Section \ref{sec:natur-gener-assoc-1999}} 

In this section, we provide an algorithmic description of products in the
preordering structure. It is important to specify this operation since some
products might introduce square terms that are not included in the generators
associated with the preordering structure.

\begin{proposition}
  \th\label{sec:addit-constr-from} Let $G$ be the set of natural generators of
  some set of generators, $p_i = \sigma_i g_i$, and $p_j = \sigma_j g_j$ with
  $\sigma_i, \sigma_j \in \sum{\aX}^2$ and $g_i, g_j$ generators of
  $\PO(G)$. The certificates of the product $p_i p_j$ in $\PO(G)$ are
  computable.
\end{proposition}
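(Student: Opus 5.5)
The plan is to view $\PO(G)$ as the free sum over square-free products $g^e = g_1^{e_1}\cdots g_s^{e_s}$, $e\in\{0,1\}^s$, with sums-of-squares coefficients. A certificate of an element is then a tuple $(\sigma_e)_e$ with $\sigma_e\in\sum{\aX}^2$. The product $p_ip_j = (\sigma_i\sigma_j)\, g_i g_j$ must be rewritten in this normal form. The coefficient $\sigma_i\sigma_j$ is itself a sum of squares, and it is computable: if $\sigma_i=\sum_k u_k^2$ and $\sigma_j=\sum_l v_l^2$, then $\sigma_i\sigma_j=\sum_{k,l}(u_kv_l)^2$. So the only issue is the generator part $g_ig_j$.

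In general $g_i$ and $g_j$ are themselves square-free products, say $g_i=g^{e}$ and $g_j=g^{e'}$. Then
\[
g^{e}g^{e'} = \Bigl(\prod_{t:\,e_t=e'_t=1} g_t\Bigr)^{2}\, g^{e\oplus e'},
\]
where $e\oplus e'$ is the componentwise sum mod $2$, which lies in $\{0,1\}^s$. Hence
\[
p_ip_j = \Bigl(\sigma_i\sigma_j\prod_{t:\,e_t=e'_t=1} g_t^{2}\Bigr)\, g^{e\oplus e'}.
\]
The bracketed multiplier is a sum of squares whose decomposition is obtained explicitly by multiplying each $u_kv_l$ by $\prod g_t$. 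This handles exactly the ``square terms not included in the generators'' mentioned before the statement: any repeated generator is absorbed into the sums-of-squares multiplier. When $g_i=g_j$ is a single generator, this specializes to $p_ip_j=(\sigma_i\sigma_j g_i^2)\cdot 1$, a pure sum of squares.

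For general elements $p=\sum_e \sigma_e g^e$ and $q=\sum_{e'}\tau_{e'}g^{e'}$, bilinearity gives $pq$ as a finite sum of such pairwise products. Collecting terms with equal index $e\oplus e'$ and adding their sums-of-squares multipliers then produces the certificate in $\PO(G)$.

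There is no real obstacle. The only care needed is bookkeeping: the exponent vector must be reduced mod $2$ so that the result uses only the allowed products $g^e$, $e\in\{0,1\}^s$. Computability requires nothing beyond polynomial multiplication over $\mathbb{A}$.
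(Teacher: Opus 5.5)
Your proof is correct, but it takes a different route from the paper's. The paper works directly with the polynomial. It takes a square-free factorization of the product and collects the even-power factors into $s$. It then moves back into the generator part, $t_1=\tilde s\,t$, those squares $\tilde s$ whose removal would change the semialgebraic set, and keeps $\sigma_i\sigma_j s_1$ with $s_1=s/\tilde s$ as the multiplier.

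You instead do the bookkeeping on exponent vectors, via $g^{e}g^{e'}=\bigl(\prod_{t:\,e_t=e'_t=1}g_t\bigr)^2 g^{e\oplus e'}$. This needs no factorization and no semialgebraic-set tests. It never uses that $G$ consists of natural generators, so it holds for any finite $G$. It also never has to decide which repeated linear factors ``belong'' to the generator part.

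That last point is not cosmetic. Take $S=[-1,1]\cup\{2\}\cup[3,4]$ with $g_2=(X-1)(X-2)$ and $g_3=(X-2)(X-3)$. The product $(g_2g_3)\cdot g_2=(X-1)^2(X-2)^3(X-3)$ has even part $(X-1)^2$, and removing it does not change the semialgebraic set. The remainder $(X-2)^3(X-3)$ is not of the form $g^e$, since $X-2$ occurs in any $g^e$ at most twice. A literal even/odd split therefore misfires here, whereas your formula returns $g_2^2\cdot g_3$ at once.

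In return, the paper's approach starts from the polynomial $g_ig_j$ alone, without needing the index vectors. In the intended use, the subsequent corollary with terms already indexed as $\sigma_I g_I$, those vectors are available, so your version is the cleaner and more robust one.
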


\begin{proof}
  Let $\prod_{i=1}^{k}{a_i^{i}}$ be a square free factorization of $p_i
  p_j$. Let $s := \prod_{\substack{i=1 \\ i \in 2\mathbb{N}}}^{k}{a_i^{i}}$ and
  $t = \frac{p_i p_j}{s}$. Since $p_i$ and $p_j$ are generators of $\PO(G)$,
  we have that $s$ is of the form $\prod_{i}^{}{(X - a_i)^{2 n_i}}$. The
 generators $p$ in $\PO(G)$ only contain square terms when these do not modify
  the semialgebraic set, i.e., if $p = (X - a)^{2} q$, then
  $\semialgebraic(p) \neq \semialgebraic(q)$. Thus, we take square
  terms from $s$ that have this property and multiply them with $t$ in order
  construct a generator in $\PO(G)$. Let $\tilde{s}$ be the square terms in
  $s$ satisfying this property and let
  $s_1 := \frac{s}{\tilde{s}}, t_1 := \tilde{s} t$. Then a certificate of
  $p_i p_j$ in $\PO(G)$ is $(\sigma_i \sigma_j s_1) t_1$ where
  $\sigma_i \sigma_j s_1 \in \sum{\aX}^2$ and $t_1$ is a generator in
  $\PO(G)$.
\end{proof}

\begin{corollary}
  \th\label{sec:addit-constr-from-corr}
  Let $G$ be the set of natural generators of some set of generators.
  Let $p = \sum_{I \subseteq \set{1, \dots, s}}^{}{\sigma_I g_I}$, and $q =
  \sum_{I \subseteq \set{1, \dots, s}}^{}{\sigma^{'}_I g_I} \in \PO(G)$ with
  $\sigma_I, \sigma^{'}_I \in \sum{\aX}^2$. The certificates of $p q
  \in \PO(G)$ are computable.
\end{corollary}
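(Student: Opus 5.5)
The plan is to expand the product $pq$ by bilinearity and then reduce each resulting term to the form treated in \th\ref{sec:addit-constr-from}. Write $p = \sum_{I}\sigma_I g_I$ and $q = \sum_{J}\sigma'_J g_J$, where $g_I := \prod_{i\in I} g_i$ and $g_\emptyset := 1$. Then
\begin{equation*}
pq = \sum_{I, J \subseteq \set{1,\dots,s}} (\sigma_I \sigma'_J)\, g_I g_J ,
\end{equation*}
which is a finite sum of $2^{s}\cdot 2^{s}$ terms. Each term has the shape $p_I p_J$ with $p_I = \sigma_I g_I$ and $p_J = \sigma'_J g_J$. Here $g_I$ and $g_J$ are generators of the preordering $\PO(G)$, meaning products of distinct natural generators.

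First, for each pair $(I,J)$ I apply the construction in the proof of \th\ref{sec:addit-constr-from} to $p_I p_J$. This yields a sum of squares $\sigma_I\sigma'_J s_1$ and a generator $t_1$ of $\PO(G)$ with $p_I p_J = (\sigma_I\sigma'_J s_1)\, t_1$. Concretely, for $i \in I\cap J$ the repeated factor $g_i^2$ is a square and is absorbed into the multiplier. The remaining factor is $g_{I\triangle J}$, which is a product of distinct natural generators and hence a legitimate index $K = I \triangle J$. Products of sums of squares are sums of squares, since $(\sum a_k^2)(\sum b_l^2) = \sum (a_k b_l)^2$, so the multiplier stays in $\sum\aX^2$.

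Second, I regroup the terms by the resulting index. For each $K$ I set
\begin{equation*}
\rho_K := \sum_{I \triangle J = K} \sigma_I \sigma'_J \prod_{i\in I\cap J} g_i^2 ,
\end{equation*}
which is a sum of squares because sums of squares are closed under addition. Then $pq = \sum_K \rho_K g_K$ is a certificate in $\PO(G)$. Every operation involved is polynomial arithmetic over $\mathbb{A}$, together with the square-free factorization used in \th\ref{sec:addit-constr-from}, so the certificate is computable.

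The main subtle point is the bookkeeping in the first step: making sure that every square factor removed from $g_I g_J$ is placed into the sum-of-squares multiplier, and that what remains is exactly a product of distinct generators. With the subset indexing this reduces to the identity $g_I g_J = g_{I\triangle J}\prod_{i\in I\cap J} g_i^2$, so I would argue it directly rather than through factorization.
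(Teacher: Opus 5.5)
Your proposal is correct and follows the same route as the paper. The paper's proof simply distributes the product and invokes the preceding Proposition on each term $\sigma_I\sigma'_J\, g_I g_J$; your identity $g_I g_J = g_{I\triangle J}\prod_{i\in I\cap J} g_i^2$ and the regrouping into the multipliers $\rho_K$ spell out that step more cleanly than the square-free factorization used in the Proposition.
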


\begin{proof}
  We distribute the terms in the product of $p_i p_j$ and use
  \th\ref{sec:addit-constr-from} to compute the certificates.
\end{proof}

 \section{Additional constructions from Section \ref{sec:cert-natur-gener-1982}}
\label{appendix-sec-4}

\begin{lemma} \th\label{sec:splitting-step-7} Let
  $p := \alpha(X-a)(X-b)^{m} + \beta(X-c)^{m}(X-d)$ with
  $m \in 2\mathbb{N} + 1$, $b, c \in \mathbb{A}$ within $(a, d)$, and
  $\alpha := \frac{1}{(d-a)(d-b)^{m}}, \beta := \frac{1}{(a-c)^{m}(a-d)}$. Then
  $p$ is of the form $q \cdot (X-a)(X-d) + 1$ for some $q \in \sum{\aX}^2$.
\end{lemma}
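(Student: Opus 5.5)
The plan is to first locate the constant term of $p$ by evaluating at $a$ and $d$, and then to split $p-1$ into two pieces that each factor through $(X-a)(X-d)$ with a nonnegative cofactor.

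\textbf{Step 1 (values at the endpoints and signs).} Evaluate $p$ at $a$ and at $d$. We get
$p(a)=\beta(a-c)^m(a-d)=1$ and $p(d)=\alpha(d-a)(d-b)^m=1$.
Hence $(X-a)(X-d)$ divides $p-1$. We also record the signs of the constants. Since $a<b,c<d$, we have $d-a>0$ and $d-b>0$, so $\alpha>0$. Since $m$ is odd, $(a-c)^m<0$; together with $a-d<0$ this gives $\beta>0$.

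\textbf{Step 2 (explicit quotient).} Rather than argue abstractly about $q=(p-1)/((X-a)(X-d))$, I would compute $q$ explicitly. The key observation is the identity
\[
\frac{X-a}{d-a}+\frac{X-d}{a-d}=1 .
\]
Subtracting it from $p$ lets us treat the two summands of $p$ separately. Using $\alpha(d-b)^m=1/(d-a)$, the first piece is
\[
\alpha(X-a)(X-b)^m-\frac{X-a}{d-a}=\alpha(X-a)\bigl((X-b)^m-(d-b)^m\bigr)=\alpha(X-a)(X-d)\,h_1,
\]
where $h_1=\sum_{i=0}^{m-1}(X-b)^i(d-b)^{m-1-i}$. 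Symmetrically, using $\beta(a-c)^m=1/(a-d)$, the second piece is
\[
\beta(X-c)^m(X-d)-\frac{X-d}{a-d}=\beta(X-a)(X-d)\,h_2,
\]
where $h_2=\sum_{i=0}^{m-1}(X-c)^i(a-c)^{m-1-i}$. Adding the two pieces gives $p=1+(\alpha h_1+\beta h_2)(X-a)(X-d)$. So $q=\alpha h_1+\beta h_2$.

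\textbf{Step 3 (nonnegativity of $q$).} Each $h_j$ has the form $H(y,e)=\sum_{i=0}^{m-1}y^ie^{m-1-i}$, obtained by substituting $y=X-b,\ e=d-b$ for $h_1$ and $y=X-c,\ e=a-c$ for $h_2$. I claim $H(y,e)\ge 0$ for all real $y,e$ when $m$ is odd.
\begin{itemize}
\item If $y\neq e$, then $H(y,e)=(y^m-e^m)/(y-e)$. This is a difference quotient of the strictly increasing function $t\mapsto t^m$, so it is positive.
\item If $y=e$, then $H(y,e)=m\,e^{m-1}\ge 0$, because $m-1$ is even.
\end{itemize}
Since $\alpha,\beta>0$, it follows that $q\ge 0$ on $\mathbb{R}$.

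\textbf{Step 4 (conclusion).} A univariate polynomial that is nonnegative on $\mathbb{R}$ is a sum of squares. The coefficients of $q$ lie in $\mathbb{A}$, so an explicit decomposition in $\sum\aX^2$ is computable, for instance by factoring into real and complex-conjugate root pairs, exactly as in the base case of \th\ref{sec:background}. Hence $q\in\sum\aX^2$.

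The only step needing care is the choice of the splitting identity in Step 2. Once $p-1$ is written as the sum of two geometric-sum terms, the parity of $m$ settles nonnegativity immediately.
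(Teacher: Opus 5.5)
Your proof is correct, and it takes a different route from the paper's. The paper also uses $p(a)=p(d)=1$ to get the remainder $1$. It then shows that $q$ is a sum of squares only indirectly, by studying the graph of $p$. It proves $p>1$ on $(-\infty,a)\cup(d,\infty)$ and $p<1$ on $(a,d)$. This needs a case split on whether $b=c$, $b<c$ or $c<b$, and for $b<c$ an extra argument that $dp/dX$ is increasing on $(b,c)$. From this sign pattern the paper infers that $q$ has no real zeros and is therefore a sum of squares.

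You instead write the quotient in closed form, $q=\alpha h_1+\beta h_2$. You get it from the identity $\frac{X-a}{d-a}+\frac{X-d}{a-d}=1$ and the factorization $y^m-e^m=(y-e)\sum_{i=0}^{m-1}y^ie^{m-1-i}$. Positivity of each geometric sum then follows from the fact that $t\mapsto t^m$ is increasing for odd $m$.

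Your route gives several things the paper's does not:
\begin{itemize}
\item There is no case analysis.
\item You actually get $h_1,h_2>0$ on all of $\mathbb{R}$, because at $y=e$ the value $m e^{m-1}$ is nonzero when $d-b\neq 0$ and $a-c\neq 0$. This is the ``no real roots'' claim the paper states. The paper's sign argument only gives $q>0$ away from $a$ and $d$, though that is enough for nonnegativity.
\item The explicit formula for $q$ can be handed straight to an SOS decomposition step in the algorithm.
\item Only $\alpha>0$ and $\beta>0$ are used. So the hypothesis $b,c\in(a,d)$ can be relaxed to $a<c$ and $b<d$ (with $a<d$), with no condition on how $b$ and $c$ compare.
\end{itemize}
The paper's route yields nothing beyond this. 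The sign behaviour of $p-1$ near $a$ and $d$ follows at once from $q>0$ and the sign of $(X-a)(X-d)$.
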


\begin{proof}
  Notice that $p(a) = p(d) = 1$. Using Euclid's division algorithm, the residue
  of $p$ divided by $(X-a)(X-d)$ is at most of degree 1, i.e.,
  $p = q(X-a)(X-d) + r$ with $\deg(r) \leq 1$. Let $r = \gamma_1 X +
  \gamma_2$. If $\gamma_1 = 0$ we are done as $\gamma_2 = 1$. Otherwise,
  $p(a) = r(a) = 1$, similarly $p(d) = r(d) = 1$. Hence, $r(a) = r(d)$, so
  $\gamma_1 a + \gamma_2 = \gamma_1 d + \gamma_2$, which implies that $a =
  d$. This contradicts the assumption that $a < d$.

  Now, we will show $q \in \sum{\aX}^2$ to conclude the proof. We will
  show that the only real roots of the polynomial $p - 1$ are $a, d$
  with multiplicity 1, which shows that $q$ must be a product of complex
  conjugate roots, which means that $q$ is a sum of squares.

  Suppose $q$ has a real root $e$ different from $a$ and $d$. This implies
 $p(e) = 1$. We will prove this leads to a contradiction by showing that $p$
  evaluates to a value greater than $1$ over $(-\infty, a)$ and $(d, \infty)$
  and smaller than $1$ over $(a, d)$.

  First, write $p$ as $p = \frac{p_1}{p_1(d)} + \frac{p_2}{p_2(a)}$ where
  $p_1 = (X-a)(X-b)^{m}$ and $p_2 = (X-c)^m(X-d)$. For the interval $x \in (d,
  \infty)$, we see that $\frac{p_1(x)}{p_1(d)}$ is strictly greater than $1$ and
  $\frac{p_2(x)}{p_2(a)}$ is non-negative; hence $p(x)$ is strictly greater than
  $1$. A similar argument holds for the interval $(-\infty, a)$.

  Now, let us consider the case where $x \in (a, d)$. If $b = c$ we have
 $p = \frac{p_2(a) + p_1(d)}{p_2(a) p_1(d)}(X-b)^{m}\left(X - r\right)$ where
  $r = \frac{a p_2(a) + d p_1(d)}{p_2(a) + p_1(d)}$. We can see that $p$ has at
  most two real zeros which belong to the interval $(a, d)$ as $a < r <
  d$. Since $p_1(d), p_2(a)$ are positive, the leading coefficient of $p$ is
  positive. The minimum value $p^{*}$ of $p$ is attained within
 $\left(a, d\right)$; this implies that $p$ strictly decreases in the
 interval $(a, p^{*})$ and increases strictly in the interval $(p^{*},
  d)$. As $p(a) = p(d) = 1$, we conclude that $p$ reaches a value less than
  $1$ over $(a, d)$.

  Now, let us consider the case where $b < c$. We can check that
  $p(b) = \frac{b-d}{a-d} (\frac{b-c}{a-c})^{m} < 1$ and
  $p(d) = \frac{a-c}{a-d} (\frac{b-c}{b-d})^{m} < 1$. To finish this case, we
  analyze the intervals $(a, b), (b, c)$ and $(c, d)$. Let us consider the
  interval $(a, b)$. The polynomial $\frac{p_1}{p_1(d)}$ evaluates to negative
  values over $(a, b)$ and the polynomial $\frac{p_2}{p_2(a)}$ is strictly
  decreasing over $(a, b)$ with maximum value $1$; thus $p$ evaluates to a value
  less than $1$ over the interval $(a, b)$; a similar reasoning handles the
  interval $(c, d)$.

  Now, we address the interval $(b, c)$. First, we notice that
  \vspace{-5mm}

  \begin{equation}
    \label{sec:addit-constr-from-1}
    \begin{split}
      \frac{dp}{dX}
      &= \frac{1}{p_1(d)}((X-b)^{m} + m (X-a)(X-b)^{m-1})\\
      &+
        \frac{1}{p_2(a)}((X-c)^{m} + m (X-d)(X-c)^{m-1})\\
      &= \frac{(X - b)^{m-1}}{(d-a)(d-b)^{m}}((m+1)X - (b + m a))\\
      &+
        \frac{(X-c)^{m-1}}{(a-d)(a-c)^{m}}((m+1)X - (c + m d))\\
    \end{split}
  \end{equation}

  We want to show $\frac{dp}{dX}$ is strictly increasing over $(b, c)$.  The
  polynomial $q_1 := \frac{(X - b)^{m-1}}{(d-a)(d-b)^{m}}((m+1)X - (b + m a))$
  has at most two real roots, $b$ and $\frac{b + ma}{m+1}$. The last is
 contained in the interval $[a, b]$, which means that the polynomial $q_1$ increases
 strictly for any point after $b$. Similarly, we can show that the
  polynomial $\frac{(X-c)^{m-1}}{(a-d)(a-c)^{m}}((m+1)X - (c + m d))$ is
  strictly increasing for any point before $c$. Hence, from equation
  \eqref{sec:addit-constr-from-1}, we see that $\frac{dp}{dX}$ is strictly
  increasing over $(b, c)$.

  We have
  $\frac{dp}{dX}(b) = \frac{1}{b-c} \frac{m(b-d) + (b - c)}{a-d}
  (\frac{b-c}{a-c})^{m} < 0$ as $b < c$. Similarly,
  $\frac{dp}{dX}(c) = \frac{1}{c-b} \frac{m(a-c) + (b - c)}{a-d}
  (\frac{c-b}{d-b})^{m} > 0$. By the Intermediate Value theorem, we have that
  $\frac{dp}{dX}$ has a zero $\eta$ in $(b, c)$ and it is unique since
  $\frac{dp}{dX}$ is strictly increasing over $(b, c)$. Furthermore, $\eta$
  corresponds to the global minimum value of $p$ over $\mathbb{R}$. Hence
  $p(x) < 1$ for $x \in (b, c)$, as otherwise there would be a local maximum
  value of $p$ in $(b, c)$, which is not the case.

  Finally, let us consider the case where $c < b$. For the interval $(a, c)$,
  $\frac{p_1}{p_1(d)}$ is less than $1$ and $\frac{p_2}{p_2(a)}$
  is always non-negative over $(a, c)$, hence $p$ is less than $1$ over $(a,
  c)$. A similar argument handles the interval $(b, d)$. For the interval $[c,
  b]$ we notice that both $\frac{p_1}{p_1(d)}$ and $\frac{p_2}{p_2(a)}$ are
  non-positive, so $p$ is less than $1$ over $[b, d]$. Therefore, $p$ is less
  than $1$ over $(a, d)$.

  Therefore, $q$ has no real roots and is a product of irreducible quadratics
  of the form $q = \prod_{j}^{}{((X - a_j)^2 + b_j^2)^{e_j}}$. Since the leading
  coefficient of $p$ is positive, we see that $q$ is a sum of squares.
\end{proof}

\begin{lemmar}{\ref{sec:splitting-step-5}}
  Let $a_0, b_i, c_{i_1}, c_{i_2}, a_{i+1}, b_k \in \mathbb{A}$ such that
  $a_0 \leq b_i < c_{i_2} < c_{i_1} < a_{i+1} \leq b_k$ and $m_{i_1}, m_{i_2}$
  odd numbers. We have
  $(X - b_i)(X - a_{i+1}) \in Q := \quadmod(X-a_0, (X - b_i)(X -
  c_{i_1})^{m_{i_1}}, (X - c_{i_2})^{m_{i_2}}(X- a_{i+1}), -(X - b_k))$ and its
  certificates in $Q$ are computable.
\end{lemmar}

\begin{proof}
  Let $g_{\mathrm{nat}}$ denote the natural generator
  $(X-b_i)(X-a_{i+1}) \in \Nat(S)$ and let
  $G_{\mathrm{fix}} := \set{X-a_0, (X - b_i)(X - c_{i_1})^{m_{i_1}}, (X -
    c_{i_2})^{m_{i_2}}(X - a_{i+1}), -(X - b_k)}$. The following is a high-level
  structure of this proof.

  (i) Find a polynomial $q \in \quadmod(G_{\mathrm{fix}})$ of the form
  $q = g_{\mathrm{nat}}(\alpha \cdot g_{\mathrm{nat}} + 1)$ with
  $\alpha \in \sum{\aX}^2$.

  (ii) As $g_{\mathrm{nat}}$ is non-negative over
  $\semialgebraic(G_{\mathrm{nat}})$, we use the algorithm $\algBasicLemma$ with
  inputs $g_{\mathrm{nat}}, \alpha \cdot g_{\mathrm{nat}} + 1$ and
  $G_{\mathrm{fix}}$, compute strictly positive polynomials $\sigma, \tau$ over
  $\semialgebraic(G_{\mathrm{fix}})$ such that
  $1 = \sigma g_{\mathrm{nat}} + \tau (\alpha \cdot g_{\mathrm{nat}} + 1)$.

  (iii) Compute certificates for $\sigma, \tau$, and $\sigma \tau$ in
  $\quadmod(G_{\mathrm{fix}})$ using the algorithm $\algPutinar$.

  (iv) Using the previous two steps, compute a certificate for
  $\tau g_{\mathrm{nat}}$ and $g_{\mathrm{nat}}$ in
  $\quadmod(G_{\mathrm{fix}})$.

  Consider
  \begin{equation}
    \label{sec:splitting-step-8}
    \begin{split}
      q_1 &:= \alpha_1 g_{\mathrm{nat}}^2 \cdot (X - a_0) \\
          &+ \alpha_2(X-a_{i+1})^2 \cdot (X - b_i)(X-c_{i_1})^{m_{i_1}}\\
          &+ \alpha_3(X-b_i)^2 \cdot (X - a_{i+1})(X-c_{i_2})^{m_{i_2}} \\
          &+ \alpha_1 g_{\mathrm{nat}}^2 \cdot (-(X - b_k))\\
    \end{split}
  \end{equation}

  {\noindent with $\alpha_1, \alpha_2, \alpha_3 \in \mathbb{A}^{+}$, $q$ in
    step 1 above is constructed as follows.}
  We can assume $m_{j_1} = m_{j_2}$ as otherwise, $\alpha_2$ and $\alpha_3$ are
  replaced by
  $\alpha_2(X-c_{i_1})^{\max(m_{j_1}, m_{j_2}) - \min(m_{j_1}, m_{j_2})}$ and
  $\alpha_3(X-c_{i_2})^{\max(m_{j_1}, m_{j_2}) - \min(m_{j_1}, m_{j_2})}$,
  respectively, since $m_{j_1}, m_{j_2}$ are odd numbers, $q_1$ can be rewritten
  as

  \begin{equation*}
    \label{sec:merging-step}
    \begin{split}
      q_1 &=
      g_{\mathrm{nat}}(\alpha_1 \cdot g_{\mathrm{nat}} ((X-a_0) +
        (-(X-b_k)))\\
      &+\alpha_2 \cdot (X-a_{i+1})(X-c_{i_1})^{m_{i_2}}
        +\alpha_3 \cdot (X-b_i)(X-c_{i_2})^{m_{i_2}})\\
      &= g_{\mathrm{nat}}(\alpha_1(b_k - a_0) \cdot g_{\mathrm{nat}}\\
      &+\alpha_2 \cdot (X-a_{i+1})(X-c_{i_1})^{m_{i_2}}
        +\alpha_3 \cdot (X-b_i)(X-c_{i_2})^{m_{i_2}})\\
    \end{split}
  \end{equation*}

  Make $\alpha_2$ to be $\frac{1}{(b_i-a_{i+1})(b_i-c_{i_1})^{m_{i_2}}}$ and
  $\alpha_3$ to be $\frac{1}{(a_{i+1}-b_i)(a_{i+1}-c_{i_2})^{m_{i_2}}}$, which
  are positive constants, since $b_i < c_{i_1} \leq c_{i_2} < a_{i+1}$. By
  \th\ref{sec:splitting-step-7} from below,
  $\alpha_2(X-b)(X-c_{j_2})^{m_{j_1}} +\alpha_3(X-a)(X-c_{j_1})^{m_{j_1}}$ is of
  the form $q_2 \cdot (X-a)(X-b) + 1$ for some $q_2 \in \sum{\aX}^2$. Hence,
  $q_1$ is simplified to
  \vspace{-4mm}

  \begin{equation}
    \label{sec:splitting-step-11}
    \begin{split}
      q_1
      &=g_{\mathrm{nat}}(\alpha_1(b_k - a_0) \cdot g_{\mathrm{nat}}
      + q_2 \cdot g_{\mathrm{nat}} + 1)\\
      &=g_{\mathrm{nat}}((\alpha_1(b_k - a_0) + q_2) \cdot g_{\mathrm{nat}} + 1)
    \end{split}
  \end{equation}

  \noindent
  The resulting polynomial $q_1$ is therefore the required polynomial $q$ above.

  Let $q_3 := \alpha_1(b_k - a_0) + q_2 \in \sum{\aX}^2$. Since
  $g_{\mathrm{nat}}$ and $q_3 \cdot g_{\mathrm{nat}} + 1$ are relatively prime,
  we use $\algBasicLemma$ with inputs
  $g_{\mathrm{nat}}, q_3 \cdot g_{\mathrm{nat}} + 1$ and $G_{\mathrm{fix}}$ to
  find $\sigma, \tau$ such that both are strictly positive over
  $\semialgebraic(G_{\mathrm{fix}})$ and
  \begin{equation}
    \label{sec:merging-step-4}
    \begin{split}
      1 = \sigma g_{\mathrm{nat}} + \tau (q_3 \cdot g_{\mathrm{nat}} + 1)
    \end{split}
  \end{equation}

 Additionally, $\sigma \tau$ is also strictly positive over
  $\semialgebraic(G_{\mathrm{fix}})$. The ertificates for $\sigma, \tau$ and
 $\sigma \tau$ in $\quadmod(G_{\mathrm{fix}})$ are computed using the
  algorithm $\algPutinar$ from \cite{weifeng2025}.

  We multiply $\tau g_{\mathrm{nat}}$ on both sides of equation
  \eqref{sec:merging-step-4} to obtain
  $\tau g_{\mathrm{nat}} = g_{\mathrm{nat}}^2 \cdot \sigma \tau + \tau^2 \cdot
  q$, for which the certificates are given explicitly. Multiplying both sides of
  equation \eqref{sec:merging-step-4} by $g_{\mathrm{nat}}$ we obtain
  $g_{\mathrm{nat}} = g_{\mathrm{nat}}^2 \sigma + g_{\mathrm{nat}}^{2} q_3 \cdot
  \tau + \tau g_{\mathrm{nat}}$, thus the certificates for $g_{\mathrm{nat}}$
  are given explicitly in $\quadmod(G_{\mathrm{fix}})$.
\end{proof}

\begin{lemmar}{\ref{sec:cert-quadr-fact}}
  Let $a_0, b_i, c_{i_1}, c_{i_2}, a_{i+1}, b_k \in \mathbb{A}$ such that
  $a_0 \leq b_i < c_{i_1} \leq c_{i_2} < a_{i+1} \leq b_k$ and
  $m_{i_1}, m_{i_2}$ odd numbers. We have
  $(X - b_i)(X - a_{i+1}) \in Q := \quadmod(X-a_0, (X - b_i)(X -
  c_{i_1})^{m_{i_1}}, (X - c_{i_2})^{m_{i_2}}(X- a_{i+1}), -(X - b_k))$ and its
  certificates in $Q$ are computable.
\end{lemmar}

\begin{proof}
  Let
  $G_{\mathrm{fix}} := \{X-a_0, (X - b_i)(X - c_{i_1})^{m_{i_1}}, (X -
  c_{i_2})^{m_{i_2}}(X- a_{i+1}), -(X - b_k)\}$.  We use the results in Section
  \ref{cert_prods} to find certificates of the product of
  $(X - b_i)(X - c_{i_1})$ and $(X - c_{i_2})(X- a_{i+1})$ in
  $\quadmod(X-a_0, (X - b_i)(X - c_{i_1}), (X - c_{i_2})(X- a_{i+1}), -(X -
  b_k))$, i.e.,
  $(X - b_i)(X - c_{i_1})(X - c_{i_2})(X- a_{i+1}) = \sigma_0 + \sigma_1 \cdot
  (X - a_0) + \sigma_2 \cdot (X - b_i)(X - c_{i_1}) + \sigma_3 \cdot (X -
  c_{i_2})(X- a_{i+1}) + \sigma_4 \cdot (-(X - b_k))$. Since $m_1, m_2$ are odd
  numbers, we multiply the sums of squares
  $(X - c_{i_1})^{m_{i_1} - 1} (X - c_{i_2})^{m_{i_2} - 1}$ to the last
  equation; thus we obtain a certificate of
  $p := (X - b_i)(X - c_{i_1})^{m_{i_1}}(X - c_{i_2})^{m_{i_2}}(X- a_{i+1})$ in
  $\quadmod(G_{\mathrm{fix}})$.

  Since we have certificates for the last expression, we use the algorithm
  $\algRemoveStrictPosBet$ to remove the interval $[c_{i_1}, c_{i_2}]$ from the
  semialgebraic set of $p$. Finally, we use the result in Section
  \ref{sec:cert-non-neg-cert} to obtain a certificates of $(X - b_i)(X - a_{i+1})$
  in $\quadmod(G_{\mathrm{fix}})$.
\end{proof}

\begin{proposition}
  \th\label{sec:addit-constr-from-3}

  Let $f$ be a polynomial such that
  $\semialgebraic(f) = \bigcup_{i=0}^{k}{[c_i, d_i]}$ is bounded, divisible by
 $(X - d_j)^{m}(X - c_{j+1})^{n}$, where $m$ is odd and $n$ is an even
  number. Then $(X - d_j)^{m}(X - c_{j+1})^{n + 1}$ belongs to
  $\quadmod(f)$. Furthermore, the certificates of
  $(X - d_j)^{m}(X - c_{j+1})^{n+1}$ in $\quadmod(f)$ are computable.
\end{proposition}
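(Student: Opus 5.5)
The plan is to reduce the statement to the splitting construction of Section \ref{sec:cert-non-neg-cert}: write $f = t_1 t_2$ with $t_1 := (X-d_j)^{m}(X-c_{j+1})^{n+1}$, and apply the SOS Basic Lemma (\th\ref{sec:results--1}, algorithm $\algSOSBasicLemma$) to $t_1$ and $t_2$.

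First I would show that $t_1$ divides $f$. Since $d_j$ and $c_{j+1}$ are consecutive endpoints of $\semialgebraic(f)$, $f<0$ on $(d_j,c_{j+1})$ while $f\ge 0$ just to the right of $c_{j+1}$. So $f$ changes sign at $c_{j+1}$, which forces $\ord_{c_{j+1}}(f)$ to be odd. As $(X-c_{j+1})^n \mid f$ with $n$ even, we get $\ord_{c_{j+1}}(f)\ge n+1$. Hence $t_1 \mid f$, and $t_2 := f/t_1$ is a polynomial.

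Next, $t_1$ has odd order at both $d_j$ and $c_{j+1}$. Its sign is therefore that of $(X-d_j)(X-c_{j+1})$, so $t_1\ge 0$ outside $(d_j,c_{j+1})$, and in particular on $\semialgebraic(f)$. Where $t_1>0$, the sign of $t_2$ agrees with that of $f$, so $t_2\ge 0$ on $\semialgebraic(f)\setminus\{d_j,c_{j+1}\}$. The only possible trouble is that $t_1$ and $t_2$ need not be coprime, since $t_2$ may still contain powers of $X-d_j$ or $X-c_{j+1}$. I would handle this by writing $t_2 = (X-d_j)^{e_1}(X-c_{j+1})^{e_2}\, t_2'$ with $t_2'$ coprime to $t_1$. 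Any even part of $e_1,e_2$ is pulled out as a square $s$. A leftover odd factor would make $f$ have even order at $d_j$ or $c_{j+1}$; this contradicts the sign change at $d_j$, and at $c_{j+1}$ it is excluded by the parity argument above. Hence $f = s\, t_1 t_2'$ with $s\in\sum\aX^2$, and $t_1, t_2'$ are relatively prime and non-negative on $\semialgebraic(t_1t_2')$. This set coincides with $\semialgebraic(f)$ up to finitely many points, so it is compact, because $\semialgebraic(f)$ is bounded by hypothesis.

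With this in place, $\algSOSBasicLemma(t_1,t_2')$ yields $\sigma,\tau\in\sum\aX^2$ with $1=\sigma t_1+\tau t_2'$. Multiplying by $s\,t_1$ gives $s\, t_1 = s\,\sigma t_1^2 + \tau\, f$. To remove the factor $s$ I would not multiply at all, but apply the lemma directly to $t_1$ and $s\, t_2'$; this is still a coprime pair as long as $s$ avoids the roots of $t_1$. The result is $t_1 = \sigma t_1^2 + \tau\cdot f$, which is an explicit certificate in $\quadmod(f)$, and every step (division, root isolation, $\algSOSBasicLemma$) is effective over $\mathbb{A}$. The main obstacle I expect is the coprimality bookkeeping in the previous paragraph. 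Specifically, I must make sure that no square factor of $f$ at $d_j$ or $c_{j+1}$ survives inside $t_2$; otherwise the Basic Lemma hypothesis of relative primality fails. If it does fail, I would absorb such squares into the multiplier of $f$ rather than into $t_2$.
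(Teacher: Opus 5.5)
\textbf{There is a genuine gap: your proof does not cover the case this proposition is for.} It fails at its first step. You claim that $f\ge 0$ just to the right of $c_{j+1}$, hence that $\ord_{c_{j+1}}(f)$ is odd and $t_1\mid f$. But the intervals $[c_i,d_i]$ may be degenerate; the paper allows $a_j\le b_j$, e.g.\ $[-1,1]\cup\{2\}\cup[3,4]$. The case where $c_{j+1}=d_{j+1}$ is an isolated point of $\semialgebraic(f)$ is exactly the one this proposition exists for. It is invoked in Section~\ref{sec:cert-natur-gener-1982} precisely when the root $c_{i_1}$ next to $b_i$ has \emph{even} multiplicity $m_{i_1}=n$; the odd case is sent to Section~\ref{sec:cert-non-neg-cert} instead.

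In that case $f<0$ on both sides of $c_{j+1}$ and $\ord_{c_{j+1}}(f)=n$. So $t_1=(X-d_j)^m(X-c_{j+1})^{n+1}$ does not divide $f$, and there is no splitting $f=t_1t_2$ to feed into $\algSOSBasicLemma$. For example, $f=-(X+1)X(X-1)^2$ has $\semialgebraic(f)=[-1,0]\cup\{1\}$. The target $X(X-1)^3$ is not a factor of $f$, yet $X(X-1)^3=2X^2(X-1)^2+f$. What you actually prove is only the odd case $\ord_{c_{j+1}}(f)=n+1$, which is the plain splitting of Section~\ref{sec:cert-non-neg-cert}.

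\textbf{Your coprimality repair also does not work.} Your $s$ is built from powers of $X-d_j$ and $X-c_{j+1}$, so it never avoids the roots of $t_1$ unless $s=1$. Absorbing $s$ into the multiplier only certifies $s\,t_1$, not $t_1$. In fact the statement must be read with $m$ and $n$ as exact orders. For instance, $X(X-1)\notin\quadmod\bigl(-(X+1)X(X-1)^3(X-2)\bigr)$, because $X(X-1)-\sigma_1 f$ has order exactly $1$ at $X=1$.

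\textbf{The missing idea.} In the even case the extra factor $X-c_{j+1}$ must be manufactured, not split off from $f$. The paper divides out the square
$s=(X-d_j)^{m-1}\prod(X-c_i)^{\ord_{c_i}(f)}$,
the product taken over the isolated points $c_{j+1},c_{j+2},\dots$ up to the first odd-order endpoint $c_l$. If such a $c_l$ exists, $s$ also includes $(X-c_l)^{\ord_{c_l}(f)-1}$. Then $\hat f=f/s$ has a simple root at $d_j$, $c_{j+1}\notin\semialgebraic(\hat f)$, and $(X-d_j)(X-c_{j+1})\ge 0$ on $\semialgebraic(\hat f)$.

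A certificate of $(X-d_j)(X-c_{j+1})$ in $\quadmod(\hat f)$ is then obtained in one of two ways:
\begin{itemize}
\item if no odd-order endpoint lies to the right, as the product of $-(X-d_j)$ (split off from $\hat f$) and the strictly positive $-(X-c_{j+1})$;
\item otherwise, from the quadratic factor $(X-d_j)(X-c_l)$ via Augustin's lemma, since $d_j\le c_{j+1}\le c_l$.
\end{itemize}
Multiplying this certificate by $s$ gives a certificate in $\quadmod(f)$ of $t_1$ times a square coprime to $t_1$. That square is then removed using Section~\ref{sec:cert-non-neg-cert}.

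In the example above, $s=(X-1)^2$ and $\hat f=-(X+1)X$. We have $-X=X^2+\hat f$ and $1-X=1+X^2+\hat f$. Their product is $X(X-1)$, and multiplying it by $s$ gives $X(X-1)^3\in\quadmod(f)$.
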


\begin{proof} First, we notice that $\epsilon_{d_j}(f) = -1$ as $d_j$ is a right
  end point. Let
  $I = \{i \in \mathbb{N} \mid j + 1 < i \leq k, \ord_{c_i}(f) \text{ is odd }
  \}$ be a set of indices.

  If $I$ is empty, it means that $f$ only has end points of even multiplicity to
  the right of $c_{j+1}$. If so, we define the polynomial
  $s := (X - d_j)^{m-1}\prod_{i=j+1}^{k}{(X - c_i)^{\ord_{c_i}(f)}}$ and define
  $\hat{f} := \frac{f}{s}$. Notice that $s$ is a sums of squares so the
  semialgebraic set of $\hat{f}$ is $\semialgebraic(f)$ without the isolated
  points $c_{j+1}, \dots, c_k$. Additionally, the multiplicity of $(X - d_j)$ in
  $\hat{f}$ is 1. The polynomial $-(X - d_j)$ is non-negative over
  $\semialgebraic(\hat{f})$, therefore we compute its certificate using the
  procedure in Section \ref{sec:cert-non-neg-cert} in $\quadmod(\hat{f})$. The
  polynomial $-(X - c_{j+1})$ is strictly positive over $\quadmod(\hat{f})$, so
  we compute its certificate using the algorithm $\algPutinar$. Since
  $\quadmod(\hat{f})$ is a preorder, we can obtain a certificate of
  $(X - d_j)(X - c_{j+1})$ in $\quadmod(\hat{f})$ by rearranging the
  certificates of $-(X - d_j)$ and $-(X - c_{j+1})$. Let
  $(X - d_j)(X - c_{j+1}) = \sigma_0 + \sigma_1 \cdot \hat{f}$. Multiplying $s$
 by the previous equation, we obtain
  $(X - d_j)^{m}(X - c_{j+1})^{n+1} \prod_{i=j+2}^{k}{(X - c_i)}^{\ord_{c_i}(f)}
  = s \sigma_0 + \sigma_1 \cdot f$. Finally, we compute a certificate of
  $(X - d_j)^{m}(X - c_{j+1})^{n+1}$ in $\quadmod(f)$ using the procedure in
  Section \ref{sec:cert-non-neg-cert}.

  If $I$ is not empty, let $l \in I$ be the smallest one. In this case, we
  define
  $s := (X-d_j)^{m-1} \prod_{i=j+1}^{l-1}{(X - c_i)^{\ord_{c_i}(f)}}(X -
  c_l)^{\ord_{c_l}(f)-1}$. We compute a certificate of $(X - d_j)(X - c_l)$ in
  $\quadmod(\hat{f})$ using the procedure in Section
  \ref{sec:cert-non-neg-cert}. Then, using \th\ref{sec:cert-elem-ponats}, we
  compute the certificate of $(X - d_j)(X - c_{j+1})$ in $\quadmod((X - d_j)(X -
  c_l))$, lifting the certificate to $\quadmod(\hat{f})$ as we have a certificates
  of $(X - d_j)(X - c_l)$ in $\quadmod(\hat{f})$. We repeat the same procedure
  of multiplying $s$ from the case when $I$ is empty in order to compute
  certificates of $(X - d_j)^{m}(X - c_{j+1})^{n+1}$ in $\quadmod(f)$.
\end{proof}

\end{document}